\documentclass{article}
\usepackage{graphicx} \usepackage{paper}

\newcommand{\haim}[1]{\textcolor{red}{[Haim: #1]}}

\newcommand{\AAA}{\mathcal{A}}

\author{
Haim Kaplan\textsuperscript{$\dagger$,*} 
\and 
\hspace{-8px}Shay Sapir\textsuperscript{$\ddagger$,*} 
\and 
\hspace{-8px}Uri Stemmer\textsuperscript{$\dagger$,*}
}

\title{Load Balancing under Adaptive Bin Deletions}

\date{}

\theoremstyle{plain}

\newtheorem{remark}[definition]{Remark}  

\begin{document}

\maketitle

\begingroup
\renewcommand{\thefootnote}{} 
\footnotetext{
  \hspace{-7px}
  \textsuperscript{*}Google Research \quad
  \textsuperscript{$\dagger$}Tel Aviv University \quad
  \textsuperscript{$\ddagger$}Weizmann Institute of Science
}
\endgroup

\begin{abstract}
We analyze a balls-and-bins game against an adaptive adversary that sequentially deletes bins. Starting with $n$ balls distributed across $n$ bins, the adversary deletes a bin in each step, forcing the algorithm to redistribute its balls to surviving bins. We prove that after $n/2$ rounds, uniform random redistribution yields optimal $O(n)$ recourse and $O(\frac{\log n}{\log \log n})$ maximum load. Furthermore, we show that applying the ``power of two choices'' reduces the maximum load to $O(\log \log n)$ while maintaining linear recourse. 

We also consider a variation of this game where the balls from the deleted bin are partitioned evenly among $d \ll n$ random bins rather than being redistributed independently. We demonstrate that keeping the balls together ($d=1$), which gives small maximum load and recourse against an oblivious adversary, fails against an adaptive adversary. Nevertheless, we show that splitting the balls into just two groups ($d=2$) is sufficient to recover linear recourse and efficient load balancing in the adaptive setting.
\end{abstract}

\section{Introduction}

Balls-and-bins processes serve as a fundamental abstraction for studying load balancing problems, with diverse applications across numerous domains~\cite{AZAR199473,Karger_consistent97,RaabS98,MitzenmacherUpfal,Wieder07,flajolet2007hyperloglog,weinberger2009feature}. In the simplest  version of the problem, $n$ balls are assigned to $n$ bins. Here it is well established that distributing every ball independently and uniformly over the bins yields a maximum load of $O(\frac{\log n}{\log \log n})$ \cite{Gonnet81}. Furthermore, utilizing the ``power of two choices'', where balls are placed sequentially in the least loaded of two randomly chosen bins, dramatically reduces the maximum load to $O(\log \log n)$ \cite{AzarBKU99}.

Beyond these classical settings, the landscape of balls-and-bins problems is vast. Variations include settings with weighted balls~\cite{BERENBRINK_FHM08,PeresTW10}, heterogeneous bin capacities~\cite{BERENBRINK20142065,Wieder07}, and graph-based constraints where balls can probe only specific subsets of bins~\cite{Kenthapadi_Panigrahy06,Brighten08}.  
Berenbrink et al.~\cite{BCSV00_2choice_heavy,BCSV06_2choice_heavy-sicomp} extended the power of two choices from the {\em lightly-loaded} case studied by \cite{AzarBKU99}, where the number of balls $m$ is equal to the number of bins $n$, to the {\em heavily-loaded} case, where $m\gg n$. Remarkably, they showed that the additive gap between the maximum load and the average load remains the same as in the lightly-loaded case.
Los and Sauerwald \cite{LosS22} analyzed the power of two choices in an ``incomplete information'' setting, investigating scenarios where algorithms must rely on load estimates rather than exact values. V{\"o}cking \cite{vocking2003asymmetry} considered multiple-choice settings (rather than just two choices), where balls are placed in the least loaded of $d$ selected bins, demonstrating that a nonuniform selection of the $d$ locations leads to better load balancing.

Of particular relevance to our work are {\em dynamic} variations of balls-and-bins, where the goal is to maintain a balanced allocation over time as balls or bins arrive and depart. This line of work was initiated by \cite{AzarBKU99}, who considered a model where balls are added and deleted {\em at random}. They showed that their ``power of two choices'' extends to this dynamic setting, maintaining a maximum load of $O(\log \log n)$ in the lightly-loaded case. Subsequent works \cite{cole1998balls,vocking2003asymmetry,BansalK22} generalized this to the heavily-loaded case. Furthermore, these works allowed the sequence of operations to be determined by an {\em oblivious adversary}, that is, an adversary that fixes the full sequence of ball insertions and deletions before the process begins. Note that the random case studied by \cite{AzarBKU99} is a special instance of an oblivious adversary.

\paragraph*{Adaptive adversary.} Unlike an oblivious adversary, who commits to the sequence of operations before the game begins, an adaptive adversary chooses its actions sequentially, potentially basing its decisions on the entire history of the execution. Designing algorithms and data structures that cope with adaptive adversaries has been the focus of much recent work across various sub-fields of theoretical computer science, including streaming algorithms, dynamic graph algorithms, and learning theory
\cite{Ben-EliezerY20,BJWY22,HassidimKMMS22,max-flow-via2022,BeimelKMNSS22,BernsteinBGNSS022,AlonBDMNY21,HaeuplerKRS26}.
However, in the context of dynamic balls-and-bins, to our knowledge, only two works have addressed general adaptive adversaries: Bender et al.~\cite{BenderCCFJT29} and Fine et al.~\cite{FineKS25}.

Another work by Becchetti et al. \cite{BecchettiCNPP19}, considered a repeated balls-and-bins game, which can be viewed as \emph{specific adaptive strategy}. In this game, $n$ balls are arbitrarily assigned to $n$ bins, and in each round, one ball is extracted from each non-empty bin, and re-assigned to a uniformly random bin. Becchetti et al. \cite{BecchettiCNPP19} proved that this game admits a self-stabilizing property: after $O(n)$ rounds, the maximum load is $O(\log n)$, and remains $O(\log n)$ for $\poly(n)$ rounds, with high probability.\footnote{High probability means with probability at least $1/\poly
(n)$.}
However, the focus of Becchetti et al. \cite{BecchettiCNPP19} is on that specific adaptive strategy, hence it does not bound general adaptive adversaries. 

Bender et al.~\cite{BenderCCFJT29} considered an adaptive {\em ball-recycling} game where $m$ balls are initially distributed across $n$ bins. At each time step, an adversary adaptively selects a bin, removes all its balls, and re-throws them at random. The goal of the adversary is to maximize the expected number of balls re-thrown per step in the stationary distribution (when it exists), a metric we refer to as the {\em long-run average recourse}. Bender et al.~\cite{BenderCCFJT29} established nearly tight bounds for this setting: there exists an adversary that achieves a long-run average recourse of $2m/(n + 1)$, while no adversary can achieve long-run average recourse greater than $2m/n + 1$. However, since the work of \cite{BenderCCFJT29} focuses on the stationary distribution, it does not yield strong bounds on the achievable recourse in finite-horizon games.

This gap was addressed by Fine et al.~\cite{FineKS25}, who considered a variant of this game in which bins are {\em deleted} rather than {\em recycled}. This implies that the game can last at most $n$ steps. This is formulated as a two-player game between an (adaptive) adversary and a randomized algorithm, defined as follows.

\begin{definition}[The Bin-Deletion Game]\label{def:BDGame}
Initially, the algorithm distributes $n$ balls into $n$ bins according to its initialization strategy (in our setting, this will either be the uniform distribution or using the power of two choices). Then the game proceeds for $T\leq n$ rounds, where in each round:
\begin{enumerate}
\item The adversary inspects the current assignment and selects a bin to delete.
\item The algorithm redistributes the balls from the deleted bin among the surviving bins.
\end{enumerate}
The algorithm has two primary objectives: to minimize the total number of balls it re-distributes, referred to as the (total) \emph{recourse}, and to minimize the \emph{maximum load} of any single bin throughout the game.
\end{definition}

\begin{remark} 
As is standard in the literature on balls-and-bins, we focus on ``lightweight'' strategies for the algorithm, where the destination of a ball is chosen either independently of the current system state or by probing only a constant number of randomly chosen bins. Such strategies are crucial in distributed settings where querying the load of every bin may be slow or expensive.
\end{remark}

Fine et al.~\cite{FineKS25} considered this game in the case where $T=n-1$ (i.e., the game continues until only one bin remains). They showed that no adversary can force more than $O(n\log n)$ total recourse against the simple algorithm that redistributes balls from deleted bins uniformly at random. This is optimal for $T=n-1$, because at any step with $r$ remaining bins, the adversary can force a recourse of at least $n/r$ (the average load). Thus, even over just the last $n/2$ steps, any algorithm must incur a total recourse of:
$$\sum_{r=1}^{n/2} \frac{n}{r} = n \sum_{r=1}^{n/2} \frac{1}{r} = \Theta(n \log n).$$
Note that this argument fails when the number of deletions $T$ is smaller than $n/2$. The analysis of recourse and maximum load when the adversary removes only a subset of the bins was left open by Fine et al.~\cite{FineKS25}. (The maximum load is irrelevant when $T=n-1$, as it is trivially $n$ when only one bin remains.)

\subsection{Our results}
We analyze the recourse and the maximum load of several algorithmic strategies for the adaptive bin deletion game (Definition~\ref{def:BDGame}). First, we resolve the behavior of this game for partial deletions. For simplicity of presentation, we only discuss here $T=\tfrac{n}{2}$ rounds, and present the extension to $T>\tfrac{n}{2}$ in \Cref{sec:uniform}. We establish that after $\tfrac{n}{2}$ rounds, independent uniform redistribution yields optimal $O(n)$ recourse and a maximum load of $O(\tfrac{\log n}{\log\log n})$, matching the classical static bounds. Moreover, we show that the power of two choices carries over to this adaptive setting, reducing the maximum load to $O(\log \log n)$ and recovering the exponential improvement seen in the static case. Specifically,

\begin{theorem}[Informal]\label{thm:uniform_informal}
    In the adaptive bin-deletion game, after $T=\tfrac{n}{2}$ rounds, with high probability,
    \begin{enumerate}
        \item If balls are distributed independently and uniformly at random, the recourse is $O(n)$ and the maximum load is $O(\tfrac{\log n}{\log\log n})$, and
        \item If balls are distributed using the two choices allocation scheme, the recourse is $O(n)$ and the maximum load is $O(\log\log n)$.
    \end{enumerate}
\end{theorem}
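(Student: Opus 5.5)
The plan is to reduce the adaptive game to a statement about the randomness of the algorithm that holds \emph{simultaneously} for all adversary strategies. I will use a ``random tape'' view. Each ball $b$ comes with an infinite sequence of i.i.d.\ uniform bin labels $X_b^{(0)},X_b^{(1)},\dots$ over the original $n$ bins. When $b$ must be (re)placed, it takes the first label in its sequence that names a surviving bin. This realizes uniform redistribution over surviving bins exactly. For two choices, each placement instead consumes pairs of surviving labels.

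\textbf{Recourse.} A ball is moved only when its current bin is deleted. I will charge each move to the ball's current label. For a fixed ball, the number of moves is at most the number of its labels that land in deleted bins before it first lands in a bin surviving to the end. The adversary deletes at most $T=n/2$ bins, but it chooses them adaptively, so this is not yet a clean bound.

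To handle adaptivity, I plan a potential argument. Let $\Phi_t=\sum_{\text{alive bins } i}\ell_i(t)$, weighted by the number of remaining rounds. The cleaner route is direct. When a bin with load $L$ is deleted among $r\ge n/2$ alive bins, the moved balls go uniformly to $r-1$ bins. The expected number of them that will be moved \emph{again} is governed by the loads they land on. Define the martingale-like quantity $M_t=R_t-c\cdot(\text{balls placed so far})$, where $R_t$ is the recourse so far. The key fact is that a ball placed uniformly among $r\ge n/2$ bins lands in a bin that is later deleted with ``probability'' at most $T/r\le 1$. This is not a bound of $1/2$, so I will need more care.

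The approach I commit to is an exposure argument. Order the $n/2$ deletions in time. The deleted bin at step $t$ has load equal to the number of balls whose current label points to it. Each redistributed ball picks a uniform bin independently of the future, so the expected load it adds to the bin deleted at step $s>t$ is at most $1/(n-s)\le 2/n$. Summing, each placement spawns in expectation at most $\sum_{s}\frac{1}{n-s}\le \ln 2<1$ future moves, however the adversary adapts. This is a subcritical branching process with mean $\ln 2$. Total recourse is then at most $n/(1-\ln 2)=O(n)$ in expectation. To get high probability I will use Freedman's inequality on the martingale $\sum_t(L_t-\mathbb E[L_t\mid\mathcal F_{t-1}])$, with the loads bounded by the max-load bound below. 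The main obstacle is that the adversary's choice of $s$ depends on where balls landed. To get around it, I bound the load of the deleted bin by $\max_i \ell_i$ increments. Precisely, I bound the total mass injected into the set of eventually-deleted bins by a union over stopping times, using the fact that each increment is uniform given the past.

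\textbf{Max load.} Given total recourse $O(n)$ w.h.p., the multiset of all placements consists of $O(n)$ uniform throws into at least $n/2$ bins. Fix a bin $i$. Its final load is at most the number of throws that ever hit it, and each throw hits $i$ with conditional probability at most $2/n$ regardless of the adversary. Stochastic domination by $\mathrm{Bin}(Cn,2/n)$ gives $\Pr[\text{load}\ge k]\le (2Ce/k)^k$. This yields $O(\log n/\log\log n)$ after a union bound over bins. For two choices I will mimic the layered induction of Azar et al.\ with $O(n)$ total throws and domination of sampling probabilities by $2/n$. Let $\beta_k$ be the number of bins that ever reach height $k$. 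A throw reaches height $k+1$ only if both choices have height $\ge k$, which has conditional probability at most $(2\beta_k/n)^2$. This gives $\beta_{k+1}\le O(n)(2\beta_k/n)^2$ via Chernoff, which converges to $O(\log\log n)$ levels. The subtlety I expect is that deletions remove bins but never raise loads, so the heights only decrease, and the induction goes through as in the static case.
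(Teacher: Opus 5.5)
There is a genuine gap in the recourse bound, and the rest of your proposal depends on it, since both max-load arguments start from ``$O(n)$ throws in total''. The exposure argument you commit to claims that a ball placed at step $t$ adds expected load at most $1/(n-s)$ to the bin deleted at step $s>t$. From this you conclude that each placement spawns at most $\ln 2$ future moves ``however the adversary adapts''. That is exactly the step that fails. The bin deleted at step $s$ is chosen after the adversary has seen where the ball landed. For example, it can delete at step $t+1$ the bin a ball was just thrown into, so that placement is moved again with probability $1$. Against an adaptive adversary there is no per-placement subcritical branching structure, so even the expectation bound $n/(1-\ln 2)$ is unsupported.

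Freedman's inequality does not repair this. The deleted bin, and hence $L_t$, is a function of the configuration the adversary sees. The whole difficulty is therefore bounding the predictable part $\sum_t \E[L_t\mid\mathcal F_{t-1}]$, not the fluctuations around it, and ``a union over stopping times'' is not a concrete argument. For two choices the heuristic is worse. A two-choice placement can hit a particular light bin with probability about $2/r$, so the per-placement mean is near $2\ln 2>1$ even against an oblivious adversary. You give no recourse argument for that case at all.

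The missing idea is to reason about the final surviving set $S$, with $|S|=n/2$, via a union bound over all ${n\choose n/2}\le 2^n$ candidates.
\begin{itemize}
\item Every throw that lands in $S$ is the last move of its ball, so at most $n$ throws land in $S$.
\item Every throw is made while all of $S$ is alive, so it lands in $S$ with conditional probability at least $1/2$. For two choices, both samples lie in $S$ with probability at least $1/4$, and then the ball lands in $S$.
\item For a fixed candidate $S$, look at throws made while $S$ is entirely alive. The probability of fewer than $n$ hits among the first $10n$ (resp.\ $24n$) of them is $e^{-cn}$ with $c>\ln 2$, by Azuma--Hoeffding. This survives the union bound.
\end{itemize}

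The paper packages exactly this as a coupling with a non-adaptive game. Each adaptive throw is simulated by throwing into all $n$ original bins and rejecting deleted ones. The non-adaptive game stops once every $n/2$-subset has received $n$ balls. This cannot happen before the $n/2$-th deletion: at that point at least $n/2+1$ alive bins hold at most $n$ moved balls, so some $n/2$ of them hold fewer than $n$. The same coupling, with the invariant $L^{na}_i\ge L^a_i$ for two choices, then gives the max load directly from the classical bounds for $10n$ balls in $n$ bins.

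Once the recourse bound is in place, your own max-load arguments are sound. The hits on a fixed bin among the first $Cn$ throws are dominated by $\mathrm{Bin}(Cn,2/n)$. The layered induction goes through because loads of alive bins never decrease and every sample is drawn from at least $n/2$ bins. In both cases you should add the initial load.
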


\Cref{thm:uniform_informal} assumes that a deleted bin can send every one of its balls independently into a different random (remaining) bin.
However, this might sometimes be infeasible. For example, in a communication network, this might require a server to connect to a prohibitively large number of machines.
Therefore, we seek to reduce the number of target bins that a deleted bin connects to. We consider a restrictive algorithmic strategy called {\em $d$-split}: when a bin is deleted, we partition its balls among $d$ randomly chosen bins.
This strategy was suggested in \cite{FineKS25}, who asked what is the recourse of this scheme, particularly when $d=2$.
In \Cref{sec:2_split}, we answer this question when the game is played for $T=n/2$ rounds, showing that this $d$-split strategy suffices for guaranteeing linear recourse and nearly-optimal maximum load, even for $d=2$.
The analysis of this variant is significantly more challenging, and it may provide insights for handling adaptive inputs in other settings. Extending our result to $T=(1-o(1))n$ rounds remains open. Note that for this variant, our bound on the maximum load holds only with constant probability. 

\begin{theorem}\label{thm:2_split_infromal}
    In $2$-split against an adaptive adversary, after $\tfrac{n}{2}$ rounds, the recourse is $O(n)$ with high probability, and the maximum load is $\poly(\log n)$, with constant probability.
\end{theorem}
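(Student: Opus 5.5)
We first pin down the model. Initially the $n$ balls are placed uniformly at random, so the initial maximum load is $O(\log n/\log\log n)$ w.h.p. In each round the adversary deletes a bin of load $L$. The algorithm then sends $\lceil L/2\rceil$ and $\lfloor L/2\rfloor$ of its balls to two independent uniformly random surviving bins. The recourse equals the sum of the loads of the deleted bins. The adversary wants to delete heavy bins, and 2-split lets heavy bins breed heavy bins: a bin of load $L$ produces two bins of load at least $L/2$. Our plan is to control the total ``heavy mass'' with a potential function that is robust to adaptivity. Since the adversary's choice is a function of the history, each round we take a union bound over the deleted bin's load. We use concentration arguments, such as Freedman's inequality, for martingales whose increments are bounded given the history.

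\textbf{Step 1 (recourse via potential).} Define $\Phi_t=\sum_{b}\ell_b^2$, the sum over surviving bins of squared loads. Suppose the deleted bin has load $L$ and there are $r\ge n/2$ surviving bins. Deletion removes $L^2$ from $\Phi$. Adding $L/2$ balls to a random bin $b$ increases $\Phi$ by $L\ell_b+L^2/4$, whose expectation is $L\cdot(n/r)+L^2/4\le 2L+L^2/4$; the two halves go to possibly the same bin, which costs $O(L^2/r)$ extra in expectation. So in expectation $\Delta\Phi\le -L^2/2+O(L)$. Summing, $\Phi_T\le\Phi_0-\tfrac12\sum L_t^2+O(\sum L_t)+M_T$, where $M_T$ is a martingale. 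Since $\Phi_0=O(n)$ w.h.p., Cauchy--Schwarz gives $\sum L_t\le\sqrt{T\sum L_t^2}$. Hence, if $M_T=O(n)$, then $\sum L_t^2=O(n)$ and the recourse $\sum L_t=O(n)$. For the martingale we need control of its increments, which are of order $L\ell_b$, so we require a polylog bound on the loads to run Freedman. We handle this with a stopping time at the first moment the max load exceeds $\log^c n$. Before the stopping time the increments are polylog, and the predictable variance is at most $\sum L_t^2\cdot\mathrm{polylog}$, so self-bounding gives $M_T=O(n)$ plus lower-order terms w.h.p.

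\textbf{Step 2 (max load, constant probability).} This is the main obstacle, since halving chains can concentrate mass. Let $H_k$ be the number of balls in bins of load at least $2^k$. A bin of load in $[2^{k},2^{k+1})$ arises either from balls landing uniformly, as in the static process, or from a split of a bin of load at least $2^{k+1}$. We track a weighted potential $\Psi=\sum_b \ell_b^{p}$ for a large constant $p$, or alternatively $\sum_b e^{\lambda \ell_b/\log n}$. We then argue as in Step 1: a split of a bin of load $L$ replaces $L^p$ by roughly $2(L/2)^p(1+o(1))$ plus a cross term with the receiving bin's load, so $\Psi$ decreases on deletions of heavy bins. Random landings increase $\Psi$ in expectation by at most $O(\Psi/n)\cdot$ the recourse, plus a polylog term. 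Combined with the linear recourse from Step 1, this bounds $\mathbb E[\Psi_T]\le n\cdot\mathrm{poly}\log n$. Markov's inequality then gives the max load $\le(n\,\mathrm{polylog})^{1/p}$, which is not polylog. So we must instead choose $p\approx\log n/\log\log n$, or use an exponential potential, and check that the multiplicative growth factor over $n/2$ rounds stays $O(1)$ in expectation. This yields only a constant-probability bound via Markov, consistent with the statement.

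The delicate points are, first, bounding the cross terms where a half lands on an already-heavy bin, which we do using the fact that the adversary cannot choose the landing spot, and second, ensuring the stopping-time argument from Step 1 interlocks with Step 2. We resolve the latter by running both bounds up to the joint stopping time and showing that the stopping time is not reached with the claimed probability.
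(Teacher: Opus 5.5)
Your Step 1 takes a genuinely different route from the paper, which couples $2$-split to $\ppay$ through the Teflon hybrids $\pi_t$ and the toy problem. Its drift computation is correct: $\E[\Delta\Phi\mid H]\le -L^2/2+4L+O(L^2/n)$. Nonnegativity of $\Phi_T$, $\Phi_0=O(n)$ and Cauchy--Schwarz then reduce the recourse bound to $M_T=O(n)$.

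The gap is in getting $M_T=O(n)$ \emph{with high probability}. Freedman needs control of increments of order $L_t\ell_{x_t}$. You get it by stopping when the maximum load exceeds $\log^c n$, but your only control over that stopping time is Step~2, which (like Theorem~\ref{thm:2_split_adaptive_load}) holds only with constant probability. As structured, the argument gives recourse $O(n)$ only with constant probability. Step~2 also uses the recourse bound of Step~1, so the ``joint stopping time'' is circular rather than a resolution. The paper's recourse proof never uses the $2$-split maximum load; it needs only the w.h.p.\ $O(\log n)$ load of the adaptive uniform game, to bound $|T_t|$.

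Within your framework the repair is that polylog is unnecessary. With deviation $\Theta(n)$, Freedman only needs increments $n^{1-\Omega(1)}$, i.e.\ maximum load $n^{1/2-\Omega(1)}$, plus a stopping time at $\Phi\le Kn$ to keep the predictable variance $O(\sum_t L_t^2)$. That bound does hold w.h.p. Take $\Phi_p=\sum_b\ell_b^p$ for a large constant $p$. A split of a bin with $L\ge 2$ gains a constant fraction of $L^p$. The power-mean and Young inequalities bound the cross terms ${p\choose j}\lceil L/2\rceil^j\cdot\frac1r\sum_b\ell_b^{p-j}$ by a small multiple of $L^p$ plus $C_p(1+\Phi_p/r)$. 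Hence $\E\Phi_p(t)\le C'_p n$ for all $t\le n/2$, and Markov with a union bound gives maximum load $n^{(2+c)/p}$ with probability $1-O(n^{-c})$.

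Step~2 leaves its crux (``check that the growth factor over $n/2$ rounds stays $O(1)$'') unverified, and your sketched justification does not work. A packet of size $L/2$ landing on a random surviving bin raises $w=\sum_b e^{\alpha\ell_b}$ by $\bar w(e^{\alpha L/2}-1)$ in expectation, where $\bar w$ is the average of $e^{\alpha\ell}$ over surviving bins. This is exponential, not linear, in the packet size once $\alpha L\gtrsim 1$. Also, a random, adversarially controlled recourse cannot be pulled out of an expectation.

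Balancing against the $-e^{\alpha L}$ of the deletion and maximizing over the adversary's $L$ leaves a drift that grows with the current potential. The paper's bound is $\E[w_{t+1}\mid H]\le w_t(1+16w_t/n^2)$, which forces a stopping time at $w\le Cn$. Because $w_0\ge n$, Markov then certifies only $cn$ rounds for a small constant $c$, never all $n/2$.

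What your plan lacks is the paper's restart scheme. Reset $\alpha=1/L$ at the current maximum load $L$. Lemma~\ref{lem:potential} shows that with probability $1/2$ the next $n/C$ rounds raise the maximum load by at most a $\log(Cn)$ factor. A submartingale/Azuma argument then shows only $O(1)$ such phases fit into $n/2$ rounds, giving $\log^{O(1)}n$.

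A single-potential proof is also possible, but it needs an idea absent from your sketch: subtract the baseline and track $u=\sum_b(e^{\alpha\ell_b}-1)$. A direct computation gives $\E[\Delta u\mid H]=O((u/r)^2+\alpha u/r)$, while $\E u_0=O(\alpha n)$. For a small constant $\alpha$, stopping at $u\le n$ and applying Markov keeps $u\le n$ through all $n/2$ rounds with constant probability.
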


As noted by \cite{FineKS25}, restricting this variant further to $d=1$, which forces a deleted bin to send all its balls into {\em one} randomly chosen bin, is disastrous against an adaptive adversary: The adversary can always delete the heaviest bin, causing a snowball effect where a heavy packet of balls repeatedly merges with other bins. Already after $n/2$ rounds, this results in maximum load $\Omega(n)$ and recourse $\Omega(n^2)$.

Somewhat surprisingly, we show that this restrictive variant (with $d=1$) is actually a solid strategy against an {\em oblivious} adversary that chooses the sequence of deletions in advance. This underscores the power of the adaptive adversary and sets it apart from the oblivious adversary in the context of dynamic balls-and-bins. Specifically, in \Cref{sec:no_split} we prove the following theorem:

\begin{theorem}\label{thm:1_split_oblivious_intro}
    In $1$-split against an oblivious adversary, after $\tfrac{n}{2}$ rounds, the recourse is $O(n\log n)$ and the maximum load is $O(\log n)$, with high probability.
\end{theorem}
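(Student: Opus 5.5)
The plan is to fix the oblivious deletion sequence in advance. We then view the process as a random merging forest on the original $n$ bins, each holding its initial balls. When the adversary deletes a bin, its whole contents (a ``packet'') move to a uniformly random surviving bin. Since the sequence is fixed, we may relabel so that bins $1,2,\dots,n/2$ are deleted in that order. At time $t$, the contents of deleted bin $t$ go to a uniformly random bin among the $n-t$ survivors. Every ball currently in a surviving bin $b$ arrived along a chain of merges ending at $b$. The key structural observation is that the content of any bin is a union of \emph{initial} contents of a set of bins forming a subtree. The load of a bin is therefore the sum of the initial loads over its subtree. I will bound the subtree sizes and the initial loads separately.

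\emph{Step 1 (initial loads).} The initial assignment is uniform, so with high probability every bin starts with $O(\log n/\log\log n)$ balls. More usefully, any fixed set of $k$ bins holds at most $O(k+\log n)$ balls with high probability, by a Chernoff bound on a $\mathrm{Bin}(n,k/n)$ variable. Obliviousness matters here: the merge tree is built from randomness independent of the initial placement, so we can condition on the tree and apply this bound to each final subtree.

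\emph{Step 2 (subtree sizes).} Fix a surviving or intermediate bin $b$ and track the number $S_t$ of original bins in its subtree. Bin $b$ absorbs deleted bin $t$ with probability $1/(n-t)\le 2/n$, independently over $t$. Deleted bin $t$ brings its own subtree at time $t$, whose size has the same law. So the subtree size is dominated by a subcritical Galton--Watson-type process: each of at most $n/2$ deletions attaches with probability at most $2/n$. Unrolling the recursion, an earlier deleted bin contributes its own absorbed children. The expected number of direct absorptions is at most $\sum_{t\le n/2} 1/(n-t)=\ln 2<1$. Hence the total progeny is dominated by a Galton--Watson tree with Poisson-like offspring of mean $\ln 2$. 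Its total size has exponential tails, $\Pr[S\ge k]\le e^{-ck}$. A union bound over $n$ bins then shows every subtree has size $O(\log n)$ with high probability. Combining with Step 1, every final load, and every load at every intermediate time, is $O(\log n)$ with high probability.

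\emph{Step 3 (recourse).} The recourse equals the sum over deleted bins $t$ of the load of bin $t$ at its deletion time. By Step 2 each such load is $O(\log n)$, giving $O(n\log n)$ total with high probability. For comparison, the expected recourse is in fact $O(n)$, since the expected subtree size is $O(1)$, but the statement only needs the crude bound.

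\emph{Main obstacle.} The hardest step is making the Galton--Watson domination in Step 2 rigorous. The absorptions are not independent across bins: exactly one survivor receives each packet, and the time-ordering constraint means that a bin deleted at time $t$ can only have absorbed packets from times before $t$. I would handle this by exposing, for the fixed bin $b$, only the ancestry tree backwards in time. For each deletion time $t$, the event ``$t$ attaches to a specified node of the explored tree'' has probability exactly $1/(n-t)$. Distinct explored nodes give disjoint events, so the number of children found across the exploration is stochastically dominated by independent $\mathrm{Bernoulli}(2/n)$ trials per (node, time) pair. This yields the exponential tail through a standard exploration-process/Chernoff argument on the random walk with drift $\ln 2-1<0$.
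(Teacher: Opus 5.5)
Your architecture is sound, and it differs from the paper in two useful ways. The paper assumes one ball per bin initially, so load equals tree size. Your Step~1 instead handles a uniformly random initial placement, using the independence of the merge forest from the placement plus a Chernoff bound over the $n$ subtrees. The paper also gets the recourse from the per-ball Geometric$(1/2)$ domination of Lemma~5 of~\cite{FineKS25}, whereas you derive it from the load bound as a sum of $n/2$ deletion-time loads. Both are fine.

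The gap is in the step you flag as the crux: the proposed rigorization of the Galton--Watson domination does not work as written.

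First, replacing $1/(n-t)$ by the uniform bound $2/n$ discards exactly the slack you need. In your backward scan the root is paired with all $n/2$ deletion times, so its dominating offspring is $\mathrm{Bin}(n/2,2/n)$ with mean exactly $1$. Nodes that join near time $n/2$ have mean arbitrarily close to $1$. The exploration walk then has no drift bounded below $0$, let alone $\ln 2-1$, and the Chernoff argument you cite does not go through. The value $\ln 2=\lim\sum_{t\le n/2}1/(n-t)$ survives only if you keep the time-dependent probabilities. That means offspring law equal to a sum of independent $\mathrm{Bernoulli}(1/(n-t))$, $t\le n/2$, which is precisely the paper's $\mu$.

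Second, ``distinct explored nodes give disjoint events'' does not give first-order domination by independent Bernoullis with the same marginals. At time $t$ the number of newly attached nodes is $\mathrm{Bernoulli}(|T|/(n-t))$, whereas $\Pr[\mathrm{Bin}(|T|,1/(n-t))\ge 1]<|T|/(n-t)$ once $|T|\ge 2$. You can repair this by inflating the success probabilities by a factor $1+O(\log n/n)$. That is legitimate while the explored tree has $O(\log n)$ nodes, which is all you need.

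The paper handles the crux (Lemma~\ref{lem:tree_GW}) by exploring node by node in \emph{decreasing index order}. When node $i$ is processed, every still-unfound $u<i$ is conditioned only on not attaching to already processed nodes, all of which have index larger than $i$. So at least $n/2+(i-u)-1$ admissible parents remain. Given the history, the children of $i$ are then independent Bernoullis with success probabilities at most $1/(n/2+j)$ for distinct $j\in[0,n/2-1]$, i.e.\ dominated by $\mu$ (mean and variance below $0.7$). Bernstein's inequality on the exploration walk then gives size $O(\log n)$.

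Alternatively, your backward-in-time scan can be made exact with no branching process at all. Scan $t=n/2,\dots,1$; bin $t$ joins the tree of root $r$ iff its parent lies in the current tree $T$, which given the past has probability exactly $|T|/(n-t)$. This is P\'olya's urn started with one red ball among $m=n/2$ and run for $m$ draws. So the tree size is $1+R$ with $R$ beta-binomial, and $\Pr[R=r+1]/\Pr[R=r]=\frac{m-r}{2m-2-r}\le\frac12$ for $r\ge 2$, a geometric tail. Either repair completes Step~2, and with it your Steps~1 and~3.
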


To grasp the intuition behind this result, consider a random \emph{forest} graph on $n$ vertices constructed as follows: for each step $i$ from $1$ to $n/2$, vertex $i$ uniformly samples a parent $j \in \{i+1, \dots, n\}$ and becomes a child of $j$. This process yields a forest with $n/2$ roots (the vertices indexed $n/2+1$ through $n$).

Critically, the recourse in our oblivious $d=1$ game is exactly equal to the sum of the depths of all vertices in this forest, and the maximum load is the size of the largest tree. This equivalence holds because, without loss of generality, an oblivious adversary deletes bins $1$ through $n/2$ sequentially. When bin $i$ is deleted, it merges into a surviving bin $j > i$, creating the directed edge $(i, j)$. Consequently, a ball starting at bin $u$ incurs a movement cost exactly equal to the length of the path from $u$ to a root; and the load of bin $i$ is the size of the tree rooted at $i$.
We provide a direct analysis showing that, for such a random forest, the total sum of depths is $O(n\log n)$, and the maximum tree size is $O(\log n)$ with high probability.

\subsection{Technical overview}
Our proofs use coupling arguments to show stochastic domination, formalized as follows.

\begin{definition}[First-order stochastic domination]\label{def:frst_ordr_stochastic_domination}
    Random variable $A$ has first-order stochastic dominance over random variable $B$, denoted $A\succeq B$ if, for all $x\in\R$,
    \[
    \Pr(A\geq x)\geq \Pr(B\geq x).
    \]
\end{definition}
\begin{lemma}[Theorem 4.2.3 of \cite{roch_mdp_2024}.]\label{lem:stoch_dominat_equivalent_coupling}
    For random variables $A,B$, we have $A\succeq B$ if and only if 
    there exist random variables $A',B'$, such that $A'$ has the same distribution as $A$, and $B'$ has the same distribution as $B$, and $A'\geq B'$. The random variables $A',B'$ are called a monotone coupling.
\end{lemma}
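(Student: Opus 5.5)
The statement is the classical characterization of first-order stochastic dominance by monotone couplings, so I would prove the two directions separately, with the quantile-function (inverse-CDF) construction doing the work in the nontrivial direction.

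The easy direction is ``if''. Suppose $A',B'$ exist with $A'\sim A$, $B'\sim B$ and $A'\geq B'$ pointwise. For every $x\in\R$ we have the event inclusion $\{B'\geq x\}\subseteq\{A'\geq x\}$, hence
\[
\Pr(B\geq x)=\Pr(B'\geq x)\leq \Pr(A'\geq x)=\Pr(A\geq x),
\]
which is exactly Definition~\ref{def:frst_ordr_stochastic_domination}.

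For ``only if'', I would build both variables from a single uniform random variable $U$ on $(0,1)$. Let $F_A(x)=\Pr(A\leq x)$ and $F_B(x)=\Pr(B\leq x)$, and use the generalized inverses
\[
F_A^{-1}(u)=\inf\{x: F_A(x)\geq u\},\qquad F_B^{-1}(u)=\inf\{x: F_B(x)\geq u\}.
\]
Set $A'=F_A^{-1}(U)$ and $B'=F_B^{-1}(U)$. The standard fact $F^{-1}(u)\leq x \iff u\leq F(x)$, which uses right-continuity of $F$, gives $\Pr(A'\leq x)=F_A(x)$, so $A'\sim A$, and likewise $B'\sim B$.

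It then suffices to show $F_A(x)\leq F_B(x)$ for all $x$. Once that holds, $\{x: F_A(x)\geq u\}\subseteq\{x:F_B(x)\geq u\}$, and taking infima gives $F_A^{-1}(u)\geq F_B^{-1}(u)$ for every $u$, i.e.\ $A'\geq B'$ pointwise.

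The main obstacle is translating the hypothesis, stated with non-strict events $\Pr(A\geq x)\geq\Pr(B\geq x)$, into the CDF inequality, which involves strict events $\Pr(A>x)\geq\Pr(B>x)$. I would handle this with a limit. For $y\downarrow x$ we have $\{A\geq y\}\uparrow\{A>x\}$, so by continuity of measure $\Pr(A>x)=\lim_{y\downarrow x}\Pr(A\geq y)$, and the same holds for $B$. The hypothesis at each $y$ then passes to the limit, giving $1-F_A(x)\geq 1-F_B(x)$. The remaining steps are routine measure-theoretic bookkeeping.
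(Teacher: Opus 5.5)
Your proof is correct. The paper gives no proof of its own and simply cites Theorem~4.2.3 of \cite{roch_mdp_2024}; your argument is the standard quantile-coupling proof of that result ($A'=F_A^{-1}(U)$, $B'=F_B^{-1}(U)$), and your continuity-of-measure step correctly converts the paper's non-strict tail condition $\Pr(A\geq x)\geq\Pr(B\geq x)$ into the CDF inequality $F_A\leq F_B$ that the construction needs.
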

For brevity, from now on we omit the term ``first-order'' when referring to \Cref{def:frst_ordr_stochastic_domination}, and omit the word ``monotone'' when referring to \Cref{lem:stoch_dominat_equivalent_coupling}.
For concreteness, suppose that the number of rounds is $n/2$, for all variants.

\paragraph*{Uniform game.}
Consider the uniform algorithm, that for each ball, draws a bin uniformly and independently at random from the surviving bins, and assigns the ball to the drawn bin.
We show that the recourse and maximum load in this adaptive game are stochastically dominated by the recourse and maximum load in a related \emph{non-adaptive game}, and then bound the recourse and maximum load of the  non-adaptive game using standard tools.
The non-adaptive game is the following: throw balls independently and uniformly at random to $n$ bins, until every subset of $n/2$ bins contains at least $n$ balls. It is not difficult to show that with high probability, the number of balls in this game is at most $10n$, and the bound on the maximum load follows by classic results, for a standard balls and bins problem with $m=10n$.

The coupling argument, which is proven in \Cref{sec:uniform}, is as follows.
Whenever we want to throw a ball in the adaptive game, we do the following. In the non-adaptive game, throw a ball uniformly at random over the $n$ bins. If the ball lands in a bin of the same index as a surviving bin in the adaptive game, place the ball in the adaptive game at that corresponding bin. Otherwise, repeat this and throw another ball in the non-adaptive game.
We obtain that the configuration, at the bins which correspond to surviving bins in the adaptive game, is the same in both games.
Moreover, the non-adaptive game may only end after round $n/2$ of the adaptive game, since at round $n/2-1$, there are $n/2+1$ surviving bins, which contain $n$ balls, and so there is a set of $n/2$ bins with strictly less than $n$ balls.
Thus, the configuration in the non-adaptive game is point-wise larger than the configuration in the adaptive game, and the proof follows.

\begin{remark}
    It is necessary that the algorithm's coins are hidden from the adversary, at least in order to bound the maximum load.
    To see this, we first define a model where the algorithm's random coins are set in advance and revealed to the adversary.
    Suppose that every bin $b$ holds a sequence $\sigma_b$ of independent random variables drawn from the uniform distribution over $[n]$.
    When bin $b$ is deleted, it sends its balls to surviving bins in $\sigma_b$, in the order of $\sigma_b$ (this is basically rejection sampling).

    Given the sequences of all bins, the adversary can do the following attack.
    Look at the first $\sqrt{n}$ random variables of the first $\sqrt{n}$ bins. With constant probability, there exists $i\in[\sqrt{n}]$ such that $n$ appears in the first $\sqrt{n}$ random variables of $i$. Delete the other $\sqrt{n}$ bins indexed by random variables in $\sigma_i$, and then delete $i$. 
    This yields an additional $O(1)$ expected load on bin $n$.
    Repeat $O(\sqrt{n})$ times, resulting in expected load $\Omega(\sqrt{n})$ in bin $n$ after $n/2$ rounds.
\end{remark}

\paragraph*{Two choices.}
In the adaptive two choices paradigm, each ball draws two bins independently and uniformly over the surviving bins, and picks the least loaded bin.
Our proof for this setting is similar to that of the uniform game, except that the non-adaptive game we reduce to (via coupling) is different. Specifically, we consider the following non-adaptive variant of the two choices game (without an adversary): Start with $n$ empty bins. Then, iteratively pick {\em pairs of indices} $(i,j)$ uniformly from $[n]\times [n]$, and place a ball in the least loaded among bins $i,j$. The game proceeds until for every subset $S\subseteq[n]$ of size $|S|=n/2$ we have that
$S\times S \subset [n]\times [n]$ contains at least $n$ pairs. We show that the recourse and maximum load in the {\em adaptive} two choices game are stochastically dominated by the number of rounds and the maximum load in the \emph{non-adaptive} two choices game, and then bound these parameters in the non-adaptive game.

\paragraph*{$2$-split, bounding the recourse.}
Recall that with the 2-split strategy, whenever a bin is deleted, its balls are split into two packets, and each packet is independently thrown to a uniformly random surviving bin. \Cref{thm:2_split_infromal} states that after $\tfrac{n}{2}$ rounds, the recourse is $O(n)$ and the maximum load is $\poly(\log n)$. We first discuss the recourse bound.
To this end, we relate the adaptive 2-split game to an intermediate (adaptive) game, denoted $\ppay$, where: (1) balls from deleted bins are thrown independently uniformly at random, instead of using the 2-split strategy; and (2)  
when a bin with $k$ balls is deleted, the incurred recourse is $k+O(k\log k)$ instead of $k$. Note that the first modification ``helps us'' in that it makes the problem easier to analyze and the second modification ``helps the adversary'' as the recourse is increased.

Our analysis proceeds using two main claims: (1) The recourse in $\ppay$ is at total variation distance $\tfrac{1}{n^8}$ from a random variable that stochastically dominates the recourse in $2$-split; and (2) the recourse in $\ppay$ is $O(n)$ with high probability.
By a union bound, the recourse in $2$-split is $O(n)$ with high probability.
The proof of the second claim is very similar to the uniform game, and is omitted from this overview.
The basic idea behind the first claim is that if we consider $k\geq 2$ balls that move because of a single deletion, and track their movement until each of the $k$ balls become isolated from the other $k-1$ balls, then with high probability, they do not inflict a lot of recourse (concretely, $O(k\log k)$ recourse). 
Therefore, we can basically replace each deletion by paying this `cost', and throw the balls independently.
This is formulated using the following toy problem and an inductive argument. In the toy problem, there are $2\leq k\leq O(\tfrac{\log n}{\log\log n})$ balls, initially located at the same bin. In each round, the adversary deletes a bin that contains at least $2$ balls, and the balls are split evenly to two sets that are thrown independently. The game ends when the adversary cannot delete bins. In \Cref{sec:2_split_recourse}, we show that with high probability, the recourse in the toy problem is $O(k\log k)$, and complete the proof using an inductive argument.

\paragraph*{$2$-split, maximum load.}
To show that after $\tfrac{n}{2}$ rounds the maximum load is $\poly(\log n)$, we consider the following exponential potential function. For a load vector $(L_1,\ldots,L_n)$ and a set $I\subseteq [n]$ denoting the set of surviving bins, the potential is 
\[
w=\sum_{i\in I} e^{\alpha L_i},
\]
where $\alpha = \tfrac{1}{\log n}$.
By tracking $w$, we obtain that there exists a constant $C>1$, such that with probability at least $1/2$, when the maximum load increases by a $\log n$ factor, at least $n/C$ rounds passed. We repeatedly apply such an argument, and obtain that every time the maximum load increases by a $\log n$ factor, with probability $1/2$, at least $n/C$ rounds passed.
The $\poly(\log n)$ bound on the maximum load after $n/2$ rounds then follows by standard arguments.

 \section{Preliminaries}
Our proofs use the following concentration bounds.
\begin{lemma}[Hoeffding's inequality]\label{lem:Hoeffding}
    Let $X_1,\ldots,X_n$ be independent random variables, such that $0\leq X_i\leq C$ for all $i$.
    Then,
    \[
    \Pr\Big(\Big|\sum_{i=1}^n X_i-\sum_{i=1}^n\E X_i\Big|>t\Big)\leq 2\exp\Big(-\frac{2t^2}{nC^2}\Big).
    \]
\end{lemma}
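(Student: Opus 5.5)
We prove Hoeffding's inequality (\Cref{lem:Hoeffding}) by the standard Chernoff--Cram\'er method. By symmetry it suffices to bound the upper tail $\Pr(S-\E S>t)$, where $S=\sum_{i=1}^n X_i$. The lower tail follows by applying the same argument to $C-X_i$, which also lies in $[0,C]$. A union bound over the two tails then gives the factor $2$.

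\textbf{Step 1 (exponential Markov).} For any $\lambda>0$, Markov's inequality applied to $e^{\lambda(S-\E S)}$, together with independence, gives
\[
\Pr(S-\E S>t)\le e^{-\lambda t}\,\E e^{\lambda(S-\E S)} = e^{-\lambda t}\prod_{i=1}^n \E e^{\lambda(X_i-\E X_i)}.
\]

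\textbf{Step 2 (Hoeffding's lemma).} This is the main technical step. We show that if $Y\in[a,b]$ and $\E Y=0$, then $\E e^{\lambda Y}\le e^{\lambda^2(b-a)^2/8}$.
\begin{itemize}
\item By convexity of $y\mapsto e^{\lambda y}$ on $[a,b]$, we have $e^{\lambda y}\le \frac{b-y}{b-a}e^{\lambda a}+\frac{y-a}{b-a}e^{\lambda b}$.
\item Taking expectations and using $\E Y=0$ gives $\E e^{\lambda Y}\le e^{\varphi(u)}$. Here $u=\lambda(b-a)$, $p=-a/(b-a)$, and $\varphi(u)=-pu+\log(1-p+pe^{u})$.
\item One checks that $\varphi(0)=\varphi'(0)=0$. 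Also, $\varphi''(u)=q(1-q)\le 1/4$ for some $q\in[0,1]$.
\item Taylor's theorem then gives $\varphi(u)\le u^2/8$.
\end{itemize}
We apply this with $Y=X_i-\E X_i$, whose range has length $b-a=C$.

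\textbf{Step 3 (optimize $\lambda$).} Combining the two steps,
\[
\Pr(S-\E S>t)\le \exp\!\left(-\lambda t+\frac{n\lambda^2C^2}{8}\right).
\]
Choosing $\lambda=4t/(nC^2)$ yields the bound $\exp(-2t^2/(nC^2))$. The lower tail gives the same bound, and the union bound completes the proof.

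The only step requiring care is the second-derivative bound in Hoeffding's lemma. Writing $q=\frac{pe^u}{1-p+pe^u}\in[0,1]$ makes it the elementary inequality $q(1-q)\le 1/4$.
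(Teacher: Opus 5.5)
Your proof is correct. It is the standard Chernoff--Cram\'er argument combined with Hoeffding's lemma, and each computation checks out: the convexity bound with $p=-a/(b-a)\in[0,1]$, the identity $\varphi''(u)=q(1-q)\le 1/4$, the choice $\lambda=4t/(nC^2)$ giving exponent $-2t^2/(nC^2)$, and the reflection $X_i\mapsto C-X_i$ for the lower tail.

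The paper gives no proof of this lemma; it simply quotes it as a standard concentration bound in the preliminaries, so there is nothing to compare. Your argument implicitly assumes $t>0$ and $C>0$, which is what the statement intends.
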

\begin{lemma}[Bernstein's inequality]\label{lem:Bernstein}
    Let $X_1,\ldots,X_n$ be independent random variables, such that $0\leq X_i\leq C$ for all $i$.
    Denote $S=\sum_{i=1}^n X_i$ and $V=\var(S)$.
    Then,
    \[
    \Pr(|S-\E S|>t)\leq 2\exp\Big(-\frac{t^2/2}{V+Ct/3}\Big).
    \]
\end{lemma}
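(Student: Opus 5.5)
We will prove the bound by the standard Chernoff (moment generating function) method, applied separately to the upper and lower tails and then combined with a union bound, which gives the factor $2$. First we center the variables: let $Y_i = X_i - \E X_i$. Then $\E Y_i = 0$, and since $X_i\in[0,C]$ we have $|Y_i|\le C$. Write $\sigma_i^2=\var(X_i)=\E Y_i^2$, so $V=\sum_i\sigma_i^2$ by independence, and $S-\E S=\sum_i Y_i$. The lower tail is handled identically by applying the argument to $-Y_i$, which satisfies the same hypotheses ($|{-Y_i}|\le C$, same variance). So it suffices to show
\[
\Pr\Big(\sum_i Y_i>t\Big)\le \exp\Big(-\frac{t^2/2}{V+Ct/3}\Big).
\]

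The key step is a bound on each moment generating function. For $0<\lambda<3/C$, expand $e^{\lambda Y_i}=1+\lambda Y_i+\sum_{k\ge2}\lambda^kY_i^k/k!$. We use $|Y_i^k|\le Y_i^2C^{k-2}$ and $k!\ge 2\cdot 3^{k-2}$ for $k\ge 2$. Taking expectations gives
\[
\E e^{\lambda Y_i}\le 1+\frac{\lambda^2\sigma_i^2}{2}\sum_{k\ge2}\Big(\frac{\lambda C}{3}\Big)^{k-2}
=1+\frac{\lambda^2\sigma_i^2}{2(1-\lambda C/3)}
\le \exp\Big(\frac{\lambda^2\sigma_i^2}{2(1-\lambda C/3)}\Big).
\]
By independence, $\E e^{\lambda\sum_i Y_i}\le \exp\big(\frac{\lambda^2V}{2(1-\lambda C/3)}\big)$. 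Markov's inequality then yields
\[
\Pr\Big(\sum_iY_i>t\Big)\le \exp\Big(-\lambda t+\frac{\lambda^2V}{2(1-\lambda C/3)}\Big).
\]

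Finally, we optimize (approximately) over $\lambda$ by choosing $\lambda=t/(V+Ct/3)$, which indeed satisfies $\lambda C/3<1$. With this choice $1-\lambda C/3=V/(V+Ct/3)$. Hence the quadratic term equals $\lambda^2(V+Ct/3)/2=\lambda t/2$, and the exponent becomes $-\lambda t/2=-\frac{t^2/2}{V+Ct/3}$, as required. The only step needing any care is the moment generating function bound: specifically, using the factorial inequality $k!\ge2\cdot3^{k-2}$ to get the geometric series that produces the $Ct/3$ term. Everything else is routine.
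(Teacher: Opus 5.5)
Your proof is correct. It is the standard moment-generating-function (Chernoff) proof of Bernstein's inequality; the paper gives no proof of its own and simply quotes the inequality as a standard tool, so there is no alternative route to compare with. One small gap: when $V=0$, your choice $\lambda=t/(V+Ct/3)$ gives $\lambda C/3=1$ exactly, not $<1$, and your simplification then divides by $V$. That case is trivial, since $S=\E S$ almost surely, but you should dispose of it separately. Also say explicitly that $t>0$, which the statement assumes implicitly and which you need for $\lambda>0$.
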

 \section{Uniform and two choices allocations}\label{sec:uniform}
Recall that we consider a $2$-player game between an adaptive adversary and an algorithm.
Initially, there are $n$ balls independently distributed uniformly at random over $n$ bins.
The adversary sees the assignment of balls into bins, and picks a bin to delete.
The algorithm must distribute the balls from the deleted bin among the remaining bins. 
This game continues in this manner for $T$ rounds.
The algorithm has two objectives: to minimize the number of balls that it distributes, called recourse, and to minimize the maximum load.
We first consider the algorithm that independently distributes the balls from the deleted bin uniformly at random among the remaining bins (for brevity, we simply say the balls are thrown to the remaining bins), which we call the uniform game.
Previously, \cite{FineKS25} considered this game for $T=n-1$, i.e., the game continues until only one bin remains. They showed that the recourse is $O(n\log n)$, which is the best possible for any algorithm.
We consider this game for intermediate number of rounds, and show the following.

\begin{theorem}\label{thm:adaptive_balls_bins_arbitrary_rounds}\label{thm:adaptive_balls_bins_n/2_rounds} Let $T\in [n-1]$, and denote $\ell=\lceil \log\tfrac{n}{n-T} \rceil$.
    In $T$ rounds of the uniform game, with high probability,
    the recourse is $O(\ell\cdot n)$ and the maximum load is 
\begin{align*}
        &\begin{cases}
        O(\log n \log \tfrac{\log\log n}{\log\log n -\ell})
& \text{if $\ell<\log\log n$ (corresponding to the lightly-loaded case),} \\
        O(2^\ell+\log n\log\log\log n) & \text{if $\ell\geq \log\log n$, (corresponding to the heavily-loaded case).}
    \end{cases}\end{align*}
\end{theorem}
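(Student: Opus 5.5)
We reduce the adaptive uniform game to a non-adaptive balls-into-bins process by a coupling, as sketched in the technical overview, and then bound that process with standard tools.

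\textbf{The non-adaptive game and the coupling.} In the non-adaptive game $\mathcal{N}_T$ we throw balls independently and uniformly into $n$ bins, stopping at the first time $\tau$ at which every set of $n-T$ bins contains at least $n$ balls. We couple the two games as follows. Each time the adaptive game must throw a ball, we throw balls in $\mathcal{N}_T$ until one lands in a bin whose index is currently surviving in the adaptive game, and we place the adaptive ball in that same bin. Conditioned on the history, the accepted ball is uniform over the surviving bins, so the adaptive game is simulated exactly. The adversary's choices depend only on the adaptive configuration, so they can be arbitrary here.

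By induction on the rounds, every surviving bin has load at most its load in $\mathcal{N}_T$. This holds because the initial $n$ balls coincide, and every ball subsequently placed in the adaptive game is also placed in $\mathcal{N}_T$. Moreover, before round $T$ ends there are more than $n-T$ surviving bins holding only $n$ balls in total. Hence some $(n-T)$-set of bins has fewer than $n$ balls, and so $\mathcal{N}_T$ has not yet stopped. Consequently the total number of balls thrown in the adaptive game (recourse plus the initial $n$) is at most $\tau$, and the adaptive maximum load is at most the maximum load of $\mathcal{N}_T$ after $\tau$ throws. It therefore suffices to show that w.h.p.\ $\tau\le m:=Cn2^{\ell}$ for a constant $C$, and then to bound the maximum load of $m$ uniform balls in $n$ bins.

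\textbf{Bounding $\tau$.} Since $n-T\ge n2^{-\ell}$, any fixed set $S$ of size $n-T$ receives, among $m$ balls, a $\mathrm{Bin}(m,|S|/n)$ count with mean at least $C n$. A Chernoff bound gives failure probability $e^{-\Omega(Cn)}$, but a union bound over all $\binom{n}{n-T}\le 2^{n H(2^{-\ell})}$ sets needs more care. When $\ell$ is constant, $\binom{n}{n-T}\le 2^n$ and a large $C$ suffices. For large $\ell$ we instead use the mean $Cn$ against the entropy term $n2^{-\ell}\log(e2^\ell)\le n$, so the Chernoff exponent $\Omega(Cn)$ still wins. Rather than union bounding over sets directly, it is cleaner to require that the $n-T$ least loaded bins hold at least $n$ balls, and to union bound over sets using a lower-tail bound. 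I expect this calibration, keeping $m=O(n2^\ell)$ while making the union bound close, to be routine but fiddly.

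\textbf{Maximum load with $m=Cn2^\ell$ balls.} This is the main obstacle, since the stated bounds are the sharp classical ones for $m$ balls in $n$ bins with $m/n=\Theta(2^\ell)$. For a bin the load is $\mathrm{Bin}(m,1/n)$ with mean $\mu=C2^\ell$, and we find $k$ with $\Pr[\mathrm{Bin}\ge k]\le n^{-2}$ and then take a union bound over the $n$ bins.

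In the lightly loaded case ($\ell<\log\log n$) we use $\Pr\ge k\le (e\mu/k)^k$ and solve $k\log(k/(e\mu))\ge 2\ln n$. Writing $k=c\log n\cdot x$, the condition is roughly $x\log\frac{\log n}{2^\ell}\gtrsim 1$, which gives $k=O\bigl(\log n/\log\frac{\log n}{2^\ell}\bigr)$. One then checks that this matches the stated expression $O(\log n\log\frac{\log\log n}{\log\log n-\ell})$ up to constants. The comparison uses $\log\frac{\log n}{2^\ell}=\Theta(\log\log n-\ell)$ (up to base), together with $1/(1-t)$ versus $\log\frac{1}{1-t}$ behaviour near $t=\ell/\log\log n$, and it is the point I would verify most carefully, possibly restricting to a regime or slightly weakening the bound.

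In the heavily loaded case we apply Bernstein (\Cref{lem:Bernstein}) with $V\le\mu$ and $t=\mu+\log n\log\log\log n$. This gives a deviation of $O(\sqrt{\mu\log n}+\log n)$, which is $O(2^\ell+\log n)$, within the claimed $O(2^\ell+\log n\log\log\log n)$. Finally, the recourse is $\tau-n=O(2^\ell n)$. Since the statement claims only $O(\ell n)$, this suggests the cruder $m$ is too weak for recourse. I would therefore refine the recourse bound separately by phases: split the rounds into $\ell$ phases, where in phase $j$ the surviving count drops from $n2^{-j}$ to $n2^{-j-1}$. Applying the above coupling within each phase to $\mathcal{N}$ restricted to the bins alive at the phase start gives $O(n)$ recourse per phase, hence $O(\ell n)$ in total.
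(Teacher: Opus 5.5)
Your proposal is correct. For the recourse it ends up on the paper's argument: the paper splits the rounds into $\ell$ halving phases. It couples phase $j$ (counting only balls thrown during that phase, starting from empty bins) with throwing balls into the $n/2^{j-1}$ bins alive at the phase start, until every $n/2^j$-subset holds $n$ balls. It shows $10n$ balls suffice with probability $1-e^{-2n}$, via Hoeffding and a $\binom{n/2^{j-1}}{n/2^j}\le 2^n$ union bound, and then sums over phases. Your global calibration is not fiddly either: $\binom{n}{n-T}\le 2^n$ against a lower-tail exponent $\Omega(Cn)$ already works for every $\ell$.

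The real difference is the maximum load. The paper bounds each phase's contribution by the classical bound for $10n$ balls in $n/2^{j-1}$ bins and adds these over $j$. That summation is what produces the stated expressions, and the paper itself notes it may be suboptimal. Your single global coupling with $\Theta(n2^\ell)$ balls in $n$ bins instead gives the sharp classical bound directly: $O(\log n/(\log\log n-\ell))$, or $O(\log n)$ when $\log n/2^\ell=O(1)$, and $O(2^\ell)$ in the heavy case. This does not ``match'' the statement; it is stronger. Write $x=\log\log n-\ell$. If $x\ge 1$, then since $\ell\ge 1$ you have $1/x\le\log_2(1+1/x)\le\log_2\frac{\log\log n}{x}$. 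If $x<1$, your $O(\log n)$ lies below $\log n\log\log\log n$. So no regime restriction or weakening is needed. The price of your route is two couplings (global for the load, per-phase for the recourse) where the paper uses one. The gain is exactly the nearly tight bound that the paper's remark cites as the benchmark.

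One correction is needed in your stopping argument. To conclude that $\mathcal{N}_T$ has not stopped, you need the loads of surviving bins in $\mathcal{N}_T$ to be \emph{at most} the adaptive loads, but you only justify ``at least''. The needed direction does hold. A throw of $\mathcal{N}_T$ that lands in a currently surviving bin is never rejected and is always copied to the adaptive game. So state the invariant as equality on surviving bins, as the paper does.
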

\begin{remark}
    Put differently, the maximum load is 
    \[
    \begin{cases}
        O(\tfrac{\ell\log n}{\log\log n}) &\text{if $\ell\leq \tfrac{1}{2}\log\log n$,} \\
        O(\log n \log \tfrac{\log\log n}{\log\log n -\ell})
        & \text{if $\tfrac{1}{2}\log\log n<\ell<\log\log n$,} \\
        O(\log n \log \log\log n )
        & \text{if $\log\log n \leq \ell<\log\log n+\log\log\log\log n$,} \\
        O(2^\ell) & \text{if $\ell\geq \log\log n+\log\log\log\log n$.}
    \end{cases}
    \]
    That is, for small $\ell$ it interpolates from the standard bound $O(\tfrac{\log n}{\log\log n})$ of the lightly-loaded case to $O(\log n \log \log\log n )$, and for large $\ell\geq \log\log n+\log\log\log\log n$, the maximum load is the same order as the average load. Recall that when throwing $m'=n$ balls into $n'=n/2^\ell$ bins uniformly at random, by \cite{RaabS98}, the maximum load is 
\[
    \begin{cases}
        O\Big(\tfrac{\log n}{\log\frac{\log n}{2^\ell}}\Big) & \text{if $\ell<\log\log n$ (light load),} \\
        2^\ell+O(\sqrt{2^\ell \log n}) & \text{if $\ell\geq \log\log n$ (heavy load),}
    \end{cases}
    \]
hence our bound for the maximum load is nearly tight, for all regimes.
\end{remark}

The recourse bound is the best possible, for any algorithm, as explained below, and matches the bound of \cite{FineKS25} for $T=n-1$.

The proof of \Cref{thm:adaptive_balls_bins_arbitrary_rounds} considers $\ell$ phases. The $j$-th phase, where $j\in [\ell]$, is all rounds numbered in $[n-\tfrac{n}{2^{j-1}}+1,n-\tfrac{n}{2^{j}}]$, i.e., the first phase is the first $n/2$ rounds, the second phase is the next $n/4$ rounds, etc. Phase $j$ consists of exactly $n/2^j$ rounds and begins with $n/2^{j-1}$ alive bins.
Consider the balls that are moved in the $j$-th phase, and denote by $X_j$ their recourse, and by $A_j$ the maximum load they incur.
Clearly, the total recourse is $X=\sum_{j=1}^\ell X_j$, and the maximum load is $A\leq \sum_{j=1}^\ell A_j$. We show that with high probability, for all $j\in[\ell]$,
\begin{itemize}
    \item $X_j= O(n)$, and
    \item $A_j = 
    \begin{cases}
        O\Big(\tfrac{\log n}{\log\frac{\log n}{2^j}}\Big) & \text{if $j<\log\log n$ (light load),} \\
        O(2^j) & \text{if $j\geq \log\log n$ (heavy load).}
    \end{cases}
    $
\end{itemize}
The recourse bound is the best possible, since the average load in the $j$-th phase is $\Theta(2^j)$, and there are $\tfrac{n}{2^j}$ rounds. The maximum load matches that of throwing $O(n)$ balls to $\tfrac{n}{2^j}$ bins, and thus is the best possible.
The proof follows by summation over the $\ell$ phases, which might be slightly sub-optimal for bounding the maximum load when $\omega(1)\leq \ell\leq \log\log n+\log\log\log\log n$.

The proof uses
the following two lemmas. 
The first reduces (via coupling) each phase $j$  to the following non-adaptive game, and the second bounds the recourse and maximum load in this non-adaptive game.
In the $j$-th non-adaptive game, balls are thrown one at a time to $n/2^{j-1}$ bins until every subset of $n/2^{j}$ bins contains at least $n$ balls. We denote by $Y_j$ the number of inserted balls, and by $B_j$ the maximum load (which are random variables). 

\begin{lemma}\label{lem:simulation_uniform_j_game}
    Let $j\in [\log n]$.
  Fix an adaptive adversary, and denote its recourse in the $j$-th phase by $X_j$ and the load incurred in the $j$-th phase by $A_j$. Then, $Y_j\succeq X_j$ and $B_j\succeq A_j$, where $Y_j$ and $B_j$ are the number of inserted balls and the maximum load in the $j$-th non-adaptive game defined above.
\end{lemma}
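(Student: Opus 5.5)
We will prove the lemma by constructing an explicit monotone coupling between phase $j$ of the adaptive game and the $j$-th non-adaptive game, and then invoking \Cref{lem:stoch_dominat_equivalent_coupling}. Throughout, the phase begins with a set $I_0$ of $n/2^{j-1}$ alive bins; we identify $I_0$ with the $n/2^{j-1}$ bins of the non-adaptive game through a fixed bijection. The state at the start of the phase (including the balls already present) is arbitrary and may be conditioned on. Since $X_j$ and $A_j$ count only balls moved during phase $j$, the ``load incurred in the phase'' counts only those balls.

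\textbf{Step 1: the coupling.} We run the adaptive game and the non-adaptive game together. Whenever the algorithm in the adaptive game must throw a ball (from a deleted bin) uniformly over the current alive set $I\subseteq I_0$, we throw balls in the non-adaptive game, uniformly over all of $I_0$. We stop at the first such ball that lands in a bin of $I$ and place the adaptive ball in the same bin. Conditioned on the whole past, and in particular on the adversary's adaptive choice of $I$, the first hit is uniform over $I$ (rejection sampling). So the adaptive marginal is exactly right, whatever the adversary's strategy. If the non-adaptive game's stopping condition has not yet been reached when the adaptive phase ends, we keep throwing independent balls until it is. Its marginal is therefore that of the $j$-th non-adaptive game, because stopping is determined by its own sequence of throws.

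\textbf{Step 2: pointwise domination.} For every surviving bin $b\in I$ at every time, the number of phase-$j$ balls in $b$ in the adaptive game equals the number of non-adaptive balls in $b$. A deleted bin has, at its deletion time, non-adaptive count at least its adaptive count; afterwards its non-adaptive count only grows while its adaptive count is $0$. Hence for every bin and every time, the non-adaptive load is at least the adaptive phase-$j$ load. Taking maxima gives $B_j\ge A_j$ pointwise. For the recourse, each adaptive ball thrown consumes at least one non-adaptive ball, so $X_j$ is at most the number of non-adaptive balls thrown up to the end of the phase. It then suffices to show that the non-adaptive game has not stopped before the adaptive phase ends, so that $Y_j$ is at least that number.

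\textbf{Step 3: the stopping time (main obstacle).} We must show that before the last throw of phase $j$, some set of $n/2^j$ bins of $I_0$ holds fewer than $n$ non-adaptive balls. Consider any throw in phase $j$. It occurs during the processing of a deletion at which at least $n/2^j+1$ bins are still alive, namely the alive set excluding the bin being deleted. All these alive bins together contain at most $n$ balls in the adaptive game, the deleted bin's balls being partly unplaced. By Step 2 their non-adaptive counts coincide with the adaptive counts of phase-$j$ balls, so they hold at most $n$ non-adaptive balls. Remove from them a bin that contains at least one ball (if none exists, any $n/2^j$ of them hold $0<n$ balls). This leaves $n/2^j$ bins with strictly fewer than $n$ balls, so the non-adaptive game had not terminated. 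The delicate point is the mid-deletion bookkeeping, together with ensuring the count includes the deleted bin's balls in transit. If needed, this is handled by the phrasing ``at most $n$ balls in total exist'' applied to the surviving bins. Therefore $Y_j\ge X_j$ pointwise, and \Cref{lem:stoch_dominat_equivalent_coupling} yields $Y_j\succeq X_j$ and $B_j\succeq A_j$.
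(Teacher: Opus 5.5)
Your coupling is the same as the paper's. You rejection-sample uniform throws over all $n/2^{j-1}$ bins, keep counts equal on alive bins and dominating on dead ones, and argue that the non-adaptive game cannot stop early. Steps 1 and 2 are fine. Step 3, however, has an off-by-one error that breaks it in exactly the case that matters.

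The phase starts with $n/2^{j-1}$ alive bins and has $n/2^j$ deletions. So during the $k$-th deletion, the alive set excluding the bin being deleted has $n/2^{j-1}-k$ bins. For the last deletion this is exactly $n/2^j$, not ``at least $n/2^j+1$''. The paper's count of $n/2^j+1$ just before the last deletion \emph{includes} the bin about to be deleted. For the throws of the last deletion, removing a nonempty bin therefore leaves only $n/2^j-1$ bins and gives no witness set. You also cannot put the deleted bin back in, since rejected non-adaptive throws land in dead bins and can raise its count.

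The missing observation is that the strict inequality should come from the ball in transit, not from discarding a nonempty bin. When a ball of the deleted bin is about to be thrown, it lies in no surviving bin. So the surviving bins together hold at most $n-1$ balls in the adaptive game. Hence any $n/2^j$ of them (at the last deletion, all of them) hold at most $n-1$ phase-$j$ balls, and by your Step 2 invariant at most $n-1$ non-adaptive balls. Rejected throws land only in dead bins, so this count is unchanged during rejections. The non-adaptive game is therefore still running at every throw the coupling requests, up to and including the accepted one.

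This handles all deletions uniformly. It also makes explicit the throws of the final deletion, which the paper's sentence about the configuration right before that deletion leaves implicit. Your fallback ``at most $n$ balls in total exist'' does not suffice on its own, since it gives $\le n$ rather than $<n$.
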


\begin{proof}

We begin by analyzing the recourse. We construct a coupling between $Y_j$ and $X_j$ by simulating the non-adaptive game during the $n/2^j$ rounds of the adaptive game. Abusing notation slightly, we continue to use $X_j$ and $Y_j$ to denote these random variables within the coupled setting.

For the non-adaptive game, let $L^{na}_i$ denote the load of the $i$-th bin for $i\in [n/2^{j-1}]$. In the adaptive game, denote by $\alive_j\subseteq [n]$ the set of living bins right before the $j$-th phase starts. In particular, $|\alive_j|=n/2^{j-1}$, and therefore there is a one-to-one mapping between $\alive_j$ and $[n/2^{j-1}]$. For simplicity of presentation, we omit this mapping and simply treat every $i\in \alive_j$ as a number in $[n/2^{j-1}]$.
Let $\alive\subseteq \alive_j$ denote the set of living bins during phase $j$; for each $i\in \alive$, let $L^a_i$ be the load of the $i$-th bin incurred by balls that move in the $j$-th phase. 
We maintain the following invariant:

    \begin{itemize}
        \item \textbf{Invariant.} For every $i\in \alive$, we have $L^{na}_i=L^a_i$.
    \end{itemize}
    Initially, in the non-adaptive game there are no balls. No balls moved yet in the adaptive game. Clearly, the invariant holds.
    
    Now, consider a round of the adaptive game.
    The adaptive adversary picks a bin to delete, say containing $b$ balls.
    Suppose these $b$ balls are thrown one at a time in the adaptive game, and let us simulate throwing one ball using the non-adaptive game.
    In the non-adaptive game, we throw a ball uniformly over all the $n/2^{j-1}$ bins. If the ball is inserted to a bin in $[n/2^{j-1}]\setminus \alive$, i.e., bin that was deleted in the adaptive game, we repeatedly throw another new ball to the $n/2^{j-1}$ bins, until we get a ball in some bin $i\in \alive$. Therefore, for all $i'\in \alive, i'\neq i$, we have that $L^{na}_{i'}$ is unchanged, and $L^{na}_i\gets L^{na}_i+1$.
    We copy that last ball to the adaptive game, i.e., we only update $L^a_i\gets L^a_i+1$, and so the invariant holds.
    This process is repeated for all the $b$ balls, and then we proceed to the next round of the adaptive game.

    Crucially, we claim that the above is a simulation of the adaptive game, and it may end only after the adaptive adversary deletes $n/{2^j}$ bins.
    In the non-adaptive game, the distribution of a thrown ball, given that it lands in $\alive$, is uniform over $\alive$. Thus, the above indeed simulates the adaptive game.

    Observe that right before the adaptive adversary deletes the $\tfrac{n}{2^{j}}$-th bin, we have that $|\alive|=\tfrac{n}{2^j}+1$. Since $\sum_{i\in \alive} L^{a}_i\leq n$, there must exists a subset of $\alive$ of size $\tfrac{n}{2^j}$, whose bins in the adaptive game contain strictly less than $n$ balls.
    Therefore, by the invariant, also in the non-adaptive game, there is a subset of $\tfrac{n}{2^j}$ bins containing strictly less than $n$ balls, and thus the non-adaptive game didn't end.
    Hence, the number $Y_j$ of balls in the non-adaptive game is at least the recourse $X_j$ of the adaptive game in the $j$-th phase, i.e., $Y_j\geq X_j$.

    Over the same probability space, we immediately obtain a similar result for the maximum load.
    The maximum load $B_j$ in the non-adaptive game is larger than the maximum load in $\alive$, which by the invariant, equals the maximum load incurred by balls that move in the $j$-th phase of the adaptive game, and thus $B_j\geq A_j$. The proof is concluded by \Cref{lem:stoch_dominat_equivalent_coupling}.
\end{proof}

\begin{lemma}\label{lem:non_adaptive_game_length}
Consider the $j$-th non-adaptive game and let $Y_j$ be the number of balls it throws. Then,
    $Y_j\leq 10n$ with probability at least $1-e^{-2n}$.
\end{lemma}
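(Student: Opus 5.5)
The plan is to bound $\Pr(Y_j>10n)$ by the probability that after throwing exactly $m=10n$ balls into $N=n/2^{j-1}$ bins, some subset $S$ of $N/2=n/2^j$ bins still receives fewer than $n$ balls. This suffices because the stopping condition is monotone in the number of balls: if every half-size subset has at least $n$ balls after $m$ throws, the game has already ended by time $m$. So $\{Y_j>10n\}$ is contained in the event that, after $10n$ throws, there exists $S\subseteq[N]$ with $|S|=N/2$ and $\sum_{i\in S}L^{na}_i<n$.

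Fix such a subset $S$. Each of the $10n$ independent balls lands in $S$ with probability exactly $1/2$. Hence the number $Z_S$ of balls in $S$ is $\mathrm{Bin}(10n,1/2)$, with mean $5n$. The bad event $Z_S<n$ is a deviation of $4n$ below the mean. Applying \Cref{lem:Hoeffding} with indicator variables ($C=1$, $10n$ terms, $t=4n$) gives
\[
\Pr(Z_S<n)\le 2\exp\!\Big(-\frac{2(4n)^2}{10n}\Big)=2e^{-3.2n}.
\]

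Next, union bound over all subsets $S$. There are $\binom{N}{N/2}\le 2^N\le 2^n=e^{n\ln 2}$ of them. The total failure probability is therefore at most $2e^{(\ln 2-3.2)n}\le 2e^{-2.5n}$, which is below $e^{-2n}$ for all $n\ge 2$. The case $n=1$ is trivial or vacuous.

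The main obstacle is making this union bound beat the entropy term $2^N$. With $m=10n$ the Hoeffding exponent $3.2n$ comfortably exceeds $n\ln 2$, so the constants close and there is slack. If a sharper constant were needed, one could instead use the Chernoff/KL bound for $\mathrm{Bin}(10n,1/2)$ at $1/10$. No uniformity over $j$ is required, since $N\le n$ for every $j\in[\log n]$.
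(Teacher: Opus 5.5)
Your proposal is correct and follows the paper's argument: fix a subset of $n/2^j$ bins, apply Hoeffding to its $\mathrm{Bin}(10n,1/2)$ count with deviation $4n$, and union bound over at most $2^n$ subsets. Two small refinements relative to the paper's write-up: you justify the reduction to exactly $10n$ throws via monotonicity of the stopping condition, and you keep the sharper exponent $3.2n$ rather than rounding it to $3n$.
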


\begin{proof}
    Consider the classical balls into bins problem with  $10n$ balls, and each ball is thrown uniformly to $n/2^{j-1}$ bins.
    We show that with high probability, all subsets of $n/2^{j}$ bins contain at least $n$ balls, and the claim follows.
    
    Let  $S\subset [n/2^{j-1}]$ be a subset of size $|S|=\tfrac{n}{2^{j}}$. For every $i\in[10n]$, denote by $I_i$ an indicator that the $i$-th ball is in $S$. 
    Thus, the number of balls in $S$ equals $\sum_{i=1}^{10n} I_i$. Clearly, $I_i=1$ w.p.\ $\tfrac{1}{2}$ and $0$ otherwise, and all these indicators are independent.
    Thus, by Hoeffding's inequality,
    \[
    \Pr\Big(\sum_{i=1}^{10n} I_i<n\Big) 
    \leq \Pr\Big(|\sum_{i=1}^{10n} I_i-5n|>4n\Big)
    \leq 2\exp\Big(-\frac{2(4n)^2}{10 n}\Big) \leq 2\exp(-3n).
    \]
    By a union bound, the probability that there exists a set $S\subset [n]$ of size $|S|=\tfrac{n}{2^{j}}$ containing less than $n$ balls is at most
    \[
    {n/2^{j-1}\choose n/2^{j}}\cdot 2\exp(-3n)
    \leq 2^{n}\cdot 2\exp(-3n)
    \leq \exp(-2n),
    \]
    which concludes the proof of the lemma.
\end{proof}

\begin{proof}[Proof of \Cref{thm:adaptive_balls_bins_arbitrary_rounds}.]
    By combining
    Lemma \ref{lem:simulation_uniform_j_game} and Lemma \ref{lem:non_adaptive_game_length},
 we have that for each $j\in [\ell]$,
    with probability at least $1-e^{-2n}$, the recourse is at most $10n$.
    By a union bound, with probability at least $1-e^{-n}$, the total recourse is $O(\ell n)$.
In standard balls and bins with $10n$ balls and $n/2^{j-1}$ bins, 
    with probability $1-\tfrac{1}{n^{11}}$, the maximum load is \cite{RaabS98}
    \[
    \begin{cases}
        O\Big(\tfrac{\log n}{\log\frac{\log n}{2^j}}\Big) & \text{if $j<\log\log n$,} \\
        O(2^j) & \text{if $j\geq \log\log n$.}
    \end{cases}
    \]
    Thus, by a union bound, with probability $1-n^{-10}$, the maximum load $B_j$
    in the non-adaptive game is bounded by that as well.
Observe that
    \begin{align*}
        \sum_{j=1}^\ell \tfrac{1}{\log\frac{\log n}{2^j}} &= \sum_{x=\log\log n -\ell}^{\log\log n-1} \tfrac{1}{x} \\
        &\leq \int_{\log\log n -\ell-1}^{\log\log n} \tfrac{dx}{x} = O(\log \tfrac{\log\log n}{\log\log n -\ell}).
    \end{align*}
    When $\ell\leq\tfrac{1}{2}\log\log n$, we have that $\log \tfrac{\log\log n}{\log\log n -\ell} = \log (1+\tfrac{\ell}{\log\log n -\ell})=O(\tfrac{\ell}{\log\log n})$.
    Thus, with probability $1-n^{-10}$, the maximum load
    in the adaptive game is 
    \begin{align*}
        &\leq \sum_{j=1}^\ell A_j \leq \sum_{j=1}^\ell B_j\\
        &\leq 
        \begin{cases}
        O(\log n \log \tfrac{\log\log n}{\log\log n -\ell})
& \text{if $\ell<\log\log n$,} \\
        O(2^\ell+\log n\log\log\log n) & \text{if $\ell\geq \log\log n$,}
    \end{cases}\\
    &= \begin{cases}
        O(\tfrac{\ell\log n}{\log\log n}) &\text{if $\ell\leq \tfrac{1}{2}\log\log n$,} \\
        O(\log n \log \tfrac{\log\log n}{\log\log n -\ell})
        & \text{if $\tfrac{1}{2}\log\log n<\ell<\log\log n$,} \\
        O(\log n \log \log\log n )
        & \text{if $\log\log n \leq \ell<\log\log n+\log\log\log\log n$,} \\
        O(2^\ell) & \text{if $\ell\geq \log\log n+\log\log\log\log n$,}
    \end{cases}
    \end{align*}
and the proof follows.
\end{proof}

\subsection{Two choices}
Consider the following adaptation of the game above: whenever a ball is thrown, we pick two independent bins $b_1,b_2$, and place the ball at the least loaded of $b_1,b_2$. This is applied both to the initial configuration, and to balls thrown during the game.
This is based on the well-known power of two choices, which in standard (non-adaptive) balls and bins setting, yields a maximum load of $O(\log\log n)$ when $m=O(n)$~\cite{AzarBKU99,BCSV00_2choice_heavy,BCSV06_2choice_heavy-sicomp}.
We show the following.
\begin{theorem}\label{thm:adaptive_balls_bins_two_choice}
    Let $T\in [n-1]$ and denote $\ell=\lceil \log\tfrac{n}{n-T} \rceil$.
In $T$ rounds of adaptive two choices game, the recourse is $O(\ell n)$ and the maximum load is $O(2^\ell+\ell\log\log n)$, with high probability.
\end{theorem}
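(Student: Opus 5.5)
We will follow the phase structure from the proof of \Cref{thm:adaptive_balls_bins_arbitrary_rounds}. Split the $T$ rounds into $\ell$ phases: phase $j$ begins with $n/2^{j-1}$ alive bins and consists of $n/2^j$ rounds. Let $X_j$ be the recourse of balls moved in phase $j$. Since two-choices placement depends on the total load, we cannot simply sum per-phase loads as before. Instead we track the full load vector at the end of each phase.

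The core of the argument is a coupling with a non-adaptive two choices game, in the spirit of \Cref{lem:simulation_uniform_j_game}. In the $j$-th non-adaptive game there are $n/2^{j-1}$ bins, and the starting loads equal the loads of $\alive_j$ at the start of phase $j$. We repeatedly draw an ordered pair $(i_1,i_2)$ uniformly from $[n/2^{j-1}]^2$. We stop once every subset $S$ of size $n/2^j$ satisfies two conditions: $S\times S$ has received at least $n$ pairs, and at least $n$ balls have been placed in $S$.

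To simulate one throw of the adaptive game, we draw pairs until one lies in $\alive\times\alive$. We then place the ball in the less loaded of $i_1,i_2$ in both games. Rejected pairs, with at least one endpoint outside $\alive$, are still placed in the non-adaptive game using its two-choice rule.

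Conditioned on acceptance, a pair is uniform on $\alive\times\alive$, so this is a faithful simulation. The invariant is now only domination: $L^{na}_i\ge L^a_i$ for every $i\in\alive$. This is where the main obstacle lies, since an extra ball in bin $i$ of the non-adaptive game can change later choices. To handle it, I would use the standard majorization and monotonicity argument for two-choice processes (Azar et al.; Berenbrink et al.). The idea is to couple via the rank of the chosen bins rather than their indices, so that a pointwise-larger or majorizing initial vector yields a majorizing final vector, and then bound the maximum load via majorization. If the direct pointwise coupling fails, I would fall back to a layered induction (witness-tree style) run directly in the non-adaptive game. That game has at most $10n$ accepted-plus-rejected balls, and every ball added to the adaptive bins is also a ball of the non-adaptive game landing on one of its two choices. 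So the counts $\nu_{\ge k}$ of bins with load at least $k$ obey the same recursion $\nu_{\ge k+1}\lesssim (\nu_{\ge k}/n')^2\cdot 10n$, giving an additive $O(\log\log n)$ over the average.

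The phase cannot end early, by the same counting as before. Before the last deletion of phase $j$, some $n/2^j$ alive bins hold fewer than $n$ balls. Hence some $S\times S$ has received fewer than $n$ accepted pairs, since each accepted pair places a ball inside $S\times S$'s bins. Therefore $Y_j\ge X_j$.

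It remains to bound the non-adaptive game. Exactly as in \Cref{lem:non_adaptive_game_length}, draw $40n$ pairs. For a fixed $S$ of half size, each pair lies in $S\times S$ with probability $1/4$. Hoeffding's inequality together with a union bound over $\binom{n/2^{j-1}}{n/2^j}\le 2^n$ sets shows that with probability $1-e^{-\Omega(n)}$ every $S$ gets at least $n$ pairs. Hence $X_j=O(n)$, and summing over phases gives recourse $O(\ell n)$. For the load, the non-adaptive game throws $O(n)$ balls into $n/2^{j-1}$ bins by two choices. By the heavily-loaded analysis of \cite{BCSV06_2choice_heavy-sicomp}, its final maximum is the average load plus $O(\log\log n)$ with high probability, where the average is $O(2^j)$ plus the incoming average. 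Inducting over phases, each phase adds $O(2^j)$ to the average part and $O(\log\log n)$ to the gap. Summing yields $O(2^\ell+\ell\log\log n)$, and a union bound over the $\ell\le\log n$ phases gives the stated high-probability guarantee.
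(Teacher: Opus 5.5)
Your outline is sound, and on one point it is more faithful than the paper. The paper starts each non-adaptive game from empty bins and compares only the loads created in phase $j$, which tacitly lets those loads drive the adaptive two-choice decision, whereas the actual rule compares total loads. Seeding the non-adaptive game with the phase-start loads, as you do, repairs this, and your recourse argument goes through (it uses only that an accepted pair in $S\times S$ puts a ball in $S$). The gap is that you never establish the invariant $L^{na}_i\ge L^a_i$. You call it the main obstacle and offer a rank-based majorization coupling or a layered induction, but carry out neither. A rank coupling would not even fit your simulation, whose acceptance test and stopping rule are index-based. Nothing heavy is needed. Rejected pairs only raise non-adaptive loads. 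Suppose that on an accepted pair $(i,i')$ the non-adaptive game picks $i'$ while the adaptive game picks $i$. With the same tie-breaking rule in both games, at least one inequality in $L^a_i\le L^a_{i'}\le L^{na}_{i'}\le L^{na}_i$ is strict, so $L^a_i+1\le L^{na}_i$ by integrality. This is exactly the argument of the paper's coupling lemma.

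The load step has a second gap. The heavily-loaded bound of Berenbrink et al.\ is for a process started from empty bins, while yours starts from an arbitrary configuration. Your assertion that its final maximum is the average plus $O(\log\log n)$ does not follow, and is false in general, since a heavy bin's excess need not be absorbed within $O(n)$ further balls. What your final summation actually uses is additive growth, and it follows from the same monotonicity. Two-choice decisions are unchanged by adding a constant to every load. So if $M_{j-1}$ denotes the maximum load at the start of phase $j$, the non-adaptive loads stay pointwise at most $M_{j-1}$ plus the loads of an empty-start two-choice process driven by the same pairs. With at most $40n$ balls into $n/2^{j-1}$ bins this gives $M_j\le M_{j-1}+O(2^j+\log\log n)$. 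With $M_0=O(\log\log n)$, the bound $O(2^\ell+\ell\log\log n)$ follows, matching the paper's $\sum_j B_j$.
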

The proof is similar to that of \Cref{thm:adaptive_balls_bins_n/2_rounds}.
We reduce the problem (via coupling) to a non-adaptive game, and then bound the parameters of this non-adaptive game.
The non-adaptive game is similar, only now we consider pairs of indices. For simplicity, consider $n/2$ rounds (i.e., $\ell=1$). In this new non-adaptive game, iteratively pick two indices independently and uniformly at random over $[n]$, place a ball in the least loaded of them, and stop when every subset of size $n/2$ contains at least $n$ pairs. Denote by $Y$ the number of inserted balls, and by $B$ the maximum load (which are random variables). 
We show in \Cref{appendix_two_choice} that \Cref{lem:simulation_uniform_j_game} still holds in this setting, using the same coupling,
and the number of rounds here is still $O(n)$ with high probability.
Thus, with high probability, the recourse is $O(n)$.
To bound the maximum load, recall that when throwing $m$ balls into $n$ bins using two choices allocations, if $m\geq n$, the maximum load is at most $\tfrac{m}{n}+O(\log\log n)$ with high probability~\cite{AzarBKU99,BCSV00_2choice_heavy}.
In our case, $m=10n$, thus the maximum load is $O(\log\log n)$, yielding \Cref{thm:adaptive_balls_bins_two_choice}.
A proof is provided in \Cref{appendix_two_choice}.

\section{$d$-split problem, $d=2$}\label{sec:2_split}
We now consider the $2$-split problem, played for $n/2$ steps.
That is, when a bin is deleted, the balls inside are partitioned into two uniformly random (remaining) bins. We show that in $\tfrac{n}{2}$ rounds, the recourse is $O(n)$ with high probability, and the maximum load is $\poly(\log n)$ with constant probability. While the constant $2$ in the number of rounds is arbitrary, bounding the recourse and maximum load when the number of rounds approaches $n$ remains open.
\subsection{Bounding the recourse}\label{sec:2_split_recourse}
\begin{theorem}\label{thm:2_split_adaptive}
    In $\tfrac{n}{2}$ rounds of the $2$-split game,
    the recourse is $O(n)$ with probability $1-\tfrac{1}{n^7}$.
\end{theorem}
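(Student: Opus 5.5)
We will follow the route sketched in the technical overview and compare the $2$-split game with an intermediate adaptive game $\ppay$. In $\ppay$, the balls of a deleted bin are thrown independently and uniformly to the surviving bins, exactly as in the uniform game. The recourse, however, is charged as $k+Ck\log k$ when the deleted bin holds $k$ balls.

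The proof has two claims: (i) the recourse of $2$-split is stochastically dominated by a random variable within total variation distance $n^{-8}$ of the recourse of $\ppay$; (ii) the recourse of $\ppay$ is $O(n)$ with probability $1-n^{-8}$. A union bound then gives the statement with probability $1-\tfrac1{n^7}$.

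\textbf{Claim (ii).} We reuse the coupling of Lemma~\ref{lem:simulation_uniform_j_game} with $j=1$. Every ball thrown in $\ppay$ is simulated by rejection sampling in the non-adaptive game, so the configuration of $\ppay$ is pointwise dominated, and the game cannot end before round $n/2$. Lemma~\ref{lem:non_adaptive_game_length} then bounds the number of thrown balls by $10n$. In the same non-adaptive game with $10n$ balls in $n$ bins, the maximum load is at most $K=O(\tfrac{\log n}{\log\log n})$ with probability $1-n^{-10}$.

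The extra payment is at most $\sum_{\text{deletions}} Ck\log k$, and here $k\le K$. I would bound it by a weighted count. A deleted bin with $k$ balls contributes $Ck\log k\le C\binom{k}{2}\cdot O(1)$, which is charged to the pairs of balls sharing a bin. In the dominating non-adaptive game, $\sum_i \binom{L_i^{na}}{2}$ is $O(n)$ with high probability, since it has mean $O(n)$ and concentrates by Bernstein's inequality (Lemma~\ref{lem:Bernstein}) after truncating loads at $K$. The one step to verify is that each pair is charged only $O(1)$ times overall; this holds because pairs that were thrown independently are counted fresh in the non-adaptive game.

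\textbf{Claim (i).} This is the main obstacle, and it rests on the toy problem. A group of $2\le k\le K$ balls starts in one bin. The adversary may repeatedly delete any bin holding at least $2$ of these balls, and each deletion splits that bin's balls evenly into two packets thrown independently. The goal is to show that until all $k$ balls are isolated from one another, the total movement is $O(k\log k)$ with probability $1-n^{-10}$.

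My plan for the toy problem is as follows. Each split halves the packet size, so a ball moves as part of a packet at most $\lceil\log k\rceil$ times via splits. A move is wasted only when a packet lands in a bin already containing other balls of the group. Since at most $k$ group-occupied bins exist among at least $n/2$ survivors, each throw collides with probability at most $2k/n$. Over all throws, the number of collisions is therefore dominated by a binomial with $O(k)$ trials and success probability $2k/n$. For $k\le K$, more than a constant number $c$ of collisions has probability $(k^2/n)^{c}\le n^{-10}$. Each collision adds at most $O(k\log k)$ movement, which gives $O(k\log k)$ overall.

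The outer argument is an induction over deletions. When the adaptive adversary deletes a bin, we view the balls it sends together as a group that follows the toy dynamics until the balls separate, and we charge this group's total future co-moved recourse upfront as $k+Ck\log k$. After the balls separate, each ball behaves like an independently thrown ball: its bin is uniform over the survivors and independent of the other balls of the group. This lets us couple it with $\ppay$, where the balls are thrown independently at the moment of deletion.

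The delicate point is interaction between different groups and the adversary's adaptivity. Balls of distinct groups sharing a bin are already paid for by their own groups' charges, so a deletion of a mixed bin should be decomposed group by group. The failure event, namely some toy instance exceeding $O(k\log k)$ or some group exceeding $K$, is union bounded over at most $n$ deletions, which gives total variation distance $n^{-8}$.
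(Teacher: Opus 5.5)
Your plan has the same architecture as the paper's proof: the game $\ppay$, the two claims joined by a union bound, the rejection-sampling coupling with Lemma~\ref{lem:non_adaptive_game_length} for claim (ii), the collision-counting toy problem, and an induction over deletions. However, the step that hands a group over to $\ppay$ fails as you state it. You assert that after a group separates, each ball ``behaves like an independently thrown ball \ldots\ independent of the other balls of the group.'' That is false. A fully separated group sits in $k$ distinct bins, while $k$ independent uniform throws collide with probability $\Theta(k^2/n)$. So no exact coupling makes the positions agree with $\ppay$. An approximate coupling pays about $k_t^2/n$ in total variation per group, which sums to $\Omega(1)$ against an adversary that deletes multi-ball bins. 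Your $n^{-8}$ budget covers only failures of the toy bound and of the group-size bound. Waiting for full separation causes a second problem. While some group balls still share a bin, the adversary may delete a bin where another group ball sits alone, moving it (possibly onto a group ball). Your toy problem, where only bins with at least two group balls are deleted, does not model this.

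The missing device is the paper's per-ball ``Teflon'' coupling between consecutive hybrids $\pi_{t-1}$ and $\pi_t$. All hybrids share one deletion sequence, chosen by the adversary from the $2$-split configuration. In $\pi_t$, the group $T_t$ is thrown independently at cost $O(|T_t|\log|T_t|)$. In $\pi_{t-1}$, a $T_t$-ball is declared normal the moment it is assigned to a packet containing no other $T_t$-ball, \emph{before} the throw. That packet is then sent to the current bin of the ball's copy in $\pi_t$. That bin is uniform over the survivors and unseen by the adversary, so the $2$-split dynamics are reproduced exactly, and the two games agree on the ball from then on. Only the recourse before normalization must be bounded, which is exactly what the toy problem bounds. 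Splitting each tracked group evenly inside a mixed bin is legitimate, since loads and recourse depend only on packet sizes.

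The rest needs routine repairs. First, Bernstein requires independent summands, but the loads $L^{na}_i$ are multinomial. Use McDiarmid on the truncated sum (bounded differences $O(K)$) or negative association, as the paper does with McDiarmid on $\sum_i L_i\ln L_i$. Second, in the toy problem the number of throws is $2(k-1+c)$, where $c$ is the number of collisions, so it is not $O(k)$ a priori. Instead note that the $c$-th collision occurs within the first $2(k-1+c)$ throws, so at least $c$ collisions has probability at most $\binom{2(k+c)}{c}(2k/n)^c$. Third, no ``each pair is charged $O(1)$ times'' argument is needed. Each deleted bin of $\ppay$ maps to a distinct bin of the non-adaptive game whose final load is at least its load at deletion.
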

As we mentioned in the introduction, our proof uses an adaptive balls into bins game, denoted $\ppay$, where there is an extra $O(k\log k)$ cost for moving $k\geq 2$ balls from a bin with $k$ balls, and the balls are thrown independently uniformly at random.
\begin{lemma}\label{lem:payment_for_splits_cost}
    In $n/2$ rounds of $\ppay$, the total cost is $O(n)$ with probability $1-\tfrac{1}{n^8}$.
\end{lemma}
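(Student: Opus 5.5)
The plan is to follow the template of \Cref{lem:simulation_uniform_j_game} and \Cref{lem:non_adaptive_game_length}, with one extra ingredient: bounding the surcharge $\sum O(k\log k)$, summed over deleted bins of load $k\geq 2$. In $\ppay$ the balls are thrown independently and uniformly. So the same rejection-sampling coupling used in \Cref{lem:simulation_uniform_j_game}, with $j=1$, embeds the whole adaptive $\ppay$ process, initial configuration included, into a non-adaptive game. In that game balls are thrown uniformly into $n$ bins until every subset of $n/2$ bins holds at least $n$ balls. Under the coupling, the load of every surviving bin in $\ppay$ is at most the load of the corresponding bin in the non-adaptive game. Moreover, by \Cref{lem:non_adaptive_game_length}, the non-adaptive game ends after at most $10n$ throws with probability $1-e^{-2n}$. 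This already gives that the plain part of the cost (the number of balls moved) is at most $10n$.

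For the surcharge, I write the cost of deleting a bin with $k$ balls as at most $c\,k\log k\cdot\mathbb{1}[k\geq2]\le c\,k^2$. The key observation is that the bin's load at deletion is at most its load in the non-adaptive game at that moment. This is dominated by its final load $B_i$ there, which counts all balls ever landing in bin $i$ among the $\le 10n$ throws. Each index $i$ is deleted at most once. So the total surcharge is at most $c\sum_{i\in[n]} B_i^2$, or more precisely $c\sum_i B_i\log B_i\,\mathbb{1}[B_i\ge 2]$.

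It then remains to show that $\sum_i B_i^2=O(n)$ with probability $1-n^{-9}$, for $10n$ balls thrown uniformly into $n$ bins. First I would truncate: by the standard max-load bound \cite{RaabS98}, $\max_i B_i\le C\log n$ with probability $1-n^{-10}$. On that event, replacing each $B_i$ by $\min(B_i,C\log n)$ changes nothing. Next, $\sum_i B_i^2$ is a function of the $10n$ independent throws, and moving a single ball changes it by at most $O(\log n)$ on the truncated version. The expectation is $E\sum_i B_i^2 = n(10+100-10/n)=O(n)$. McDiarmid (bounded differences), or Bernstein via Poissonization, would then give deviation $O(n)$ with probability $1-\exp(-\Omega(n/\log^2 n))$. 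Finally I union bound with the event that the game length is at most $10n$, being careful that the throw count is itself random. To handle this, I fix $10n$ pre-sampled throws and note that stopping earlier only decreases the $B_i$.

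I expect the main obstacle to be making the domination of ``load at deletion time'' by the final non-adaptive load fully rigorous within a single coupling. The adversary's choice depends on the history, but the coupling is pathwise, so monotonicity should hold pointwise. The truncation argument needed for the concentration step is the other delicate point. Combining the three failure probabilities yields the total cost $O(n)$ with probability $1-\tfrac{1}{n^8}$.
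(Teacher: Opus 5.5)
Your proposal is correct and follows essentially the same route as the paper: the rejection-sampling coupling of \Cref{lem:simulation_uniform_j_game}, the $10n$ length bound of \Cref{lem:non_adaptive_game_length}, domination of each deleted bin's load by its final non-adaptive load, and McDiarmid for $10n$ balls in $n$ bins. You rightly include the initial $n$ balls in the coupling, since the surcharge depends on the full load at deletion; you also handle the random stopping time via pre-sampled throws, which the paper glosses over.

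The only difference is the last step. The paper bounds $\sum_i L_i\log L_i$ directly (\Cref{lem:sum_LlogL_classic_balls}), whose bounded differences are already $O(\log n)$, so no truncation is needed. Your cruder bound $k\log k\le k^2$ makes the expectation trivial, but it forces the truncation at $C\log n$, which you handle correctly.
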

\begin{lemma}\label{lem:reduction_2_split_to_payment}
    The total cost in $n/2$ rounds of $\ppay$ stochastically dominates a RV that is at total variation distance at most $\tfrac{1}{n^8}$ from the recourse in $n/2$ rounds of the $2$-split game.
\end{lemma}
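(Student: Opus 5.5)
We prove \Cref{lem:reduction_2_split_to_payment} by a coupling in which every packet of $k\ge 2$ balls created in the $2$-split game is ``charged in advance'' for its own future fragmentation. Think of a packet of size $k$ as a group of balls that must all move together until the group is split into singletons. Once a ball becomes isolated (it is the only ball of its original packet in its bin, or more precisely it moves as a singleton), its subsequent motion is a uniform throw to a surviving bin. That is exactly the motion of a ball in $\ppay$.

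So the plan is to replace the history of each packet by the following: an up-front payment of $O(k\log k)$ for the splits internal to that packet, followed by independent uniform throws of the resulting singletons. The first step is the toy problem from the overview. There are $k$ balls, with $2\le k\le O(\tfrac{\log n}{\log\log n})$, starting in one bin. An adaptive adversary repeatedly deletes a bin holding at least two of these balls, and the balls are split evenly and each half is thrown uniformly among the remaining (at least $n/2$) bins.

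We show the total recourse of the toy problem is $O(k\log k)$ with probability $1-n^{-10}$. The argument is as follows. Each split halves packet sizes, so a ball participates in about $\log k$ ``genuine'' splits. Extra recourse arises only when two sub-packets land in the same bin and re-merge. Each throw hits a bin already containing one of the other at most $k$ tracked groups with probability $O(k/n)$. Hence the number of collisions is dominated by a $\mathrm{Bin}(O(k\log k),O(k/n))$ variable. The probability of $\Omega(1)$ (say $c$) collisions is at most $(k^2\log k/n)^{c}\le n^{-10}$ for a constant $c$, since $k=\mathrm{polylog}(n)$. With $O(1)$ collisions the cost stays $O(k\log k)$.

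The adversary may also delete bins containing foreign balls together with some of our balls. We handle this by decomposing each deleted bin's content by original packet and treating each tracked sub-packet separately. The two halves are thrown as whole packets, so balls of different origins that travel together are only charged once per packet. We allot the cost of moving each sub-packet to its originating packet.

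Next we handle large packets. Using \Cref{thm:adaptive_balls_bins_arbitrary_rounds}-style arguments (or directly the $\ppay$ coupling), the probability that any bin ever holds more than $C\tfrac{\log n}{\log\log n}$ balls is at most $n^{-9}$. We put this bad event, together with all toy-problem failures, into the $n^{-8}$ total variation slack via a union bound over the at most $n$ packets.

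Then we argue by induction on rounds. We build a coupling between the $2$-split game and $\ppay$, fed by the same adversary, in which surviving isolated balls occupy the same bins in both games. Whenever the $2$-split adversary deletes a bin, $\ppay$ deletes the same bin. The $k$ balls in it (counting whole packets) are paid $k+O(k\log k)$ in $\ppay$ and thrown independently. The future moves of these balls in $2$-split, until isolation, are covered by that payment. After isolation they follow the $\ppay$ throws.

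Outside the bad event, the $2$-split recourse is therefore at most the $\ppay$ cost. Hence we define the dominated random variable as the $2$-split recourse on the good event and $0$ otherwise. This variable is within total variation $n^{-8}$ of the true recourse and is pointwise at most the $\ppay$ cost, so \Cref{lem:stoch_dominat_equivalent_coupling} gives the domination.

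The main obstacle is making this coupling consistent. Non-isolated groups must be placed in $2$-split positions while $\ppay$ scatters them independently. The induction has to show that the adversary gains nothing from the mismatch between the two configurations; in particular, bin loads differ, which can change which bins the adversary deletes. To resolve this, the proof would instead let the $\ppay$ adversary simulate the $2$-split adversary using both its own state and the hidden toy-problem randomness. The claim is then only that $\ppay$'s cost is at least the $2$-split cost, not that the two configurations agree.
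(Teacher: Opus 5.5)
\textbf{There is a genuine gap: the coupling.} Your toy-problem bound and your final packaging match the paper. The packaging is: take the $2$-split recourse on the good event, $0$ otherwise, and apply \Cref{lem:stoch_dominat_equivalent_coupling}. But the step that actually carries the lemma, a consistent coupling between $2$-split and $\ppay$, is never constructed. The direct coupling you sketch cannot be made consistent.

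Your charging scheme needs isolated balls to sit at the same bin index in both games; otherwise their $2$-split moves are not matched by $\ppay$ moves. When a $2$-split bin is deleted, it typically holds balls of several origins. One half can then contain a ball $a$ that has just become alone among the balls of its packet $A$, together with a ball $b$ that must also follow an independent $\ppay$ position. Here $b$ is either a ball just isolated from another packet $B$, or a previously isolated ball that $\ppay$ is now re-throwing independently. The half lands in one bin, but $a$ and $b$ need two independent locations.

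This is a typical event, not a rare one, since bins have constant average load. So it cannot be absorbed into the $n^{-8}$ slack. Relatedly, your sentence ``the $k$ balls in it are paid $k+O(k\log k)$ in $\ppay$'' conflates the $\ppay$ content of the deleted bin with its $2$-split content. These differ exactly when unresolved packets exist.

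Your proposed fix is to let the adversary act on the $2$-split state using randomness hidden from $\ppay$. That only handles adaptivity, and the paper does the same thing by fixing an adversary $\AAA$ that sees only the $2$-split configuration. It does not give a per-ball accounting once you allow the configurations to disagree.

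\textbf{The missing idea is the paper's hybrid argument.} It defines games $\pi_0,\pi_1,\dots,\pi_{n/2}$ with $\pi_0$ the $2$-split game. In $\pi_t$, the balls of the $t$-th deleted bin not already in $T_1\cup\dots\cup T_{t-1}$ form a ``Teflon'' set $T_t$. These balls are thrown independently and prepaid $O(|T_t|\log|T_t|)$. The paper couples only $\pi_{t-1}$ with $\pi_t$.

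In each such coupling, only the single group $T_t$ has different positions in the two games. A set in $\pi_{t-1}$ containing exactly one $T_t$-ball is sent to that ball's location in $\pi_t$, and no competing demand arises. The toy problem gives $Y_t\ge Y_{t-1}$ except with probability $n^{-9}$. A union bound over the $n/2$ steps gives $n^{-8}$, and $\ppay$ dominates $\pi_{n/2}$.

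\textbf{Packet sizes.} The hybrid also bounds packet sizes correctly. $|T_t|$ is at most a load in $\pi_{t-1}$, where every previously moved ball was thrown independently, so it is $O(\log n)$ by \Cref{thm:adaptive_balls_bins_n/2_rounds}. Your claim that no bin ever exceeds $C\frac{\log n}{\log\log n}$ is unjustified if it refers to the $2$-split game itself. For that game the paper only proves a polylogarithmic bound, and only with constant probability.

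\textbf{A smaller point in the toy problem.} The number of throws depends on the number of collisions, so do not fix $O(k\log k)$ throws in advance. Argue as the paper does: there are $O(k)$ deletions between consecutive collisions, and each collision-free stretch costs $O(k\log k)$ from an arbitrary starting configuration.
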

\Cref{thm:2_split_adaptive} follows from these two lemmas and a union bound. We first prove \Cref{lem:reduction_2_split_to_payment}. The proof of \Cref{lem:payment_for_splits_cost} is provided in \Cref{sec:proof_ppay}.

\subsubsection{Proof of \Cref{lem:reduction_2_split_to_payment}}
Recall that our proof uses a toy problem, as follows.
There are $k\geq 2$ balls, initially located at the same bin. In rounds, the adversary deletes a bin that contains at least $2$ balls, and the balls are split evenly to two sets that are thrown independently. The game ends when the adversary cannot delete bins, that is, every bin has at most one ball.
\begin{lemma}\label{lem:recourse_toy}
    If $k= O(\log n)$, then with probability $1-\tfrac{1}{n^{10}}$, the recourse in the toy problem is $O(k\log k)$.
\end{lemma}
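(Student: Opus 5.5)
The plan is an amortized potential argument: with no collisions, each ball is halved about $\log k$ times, and a collision (two groups of toy balls meeting in one bin) costs only $O(k)$ extra potential. It then remains to show that, with probability $1-n^{-10}$, only a constant number of collisions occur. Throughout, a \emph{packet} is one of the two groups into which a deleted bin's balls are split, and I assume, as in the setting where the lemma is used, that at least $n/2$ bins survive at every moment of the toy game.

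\textbf{Potential.} For a configuration in which bin $b$ holds $s_b$ of the toy balls, define
\[
\Phi=\sum_b 2 s_b\log_2 s_b, \qquad \Phi_0 = 2k\log_2 k .
\]
First I would verify the following two facts.
\begin{itemize}
\item \emph{Deletion.} Deleting a bin with $s\ge 2$ balls moves $s$ balls, splitting them into packets of sizes $\lceil s/2\rceil$ and $\lfloor s/2\rfloor$. If neither packet lands on an occupied bin, $\Phi$ drops by at least $s$. This is the elementary inequality
\[
2\bigl(s\log s-\lceil s/2\rceil\log\lceil s/2\rceil-\lfloor s/2\rfloor\log\lfloor s/2\rfloor\bigr)\ge s \quad\text{for } s\ge 2 ,
\]
which I would check directly for small $s$ (e.g.\ $s=2,3$) and asymptotically otherwise. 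It is why I use the factor $2$ rather than plain $s\log s$.
\item \emph{Collision.} If a packet of size $a$ lands on a bin already holding $c$ toy balls, $\Phi$ increases by
\[
2\bigl((a+c)\log(a+c)-a\log a-c\log c\bigr)\le 2(a+c)\le 2k .
\]
\end{itemize}
Since $\Phi\ge 0$ always, summing these gives
\[
\text{recourse}\le \Phi_0+2k\cdot(\#\text{collisions})=2k\log_2 k+2k\cdot(\#\text{collisions}).
\]
So it suffices to show that, with probability $1-n^{-10}$, there are at most $10$ collisions.

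\textbf{Bounding collisions.} At any time at most $k$ bins contain toy balls, and at least $n/2$ bins survive. Hence, conditioned on the entire history (including the adversary's adaptive choice of which bin to delete), each thrown packet lands on an occupied bin with probability at most $2k/n$. To avoid conditioning on the random total number of throws, I would cap the count at $M=C k\log k$ throws for a large constant $C$. The number of collisions among the first $M$ throws is then stochastically dominated by $\mathrm{Bin}(M,2k/n)$, via a sequential coupling in the style of \Cref{lem:stoch_dominat_equivalent_coupling}. Consequently,
\[
\Pr(\text{at least }11\text{ collisions among the first }M\text{ throws})\le \binom{M}{11}\Bigl(\frac{2k}{n}\Bigr)^{11}\le\Bigl(\frac{2Mk}{n}\Bigr)^{11}.
\]
Because $k=O(\log n)$, we have $2Mk/n=\tilde O(\log^2 n/n)$, so this probability is at most $n^{-10}$ for large $n$.

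\textbf{Closing the truncation.} On the complementary event, consider any prefix of the game that uses at most $M$ throws. Each throw moves at least one ball, so the number of throws is at most the recourse, which by the potential bound is at most $2k\log k+20k<M$. Therefore the game ends before the $M$-th throw, the collision bound covers the whole game, and the recourse is $O(k\log k)$.

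\textbf{Main obstacle.} The delicate step is the probabilistic one against an adaptive adversary: the $2k/n$ bound on the collision probability must hold conditionally on every history, and the truncation at $M$ throws must be set up so that the binomial domination is legitimate. The potential inequalities themselves are routine calculus once the factor $2$ is in place.
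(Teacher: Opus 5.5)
Your proof is correct, but it takes a different route from the paper's.

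\textbf{The paper's route.} The paper relaxes the toy problem to a game $\pi'$ in which the adversary may completely rearrange the $k$ balls after every collision. It then cuts $\pi'$ into mini-games, each ending at its first collision. Regardless of its starting configuration, a mini-game lasts $O(k)$ rounds, costs $O(k\log k)$, and suffers a collision with probability $O(k^2/n)$. Hence the number of mini-games is dominated by a geometric variable and exceeds $20$ with probability $O(k^2/n)^{20}$.

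\textbf{Your route.} You use the entropy-type potential $\Phi=\sum_b 2s_b\log_2 s_b$ to obtain a deterministic, pathwise inequality: recourse $\le 2k\log_2 k + 2k$ times the number of collisions. A collision therefore costs only $O(k)$ extra, rather than restarting a whole $O(k\log k)$ mini-game, and you count collisions throw by throw.

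\textbf{The price of your route.} The total number of throws is not a priori bounded. This is why you need the truncation at $M=\Theta(k\log k)$ throws and the self-consistency step: on the good event, the pathwise bound itself forces fewer than $M$ throws. The paper sidesteps this entirely, because each mini-game has deterministically bounded length and the geometric domination absorbs the random number of restarts.

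\textbf{Comparison.} Your version gives explicit constants and does not need to strengthen the adversary. The paper's version is more modular and never reasons about a truncated horizon. Both rest on the same fact: conditioned on any history, a thrown packet hits one of the at most $k$ occupied bins with probability at most $2k/n$, since at least $n/2$ bins survive.

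\textbf{A small tightening.} Your ``check small $s$, then asymptotically'' step can be made uniform. With $h$ the binary entropy in bits, the bracketed quantity equals $s\,h(\lceil s/2\rceil/s)$. Since $\lceil s/2\rceil/s\in[1/2,2/3]$ for $s\ge 2$, it is at least $s\,h(2/3)>s/2$ for every $s\ge 2$.
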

The proof of \Cref{lem:recourse_toy} is given in \Cref{sec:proof_toy_problem}.
We now prove \Cref{lem:reduction_2_split_to_payment} using \Cref{lem:recourse_toy}.
\begin{proof}[Proof of \Cref{lem:reduction_2_split_to_payment}.]

To connect $2$-split to $\ppay$, both played for $n/2$ rounds, we consider a sequence of hybrid games, 
$\pi_0,\pi_1,\dots,\pi_{n/2}$, where:
\begin{enumerate}
    \item[(1)] Game $\pi_0$ is exactly the $2$-split game.
    \item[(2)] For every $t\in[n/2]$, with high probability, it holds that the cost of $\pi_t$ is at least the cost of $\pi_{t-1}$.
    \item[(3)] The cost of $\ppay$ stochastically dominates the cost of $\pi_{n/2}$.
\end{enumerate}

Recall that in the game $\ppay$, whenever a bin with $k$ balls is deleted we incur an extra cost of $O(k\log k)$, and the balls are thrown independently uniformly at random. The sequence of hybrid games we consider $(\pi_0,\pi_1,\dots,\pi_{n/2})$ gradually interpolate between the $2$-split game and this $\ppay$ game. Intuitively, but somewhat inaccurately, in the game $\pi_t$, the \underline{first $t$ deletions} behave like in $\ppay$ (balls are thrown independently and our recourse has a logarithmic blowup), and the rest of the $\frac{n}{2}-t$ deletions behave like in the 2-split game (balls are partitioned into two groups). 
We now make this formal, starting with the formalization of the game $\pi_1$.

\paragraph*{The game ${\boldsymbol{\pi_1}}$.}

The game $\pi_1$ is identical to the $2$-split game, with one key modification regarding the balls from the {\em first} bin that is deleted. Suppose that the first deletion is of a bin with $k$ balls. We pay an extra cost of $O(k\log k)$ and redistribute the $k$ balls {\em independently and uniformly at random}. From now on, these $k$ balls are special; we call these $k$ balls ``Teflon'' and treat them differently throughout the rest of the game $\pi_1$. Unlike normal balls, Teflon balls do not ``stick''. That is, throughout the rest of the game $\pi_1$, when a bin is deleted, if this bin contains Teflon balls, then these Teflon balls are thrown independently and uniformly at random into the remaining bins. Note that ``normal'' balls from the deleted bin are still split into two sets and thrown as groups. Teflon balls may eventually revert to being ``normal'' balls based on  coupling conditions we describe next between the executions of $\pi_1$ and the 2-split game.

\paragraph*{The coupling between 2-split and ${\boldsymbol{\pi_1}}$.}

 Fix an adversary $\AAA$ for playing in the 2-split game (i.e., in $\pi_0$). We consider a coupled execution of 2-split and $\pi_1$ defined as follows. The initial configuration is identical in both games, obtained by throwing $n$ balls uniformly into the $n$ bins. The bin deletion sequence is kept identical in both games, and is determined by $\AAA$ based on the configuration of the balls in the 2-split game ($\AAA$ does not see the configuration of the balls in $\pi_1$). The first deletion made by $\AAA$ defines the identity of the Teflon balls in $\pi_1$. We stress that in the $2$-split game, Teflon balls behave as any other ball. To emphasize this, whenever we refer to these balls in the $2$-split game we will call them the ``so-called Teflon balls''. 
 Throughout the execution, whenever a bin is deleted, we redistribute its balls as follows in each of the games:
 \begin{itemize}
     \item In both games, normal balls are split evenly and thrown to the same randomly chosen bin index.

     \item In $\pi_1$, Teflon balls are thrown uniformly and independently.

     \item In the $2$-split game, Teflon balls from the deleted bin are split evenly to two sets and attached to the two sets of the normal balls. The first time when the bin of so-called Teflon-ball $b$ is deleted in the $2$-split game, and $b$ is assigned to a set that includes no other so-called Teflon-balls, we declare $b$ as normal (not Teflon anymore in both games), and throw its set to the bin of the same index as the bin containing $b$ in $\pi_1$. That is,  the randomness used to pick the identity of the new bin is taken from $\pi_1$.
    Since $b$ was previously thrown independently in $\pi_1$, then from the viewpoint of the $2$-split game, the thrown set is distributed independently, as desired.
 \end{itemize}

    Let $X$ denote the total recourse that $\AAA$ achieves in $n/2$ rounds of the $2$-split game, and let $Y_1$ denote the incurred cost in the coupled execution of $\pi_1$. We show that with high probability, $Y_1\geq X$.

\begin{claim}\label{claim:2_split_reduction_basis}
        With probability $1-\tfrac{1}{n^{9}}$, $Y_1\geq X$.
\end{claim}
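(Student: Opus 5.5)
The plan is to compare the two costs ball by ball. Let $k$ be the number of balls in the first deleted bin. For every ball that is never so-called Teflon, and for every so-called Teflon ball after it has been declared normal, the coupling gives identical behavior in both games. Such a ball occupies the same bin index in both games and moves at exactly the same deletions. Hence its contribution to $X$ equals its contribution to $Y_1$.

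So it suffices to compare the $k$ so-called Teflon balls while they are still Teflon. In $\pi_1$, every move of a Teflon ball costs $1$, so its contribution is nonnegative. On top of that, $\pi_1$ pays the extra $O(k\log k)$ at the first deletion. It therefore suffices to show that, with probability $1-\tfrac{1}{n^9}$, the total number of moves made in the $2$-split game by so-called Teflon balls before they are all declared normal is at most $O(k\log k)$. The moves made after a ball is declared normal are matched, and the final move, at which it is declared normal, can be charged to the same count.

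To bound this count I will reduce to the toy problem of \Cref{lem:recourse_toy}. In the $2$-split game, consider only the so-called Teflon balls. Whenever a bin holding at least two of them is deleted, they are split evenly into two sets, each thrown to an independent uniform bin, since the normal-ball randomness is fresh. A bin holding exactly one so-called Teflon ball $b$, when deleted, assigns $b$ to a set with no other Teflon balls. At that point $b$ is declared normal and leaves the accounting. Deletions of bins with no Teflon balls cost nothing in the count.

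Thus the restriction of the process to the Teflon balls is exactly the toy problem with $k$ balls. The adversary there is the one induced by $\AAA$ together with the randomness of the normal balls. One subtlety needs care: the toy adversary may only delete bins containing at least $2$ balls, whereas here singleton Teflon bins can also be deleted. Such a deletion costs one move per ball, at most $k$ in total, since each ball is then declared normal. I therefore add $k$ to the bound. Another is that the split in the 2-split game is even over all balls in the bin, not just the Teflon ones. I would handle this either by noting that \Cref{lem:recourse_toy}'s proof tolerates any split that is nonempty on both sides, or by generalizing the toy adversary to choose the split sizes, and I would check the toy-problem analysis for this. Uniformity over surviving bins is the same in both.

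Next comes the issue of large $k$. \Cref{lem:recourse_toy} requires $k=O(\log n)$. With probability $1-n^{-10}$, the initial uniform configuration has maximum load $O(\log n/\log\log n)$, so $k=O(\log n)$. A union bound with the toy-problem failure probability $n^{-10}$ gives failure probability at most $2n^{-10}\le n^{-9}$, with the constant in $O(k\log k)$ in $\pi_1$ chosen to match the toy bound plus $k$.

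I expect the main obstacle to be making the reduction to the toy problem rigorous. The toy adversary must be a valid adaptive adversary even though its choices depend on the normal balls and on $\AAA$. The uneven-split and singleton-deletion discrepancies also have to be handled as above.
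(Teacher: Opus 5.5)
Your plan is the paper's proof. Normal balls behave identically under the coupling, the $\pi_1$-moves of Teflon balls are dropped from $Y_1$, and the $2$-split moves of the so-called Teflon balls until they become normal are bounded via \Cref{lem:recourse_toy}. A union bound with the event $k=O(\log n)$ then finishes.

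The one step that fails as written is your handling of the uneven split, and neither of your two patches works. The $O(k\log k)$ bound in \Cref{lem:recourse_toy} comes from the fact that, between collisions, a ball's group halves every time it moves, so each ball moves $O(\log k)$ times per mini-game. Suppose the split may be any nonempty one, or may be chosen by the toy adversary. Then the adversary can peel off one ball at a time. A single collision-free mini-game, which occurs with probability $1-O(k^2/n)$, then already costs $k+(k-1)+\dots+2=\Theta(k^2)$. Moreover, if only the overall split of a bin is even, a bin with $c$ Teflon balls and at least $c$ normal balls may send all its Teflon balls to the same half, so the Teflon group does not split at all.

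The missing observation is that the game does not dictate which balls go into which half. The $2$-split process (loads, recourse, and what $\AAA$ sees) depends only on the sizes of the halves. So the coupling is free to assign balls:
\begin{itemize}
\item split the Teflon balls of the deleted bin evenly among themselves;
\item split the normal balls evenly among themselves;
\item attach the larger Teflon half to the smaller normal half, so the overall split is still even.
\end{itemize}
This is exactly the paper's coupling rule (``Teflon balls from the deleted bin are split evenly to two sets and attached to the two sets of the normal balls''). With it, the Teflon subprocess performs the even splits of the toy problem.

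Two smaller remarks. First, your singleton-deletion concern never materializes. A ball is declared normal as soon as its set contains no other Teflon ball, so a still-Teflon ball always shares its bin with another one; adding $k$ is harmless anyway. Second, the Teflon subprocess is dominated by the toy problem rather than identical to it. In the toy problem, isolated balls stay in play and may collide again, which only increases its cost.
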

    \begin{proof}
        By the invariant, the recourse of the normal balls is the same. We ignore the contribution to $Y_1$ from the additional recourse the Teflon-balls incur in $\pi_1$.
        Thus, we only have to bound the recourse of the so-called Teflon balls in the $2$-split game until they become normal, and compare it to the extra $O(k\log k)$ cost paid in $\pi_1$ at the first round.
        By standard arguments, with probability $1-\tfrac{1}{n^{10}}$, the maximum load in the initial distribution of $n$ balls into $n$ bins is $O(\log n)$, 
        and thus $k\leq O(\log n)$. Observe that in the special case where $k=1$, the two games are identical, hence we consider $k\geq 2$.
        The recourse of the so-called Teflon-balls in the $2$-split game 
        is stochastically dominated by the recourse in the toy problem, which by \Cref{lem:recourse_toy} is $O(k\log k)$ with probability $1-\tfrac{1}{n^{10}}$. By a union bound, we have $Y_1\geq X$ with probability $1-\tfrac{2}{n^{10}}$, which concludes the proof.
    \end{proof}

\paragraph*{The sequence of coupled games.}

    We now inductively define the sequence of coupled games $\pi_2,\dots,\pi_{n/2}$, where each game $\pi_i$ introduces an additional set of Teflon balls, denoted $T_i$ (where $T_1$ is the set of Teflon balls w.r.t.\ $\pi_1$ and where $T_0=\emptyset$). For $t\geq 2$, we define the game $\pi_t$, and couple it with $\pi_{t-1}$, similar to the construction above. The two games $\pi_{t-1}$ and $\pi_t$ are played the same for $t-1$ rounds. In round $t$, we define $T_t$ to be the set of all balls from the deleted bin that are not in $\cup_{i=1}^{t-1}T_i$. That is, $T_t$ contains all normal balls (w.r.t.\ $\pi_{t-1}$) that are deleted in round $t$. Now, from round $t$ onward:

    \begin{itemize}
        \item In $\pi_{t-1}$, balls that are not in $\cup_{i=1}^{t-1}T_i$ (aka normal w.r.t.\ $\pi_{t-1}$) are played as in $2$-split, and balls in $\cup_{i=1}^{t-1}T_i$ (aka Teflon) are played as independent. Some Teflon-balls may become normal, depending on the coupling with $\pi_{i}$, for $i\leq t-2$.

        \item The only difference between $\pi_{t-1}$ and $\pi_t$ is about the balls in $T_t$, where in $\pi_t$ these are played as Teflon-balls, and have extra cost $O(|T_t|\log |T_t|)$.
        
        \item In $\pi_{t-1}$, whenever a bin containing $T_t$-balls is deleted, we split the $T_t$-balls in that bin between the two sets. When a $T_t$-ball $b$ is assigned to a set that contains no other $T_t$-ball in $\pi_{t-1}$, we throw its set to the bin of the same index as the bin containing $b$ in $\pi_t$, and remove $b$ from $T_t$.
        As before, from the viewpoint of $\pi_{t-1}$, this assignment is independent.
    \end{itemize}

    \begin{claim}
        Let $t\in [n/2]$.
        With probability $1-\tfrac{1}{n^9}$, $Y_t\geq Y_{t-1}$.
    \end{claim}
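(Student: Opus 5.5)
The plan is to mirror the proof of \Cref{claim:2_split_reduction_basis}, now comparing $\pi_{t-1}$ with $\pi_t$ instead of $\pi_0$ with $\pi_1$. Fix the adversary and the coupled execution described above. The two games are identical for the first $t-1$ rounds, so their costs agree there. From round $t$ on they differ only in the treatment of the balls of $T_t$. Every ball outside $T_t$ follows the same trajectory in both games. Normal balls share the split-and-throw randomness. Balls of $\cup_{i<t}T_i$ are thrown independently in both games, and whenever one reverts to normal, the coupling with the earlier $\pi_i$ applies identically to both.

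The $T_t$-balls that have been released from $T_t$ follow the same trajectory too. Once $b$ is removed from $T_t$, its set in $\pi_{t-1}$ is thrown to the bin index that $b$ occupies in $\pi_t$, and from then on $b$ is a normal ball, or one governed by an earlier coupling, in both games. I must check that this is consistent. In $\pi_{t-1}$, $b$'s set may also contain normal balls, and those now land where $\pi_t$ placed $b$. Since the throw is uniform and independent of the history as seen by $\pi_{t-1}$, the marginal law of $\pi_{t-1}$ is unchanged. As in the $\pi_1$ case, I will present this as resampling: the destination randomness of that throw is taken from $\pi_t$.

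Hence $Y_t - Y_{t-1}$ equals the extra payment $O(|T_t|\log|T_t|)$ made by $\pi_t$, plus the nonnegative independent-throw recourse of $T_t$-balls in $\pi_t$, minus the recourse that the $T_t$-balls incur in $\pi_{t-1}$ while still in $T_t$. It therefore suffices to show that, with probability $1-n^{-9}$, this last recourse is $O(|T_t|\log|T_t|)$.

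\textbf{Bounding $|T_t|$.} $|T_t|\le$ the load of the bin deleted at round $t$ in $\pi_{t-1}$, restricted to balls that are normal in $\pi_{t-1}$. I would bound this by $O(\log n)$ with probability $1-n^{-10}$. The cleanest route seems to be that the normal balls of the round-$t$ bin are exactly the normal balls of the corresponding bin in $\pi_0$ that are not in $\cup_{i<t}T_i$. So I need a load bound in $2$-split itself at round $t$, and an a priori bound is not available, since the max-load analysis comes later and only with constant probability. This is the main obstacle. My first attempt is to observe that $T_t$ consists only of balls that became normal again after release, together with original untouched balls. I would then argue via the coupling with the independent-throw process, as in \Cref{lem:simulation_uniform_j_game}, that in any bin the number of such balls is dominated by the load in the non-adaptive uniform game with $10n$ balls, which is $O(\log n)$ w.h.p. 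If this fails, I would weaken the claim to condition on a global event that every deleted set of fresh normal balls has size $O(\log n)$, and union bound at the end.

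\textbf{Applying the toy problem.} Given $|T_t| = k\le O(\log n)$ (and $k\ge 2$, since for $k\le 1$ the games coincide), the $T_t$-balls in $\pi_{t-1}$, tracked until each is isolated in its set, move exactly as in the toy problem. They start together, are split evenly whenever their bin is deleted, and each set is thrown to an independent uniform surviving bin. The adversary's choices only select which bins to delete, so their recourse is stochastically dominated by the toy-problem recourse against the worst adversary. Extra co-located $T_t$-balls arriving in a shared bin only merge sets, which a toy adversary could also simulate; I would verify this domination carefully. \Cref{lem:recourse_toy} then gives $O(k\log k)$ with probability $1-n^{-10}$. A union bound with the load event gives $Y_t\ge Y_{t-1}$ with probability $1-n^{-9}$.
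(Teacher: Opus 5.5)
Your proposal is correct and follows the paper's proof. Only the $T_t$-balls are treated differently in $\pi_{t-1}$ and $\pi_t$, $|T_t|=O(\log n)$ with probability $1-n^{-10}$, and the recourse of the $T_t$-balls in $\pi_{t-1}$ until release is dominated by the toy problem, so \Cref{lem:recourse_toy} and a union bound finish. The ``main obstacle'' you flag is resolved exactly by your first attempt, which is the paper's argument: in $\pi_{t-1}$ every ball deleted in rounds $1,\dots,t-1$ lies in some $T_i$ with $i\le t-1$ and is thrown independently, so just before round $t$ the configuration of $\pi_{t-1}$ is that of the uniform adaptive game, whose maximum load (an upper bound on $|T_t|$) is $O(\log n)$ by \Cref{thm:adaptive_balls_bins_n/2_rounds}; you need neither a load bound for $2$-split itself nor any weakening of the claim.
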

    \begin{proof}
        This is a generalization of \Cref{claim:2_split_reduction_basis}, and the proof is similar.
        The only difference between $Y_{t-1}$ and $Y_t$ is in the cost and recourse caused by the Teflon balls $T_t$. We ignore the contribution to $Y_t$ from the additional recourse the Teflon balls in $T_t$ incur in $\pi_t$.
        Thus, we only compare the recourse of the balls in $T_t$ in $\pi_{t-1}$ until they become normal, with the cost $O(|T_t|\log |T_t|)$ paid in $\pi_t$.
        
        First, observe that with probability $1-\tfrac{1}{n^{10}}$, we have $|T_t|\leq O(\log n)$, by the following.
        For every $i\leq t-1$, the balls in a deleted bin in $\pi_{t-1}$ are distributed independently.
        Therefore, the maximum load before round $t$, and hence $|T_t|$, is stochastically dominated by the maximum load in uniform adaptive balls into bins played for $n/2$ rounds.
        Hence, by \Cref{thm:adaptive_balls_bins_n/2_rounds},
        $|T_t|\leq O(\log n)$ with probability $1-\tfrac{1}{n^{10}}$.

        The recourse in $\pi_{t-1}$ of the $T_t$-balls until they become normal, is stochastically dominated by the toy problem, with $k=|T_t|$. By \Cref{lem:recourse_toy} and a union bound, with probability $1-\tfrac{1}{n^9}$, the recourse in $\pi_{t-1}$ of the $T_t$-balls until they become normal is at most $O(|T_t|\log |T_t|)$.
        Under this same event, we have $Y_t\geq Y_{t-1}$, concluding the proof of the claim.
    \end{proof}
    To conclude the proof of \Cref{lem:reduction_2_split_to_payment}, observe that the cost of $\ppay$ stochastically dominates the cost of $\pi_{n/2}$, since the payments are only larger. (Because in  $\pi_{n/2}$ we do not necessarily incur the logarithmic blowup in the cost on all the balls from a deleted bin.)
    By a union bound, we have that with probability $1-\tfrac{1}{n^8}$, the sequence $Y_t$ is non-decreasing, and thus the cost of $\ppay$ is at least that of $2$-split.
\end{proof}

\subsubsection{Proof of \Cref{lem:recourse_toy}, bounding the recourse in the toy problem}\label{sec:proof_toy_problem}

\begin{proof}[Proof of \Cref{lem:recourse_toy}]
    Recall that in the toy problem, the initial configuration is that we have $2\leq k\leq O(\log n)$ balls in one bin, and the other bins are empty. Then, in rounds, the adversary deletes a bin
containing at least 2 balls, and the balls are split evenly to two sets that are thrown independently. The game ends when all bins contain at most $1$ ball.
    
    Towards analyzing the recourse, let us consider a modified variant of this toy problem, which we call $\pi'$, where the adversary has the additional power to determine the initial configuration for the $k$ balls. Furthermore, whenever a collision happens (i.e., whenever we throw a ball into a non-empty bin), the adversary can override the current configuration and completely rearrange the $k$ balls in the living bins. Clearly, the cost in $\pi'$ stochastically dominates the recourse in the toy problem, so it suffices to bound the cost in $\pi'$. To this end, we express $\pi'$ as a sequence of ``mini-games'' defined as follows:

    \paragraph*{Mini-game.}
    In a mini-game, the adversary initially picks a configuration for the $k$ balls in the living bins. Then, in every round, the adversary deletes a bin that contains at least $2$ balls, and the balls are split evenly to two sets that are thrown independently.
    Now, if one of these sets is thrown to a non-empty bin (i.e., there is a collision), we conclude the current mini-game and start a new one.
    Otherwise, we play until all bins contain at most $1$ ball, at which point we conclude the current mini-game without starting a new one (this corresponds to the end of $\pi'$).

    \medskip
    
    It is easy to see that a mini-game takes at most $O(k)$ rounds, and the actual recourse in a mini-game is bounded by $O(k\log k)$. 
Thus, to bound the total cost in $\pi'$, it suffices to show that the number of mini-games is small with high probability. That is, we want to show that the probability that a mini-game ends (i.e., there is a collision) without $\pi'$ ending is small.
    
Recall we have at least $n/2$ bins. Consider a round in a mini-game where the $k$ balls are located in $i$ bins. The adversary now deletes one of these bins, and the probability that one of the two sets we throw collide with a non-empty bin is at most $\tfrac{2i}{n/2}=O(\tfrac{k}{n})$. As there are at most $O(k)$ rounds in the mini-game, no matter what is the distribution that the adversary picked at the start of the mini-game, the probability of a collision is at most $O(\tfrac{k^2}{n})$.

    The total cost in $\pi'$ is the number of mini-games with a collision, times the cost $O(k\log k)$ of a mini-game.
    The number of mini-games is stochastically dominated by a geometric random variable, with success when there are no collisions, i.e., it has parameter $p=1-O(\tfrac{k^2}{n})$.
    Thus, the probability that there are more than $20$ mini-games is at most $O(\tfrac{k^2}{n})^{20}\leq O(n^{-10})$.
    When this event does not happen, the total cost in $\pi'$ is $O(k\log k)$.
    This concludes the proof.
\end{proof}

\subsubsection{Proof of \Cref{lem:payment_for_splits_cost}}\label{sec:proof_ppay}
Recall that in $\ppay$, the cost of deleting a bin with $k$ balls is $k+O(k\log k)$, and the balls are thrown independently.
\Cref{lem:payment_for_splits_cost} states that the cost of $n/2$ rounds of $\ppay$ is $O(n)$ with high probability.
\begin{proof}[Proof of \Cref{lem:payment_for_splits_cost}]
    The proof uses the same probability space as in the proof of \Cref{thm:adaptive_balls_bins_n/2_rounds}. We consider a non-adaptive game, where balls are thrown into $n$ bins until every subset of $n/2$ bins contains at least $n$ balls.
    By the coupling in the proof of \Cref{thm:adaptive_balls_bins_n/2_rounds}, we have a matching between deleted bin $i$ in the adaptive game and bin $i$ in the non-adaptive game, such that $L^{na}_i\geq L^a_i$, i.e., the load in the non-adaptive game is at least as in the adaptive game.
Consider the $n/2$ heaviest loaded bins in the non-adaptive game, with loads $k_1,\ldots,k_{n/2}$. Take as cost $O\big(\sum_{i=1}^{n/2} k_i (1+\log k_i)\big)$.
    By the coupling, this sum stochastically dominates the cost in $\ppay$.
    By \Cref{lem:non_adaptive_game_length}, with high probability, the number of thrown balls in the non-adaptive game is at most $10n$.
We now bound $\sum_{i=1}^{n/2} k_i \log k_i$ after throwing $10n$ balls, using the following.
    \begin{lemma}\label{lem:sum_LlogL_classic_balls}
        Consider standard balls into bins, with $10n$ balls and $n$ bins. Denote the loads of the bins by $L_1,L_2,\ldots,L_n$.
        With high probability, $\sum_{i=1}^n L_i\log L_i = O(n)$.
    \end{lemma}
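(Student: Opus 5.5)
The plan is to bound $\sum_i L_i\log L_i$ by splitting each term according to the size of $L_i$. Since $L_i\log L_i\le L_i^2$, the bins with $L_i\le K$ for a suitable constant $K$ contribute at most $K\log K\cdot n=O(n)$ in total. It therefore suffices to control the bins with large loads, and for this I would show that
\[
Z=\sum_{i=1}^n L_i\log L_i\,\mathbf{1}[L_i>K]
\]
is $O(n)$ with high probability.

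First I would bound the expectation. Each $L_i\sim\mathrm{Bin}(10n,1/n)$, so $\Pr(L_i= m)\le \binom{10n}{m}n^{-m}\le (10e/m)^m$. Hence
\[
\E[L_i\log L_i\,\mathbf{1}[L_i>K]]\le\sum_{m>K} m\log m\,(10e/m)^m=O(1),
\]
and $\E Z=O(n)$.

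Next I would prove concentration, since the $L_i$ are dependent. I would use Poissonization or negative association. The loads of balls-into-bins are negatively associated, and the functions $f(L_i)=L_i\log L_i\,\mathbf{1}[L_i>K]$ are monotone nondecreasing. Therefore $\E e^{\lambda\sum_i f(L_i)}\le\prod_i \E e^{\lambda f(L_i)}$ for $\lambda>0$. One obstacle is that $f$ is unbounded. To handle it, I would first restrict to the event that $\max_i L_i\le C\log n$, which holds with probability $1-n^{-10}$ by the standard max-load bound. On this event I would work with the truncated variables $f(\min(L_i,C\log n))$, which are still monotone, so negative association still applies. These truncated variables are bounded by $C'\log n\log\log n$.

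I would then apply Bernstein's inequality (\Cref{lem:Bernstein}) in its negatively-associated form, or alternatively Poissonize with $\mathrm{Poi}(10)$ loads, where independence holds exactly. The variance is $O(n)$ because $\E f(L_i)^2=O(1)$ by the same tail bound. Taking $t=n$ gives
\[
\Pr(Z-\E Z>n)\le 2\exp\Big(-\Omega\big(n/\log n\log\log n\big)\Big),
\]
which is negligible. If I Poissonize instead, de-Poissonization costs only a factor $O(\sqrt n)$ in probability, and monotonicity in the number of balls also allows comparing with $\mathrm{Poi}$ using $11n$ balls, which dominates $10n$ balls with high probability.

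Combining the three parts gives $\sum_i L_i\log L_i=O(n)$ with probability $1-O(n^{-10})$. I expect the main obstacle to be the dependence between the loads together with their unbounded range. My first approach will be Poissonization plus truncation at $O(\log n)$, because it makes Bernstein's inequality directly applicable.
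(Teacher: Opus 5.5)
Your proof is correct, but it takes a heavier route than the paper.

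\textbf{How the paper argues.} The paper never deals explicitly with the dependence among the loads or with their unbounded range. It views $F=\sum_i L_i\ln L_i$ as a function of the $10n$ \emph{independent ball locations} and applies McDiarmid's inequality. Moving one ball changes $F$ by at most $2\ln(10n)+2$, since the derivative of $x\ln x$ is $\ln x+1$ and every load is at most $10n$. So the worst-case sensitivity is already $O(\log n)$ with no truncation, and McDiarmid gives $\Pr(F>100n)\le\exp(-\Omega(n/\log^2 n))$.

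For the mean, the paper uses the identity $\E[L_1\ln L_1]=10\,\E\ln(X+1)$ with $X\sim\mathrm{Bin}(10n-1,1/n)$, and then Jensen, to get the explicit bound $\E F\le 10n\ln 11$. Your tail sum $\sum_{m>K} m\log m\,(10e/m)^m=O(1)$ is equally valid, just with an unspecified constant.

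\textbf{How your route differs.} You split off the loads $\le K$, truncate at the max-load level $C\log n$, and apply Bernstein either through negative association of the multinomial occupancy numbers or through Poissonization. In the Poissonized version, monotonicity of the event in the number of balls makes the de-Poissonization comparison clean. This all works. Thanks to the variance term, it even gives a slightly stronger tail, $\exp(-\Omega(n/(\log n\log\log n)))$ instead of $\exp(-\Omega(n/\log^2 n))$.

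The cost is more external machinery and an extra failure event of probability $n^{-10}$ from the max-load bound. The machinery is negative association of the loads together with its MGF product consequence (the paper's Bernstein lemma is stated only for independent variables), or alternatively a de-Poissonization step.

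\textbf{What each approach buys.} The bounded-differences argument handles both obstacles you flagged in one stroke: independence of the ball positions takes care of the dependence, and the $O(\log n)$ worst-case Lipschitz constant takes care of the unboundedness.
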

    We defer the proof of \Cref{lem:sum_LlogL_classic_balls} to Appendix~\ref{sec:LlogL}.
    By a union bound, after throwing $10n$ balls, the game ends and the cost is $O(n)$, which concludes the proof of \Cref{lem:payment_for_splits_cost}.
\end{proof}

\subsection{Bounding the maximum load}
\begin{theorem}\label{thm:2_split_adaptive_load}
    The maximum load in the $2$-split problem is $\poly(\log n)$ with constant probability.
    Moreover, for success probability $1-\delta$, the maximum load is at most $\log^{O(\log \tfrac{1}{\delta})} n$.
\end{theorem}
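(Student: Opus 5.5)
We follow the potential-function route sketched in the technical overview. For a load vector $(L_1,\dots,L_n)$ and the set $I$ of surviving bins, set $w=\sum_{i\in I}e^{\alpha L_i}$ with $\alpha=1/\log n$. Initially every $L_i=O(\log n)$ with high probability, so $w\le n e^{O(1)}$. The plan is to show that, with probability at least $1/2$, the maximum load cannot grow from $M$ to $M\log n$ in fewer than $n/C$ rounds. Iterating this over blocks of rounds then gives the bound.

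The first step is to control the one-round drift of $w$. When the adversary deletes bin $b$ with load $L_b$, the potential loses $e^{\alpha L_b}$. The two packets, of sizes $\lceil L_b/2\rceil$ and $\lfloor L_b/2\rfloor$, land in uniform surviving bins $j$. There are at least $n/2$ of them, so the expected increase is at most
\[
\frac{2}{n}\sum_{j\in I}e^{\alpha L_j}\bigl(e^{\alpha L_b/2}-1\bigr)\cdot 2=\frac{4w}{n}\bigl(e^{\alpha L_b/2}-1\bigr).
\]
While all loads stay below $M=\Theta(\log^2 n)$, we have $\alpha L_b/2\le O(\log n)$. This is not small, so we bound more crudely. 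On the event that $\max_i L_i\le K$, each round multiplies $\E[w]$ by at most $1+\frac{4}{n}e^{\alpha K/2}$. Over $n/C$ rounds, $\E[w]$ stays within a constant factor of its start as long as $e^{\alpha K/2}\le C/4$, i.e.\ $K=O(\log n)$. This is too weak directly, so we rescale across phases.

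The second step is to run phases. In phase $s$ the current maximum is $M_s$, and we use $\alpha_s=c/M_s$ in place of $1/\log n$. At the start of the phase every term satisfies $e^{\alpha_s L_i}\le e^{c}$, so $w\le e^{c}n$. A bin reaching load $M_s\log n$ would contribute $e^{c\log n}=n^{c}$. With $c=2$ this exceeds $e^{c}n$ by a factor $n/e^2$.

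The crux is a stopped-supermartingale argument. Stop at the first time some load exceeds $2M_s$. Up to that time, $e^{\alpha_s L_b/2}\le e^{c}$, so $w_t\cdot(1+4e^{c}/n)^{-t}$ is a supermartingale. Over $n/C$ rounds with $C\ge 8e^{c}$, this gives $\E[w]\le e^{c+1}n$. Markov's inequality then shows that the probability some bin exceeds $M_s\log n$ within those rounds is small. The delicate point is that after a load passes $2M_s$, packets become large relative to $1/\alpha_s$. To handle this, we apply the argument with a doubling threshold: we show that the load passes from $M$ to $2M$ only after $\Omega(n/\log n)$ rounds, or alternatively that reaching $M\log n$ requires $n/C$ rounds. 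Deciding which version actually closes is the step I expect to be hardest. I would first try to handle large packets by using $e^{\alpha x/2}-1\le \tfrac12(e^{\alpha x}-1)+\dots$ (convexity), so that the gain $\frac{4}{n}w\,(e^{\alpha L_b/2}-1)$ is compared against the loss $e^{\alpha L_b}$ of the deleted bin. For large $L_b$ the loss dominates whenever $e^{\alpha L_b/2}\gtrsim 4w/n$, so deleting heavy bins actually decreases $w$. This self-correcting effect is exactly what the $2$-split structure provides, and it is what distinguishes it from $1$-split.

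Finally, we combine phases. Each phase succeeds, meaning it consumes at least $n/C$ rounds while the maximum grows by at most a $\log n$ factor, with probability at least $1/2$ (or with higher probability after tuning $c$, $C$). Over $n/2$ rounds there are at most $C/2$ phases needed, so the maximum load is $O(\log n)\cdot(\log n)^{O(1)}=\mathrm{poly}(\log n)$ with constant probability. For failure probability $\delta$, we make each phase fail with probability $\delta/C$ by allowing a growth factor of $\log^{O(\log(1/\delta))}n$ per phase. One way to do this is to use Markov with threshold $n^{O(\log 1/\delta)}$. A cleaner way is to note that each phase independently succeeds with probability $1/2$, so $O(\log\frac1\delta)$ extra phases suffice. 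Either way this yields the bound $\log^{O(\log\frac1\delta)}n$.
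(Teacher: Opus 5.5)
Your phase structure matches the paper's: rescale $\alpha$ to the current maximum $M_s$, show each phase lasts at least $n/C$ rounds with probability $\tfrac12$ while the maximum grows by at most a $\log n$ factor, then count phases. But there is a genuine gap: the per-phase estimate, which is the core of the proof, is not established.

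The supermartingale you write down is stopped the first time some load exceeds $2M_s$. At that moment all loads are still at most $3M_s$, so any bin reaching $M_s\log n$ does so after your process is frozen. Markov applied to it therefore says nothing about the event you need. It is not even useful for the threshold $2M_s$: with $\alpha_s=c/M_s$ such a bin contributes only $e^{2c}$, against $\E w\approx e^{c+1}n$.

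The doubling variant is also a dead end. Even $\Omega(n/\log n)$ rounds per doubling permits $\Theta(\log n)$ doublings within $n/2$ rounds, i.e.\ polynomial load. You leave open which version closes, and as stated neither does.

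What is missing is to turn your closing remark into a drift bound that is uniform in the adversary's choice, and to stop on $w$ rather than on a load. With $x=e^{\alpha L_b/2}$ and at least $n/2$ surviving bins, $\E[w_{t+1}\mid H]\le (w_t-x^2)(1+\tfrac{8x}{n})$. Maximizing over $x$ by completing the square (\Cref{claim:ineq_in_potential}) gives $\E[w_{t+1}\mid H]\le w_t(1+\tfrac{16w_t}{n^2})$, however heavy the deleted bin is. Heavy deletions are self-correcting, as you noticed, and moderate ones gain at most $O(w_t^2/n^2)$. That second case is only harmless while $w_t=O(n)$, which is why the stopping time must be the first time $w_t>Cn$.

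With this change the phase lemma closes, and no control of packet sizes is needed:
\begin{itemize}
\item Take $\alpha=1/M_s$, so $w_0\le en$.
\item The stopped process satisfies $\E[\hat w_{n/C}]\le en(1+\tfrac{16C}{n})^{n/C}\le e^{17}n$.
\item Markov gives $\Pr[\tau\le n/C]\le\tfrac12$ for $C\ge 2e^{17}$.
\item Before $\tau$, every load is at most $\tfrac1\alpha\ln w_t\le M_s\ln(Cn)$.
\end{itemize}

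For the combination step, phases are not independent; each succeeds with probability at least $\tfrac12$ conditionally on the history. That suffices, but ``at most $C/2$ phases'' only holds if all of them succeed, so you need concentration. The paper runs Azuma on $\sum_{i\le t}I_i-\tfrac t2$ and enlarges $C$ to $\Theta(\log\tfrac1\delta)$, which the lemma allows.

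Your ``Markov with threshold $n^{O(\log 1/\delta)}$'' option cannot work. The per-round drift factor is $1+O(1/n)$ only while $w=O(n)$, so you cannot let $w$ run up to $n^{O(\log 1/\delta)}$.
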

Our proof uses the following lemma.

\begin{lemma}\label{lem:potential}
    Consider any assignment of $n$ balls to $m\in [n/2,n]$ bins, with maximum load $L$.
    There is a constant $C>1$, such that with probability $1/2$, the number of rounds of $2$-split until the maximum load reaches $O(L\cdot \log n)$ is at least $\min\{\tfrac{n}{C},m-\tfrac{n}{2}\}$.
\end{lemma}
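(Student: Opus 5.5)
We track the exponential potential from the technical overview, $w=\sum_{i\in I} e^{\alpha L_i}$ with $\alpha=\tfrac{1}{L\log n}$ (scaled to the current maximum load $L$), where $I$ is the set of surviving bins. Initially every $L_i\le L$, so $w\le m e^{1/\log n}\le 2n$. If at some time the maximum load reaches $cL\log n$, then $w\ge e^{c}$, which is useless. So I will instead use $\alpha = \tfrac{\beta}{L}$ for a small constant $\beta$, and show that $w$ stays $O(n)$ in expectation for $n/C$ rounds. Reaching load $L'$ forces $w\ge e^{\beta L'/L}$, and Markov then bounds the load by $O(L\log n)$ with probability $1/2$. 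The plan is: (i) bound the one-round expected increase of $w$ for any adversarial choice; (ii) iterate to get $\E[w_t]\le w_0\cdot(1+\gamma/n)^t$ for $t\le n/C$; (iii) apply Markov to the stopped process.

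\textbf{Step (i), the one-round drift.} Suppose the adversary deletes bin $b$ with load $k$. The deletion removes $e^{\alpha k}$ from $w$. The two packets of size $k/2$ land in uniform bins $j_1,j_2$ among at least $n/2$ survivors, adding
\[
\E\big[e^{\alpha L_{j}}(e^{\alpha k/2}-1)\big]\le \tfrac{2}{n}\, w\,(e^{\alpha k/2}-1)
\]
per packet. Ignoring the rare collision $j_1=j_2$ (absorbed in constants), the expected change is at most $\tfrac{4w}{n}(e^{\alpha k/2}-1)-e^{\alpha k}$. The crux is to show this is at most $\tfrac{\gamma}{n}w$ for a constant $\gamma$, uniformly in $k$. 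When $\alpha k\le 1$ we have $e^{\alpha k/2}-1\le \alpha k$. Since there are at most $n$ balls and $k\le$ current max, I would bound $\alpha k$ while the load is below $L\log n$ by stopping the process; then $\alpha k\le \beta\log n$, which breaks the linear bound. So I split cases: if $e^{\alpha k/2}\ge 4w/n$, the negative term $-e^{\alpha k}$ dominates the gain, giving nonpositive drift. Otherwise $e^{\alpha k/2}<4w/n$, and the gain is at most $\tfrac{4w}{n}e^{\alpha k/2}$. I still need to relate this to $w$, which suggests using $\alpha k$ small compared to... Here I would try instead to bound the gain as $\tfrac{4}{n}\sum_i e^{\alpha L_i}(e^{\alpha k/2}-1)$ and charge it against $e^{\alpha k}$ by AM-GM: $e^{\alpha L_j}e^{\alpha k/2}\le \tfrac12(e^{2\alpha L_j}+e^{\alpha k})$. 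This yields a bound in terms of a second potential $\sum e^{2\alpha L_i}$, which is handled by a multi-scale argument or by choosing $\alpha$ with $e^{\alpha L_{\max}}$ controlled.

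\textbf{Main obstacle.} The difficulty is that the adversary deletes exactly the heaviest bins, where the multiplicative gain is largest, so a clean drift bound of the form $\E[\Delta w]\le \tfrac{\gamma}{n}w$ requires care. I expect the right route is the case analysis above: deleting a bin of load $k$ with $e^{\alpha k}$ a non-negligible fraction of $w$ is self-penalizing, and otherwise the gain is $O(w/n)$ times a constant, provided $\alpha k/2=O(1)$. This is guaranteed by taking $\alpha=\Theta(1/L')$, where $L'=O(L\log n)$ is the stopping threshold. With $\alpha=1/L'$, the initial potential is at most $n e^{L/L'}\le 2n$, and the final potential at hitting $L'$ is only $e$, too weak. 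This is exactly why the target is a $\log n$ blow-up: take $\alpha=\tfrac{2}{L}$ for the initial scale and argue in $\log n$ doublings, or equivalently take $\alpha=\tfrac{1}{L}$ and target $w\ge n^{2}$, i.e.\ load $2L\log n$. I would then verify the drift bound $\E[\Delta w]\le \tfrac{\gamma}{n}w$ while the max load is below $L\log n$, using packets of size $k/2\le L\log n/2$ only through the case split.

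\textbf{Steps (ii) and (iii).} Iterating the drift over $t\le \min\{n/C, m-n/2\}$ rounds (the second term guarantees at least $n/2$ survivors, so the $2/n$ landing probability is valid) gives $\E[w_t]\le 2n e^{\gamma/C}\le 4n$. Stopping at the first time the load exceeds $O(L\log n)$, at that time $w\ge n^{3}$. By Markov, the probability of hitting the threshold within the window is at most $4n/n^3\ll 1/2$, which proves the lemma.
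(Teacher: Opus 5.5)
Your framework (exponential potential with $\alpha=\Theta(1/L)$, a stopped process, Markov) is the right one, but Step (i) has a genuine gap that invalidates Steps (ii)--(iii).

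Close your own case split with $x=e^{\alpha k/2}$. It gives
\[
\E[\Delta w]\le \tfrac{4w}{n}x-x^2\le \tfrac{4w^2}{n^2},
\]
so the drift is quadratic in $w$, and it is $O(w/n)$ only while $w=O(n)$. Your final stopping rule (max load below $L\log n$, with $\alpha=1/L$) does not keep $w=O(n)$. It only gives $x\le\sqrt n$, hence at best $\E[\Delta w]\le \tfrac{4}{\sqrt n}w$. Iterating that over $n/C$ rounds gives $e^{\Theta(\sqrt n)}$, not $4n$.

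This is not an artifact of the estimate. Take $\sqrt n$ bins at load $L\ln n$, which your load-based stopping rule allows; already $w\approx n^{3/2}$. Deleting one of them has expected gain of order $n$, far above $\gamma w/n=\gamma\sqrt n$. So neither a constant $\gamma$ nor $\E[w_t]\le 4n$ follows, and the Markov estimate $4n/n^3$ is unsupported. The detour through $\sum_i e^{2\alpha L_i}$ or a multi-scale argument does not repair this.

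The missing idea is to stop on the potential rather than on the load. With $\alpha=1/L$ one has
\[
\E[w_{t+1}\mid H]\le (w_t-x^2)\Bigl(1+\tfrac{8x}{n}\Bigr)\le w_t\Bigl(1+\tfrac{16w_t}{n^2}\Bigr).
\]
This is AM--GM applied to $w$ as a whole, i.e.\ Claim~\ref{claim:ineq_in_potential}, not bin by bin. Now stop the first time $w$ exceeds $Cn$. The stopped process has drift at most $\tfrac{16C}{n}\hat w$, so $\E[\hat w_{n/C}]\le en\,e^{16}=e^{17}n$ regardless of $C$. Markov at level $Cn$ (not $n^3$), with $C>2e^{17}$, shows that with probability at least $1/2$ the potential stays below $Cn$ up to round $n/C$. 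That forces the max load to be at most $\tfrac{1}{\alpha}\ln(Cn)=L\ln(Cn)$.

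This gives only constant success probability, which is exactly what the lemma claims. Your claimed failure probability $O(n^{-2})$ would be much stronger: it would make the submartingale argument in the proof of Theorem~\ref{thm:2_split_adaptive_load} unnecessary. This potential argument does not deliver it.
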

The proof of \Cref{lem:potential} uses the following simple claim.
\begin{claim}\label{claim:ineq_in_potential}
    Let $w,n>0$.
    For every $x\geq 0$, we have $(w-x^2)(1+\tfrac{4x}{n})\leq w(1+\tfrac{4w}{n^2})$.
\end{claim}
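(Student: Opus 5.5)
We need to show $(w-x^2)(1+\tfrac{4x}{n})\leq w(1+\tfrac{4w}{n^2})$ for all $w,n>0$ and $x\ge 0$. The plan is to expand both sides and reduce the claim to a single nonnegative quadratic in $x$. The left-hand side equals
\[
w + \tfrac{4wx}{n} - x^2 - \tfrac{4x^3}{n}.
\]
Subtracting $w$ from both sides, it suffices to show
\[
\tfrac{4wx}{n} - x^2 - \tfrac{4x^3}{n} \le \tfrac{4w^2}{n^2}.
\]

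Since $x\ge 0$ and $n>0$, we have $-\tfrac{4x^3}{n}\le 0$, so we may drop this term. It then suffices to prove
\[
x^2 - \tfrac{4w}{n}x + \tfrac{4w^2}{n^2}\ge 0.
\]
This expression is exactly the perfect square $\bigl(x-\tfrac{2w}{n}\bigr)^2$, which is nonnegative. This completes the argument.

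No step is a real obstacle. The only thing to notice is that the dropped cubic term has a favorable sign, and the leftover quadratic is a perfect square with minimizer $x=\tfrac{2w}{n}$, attaining value exactly zero there. This is why the right-hand side carries the correction term $\tfrac{4w^2}{n^2}$.

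Note that positivity of $w-x^2$ is not needed. Also, $w>0$ is used only implicitly, since the inequality holds for any real $w$.
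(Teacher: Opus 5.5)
Your proof is correct and matches the paper's argument: the paper also expands and writes the difference as $\bigl(\tfrac{2w}{n}-x\bigr)^2+\tfrac{4x^3}{n}\geq 0$. You simply discard the nonnegative cubic term first and then recognize the same perfect square.
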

\begin{proof}
    The difference satisfies
    \begin{align*}
        &w(1+\tfrac{4w}{n^2}) - (w-x^2)(1+\tfrac{4x}{n})\\
        &= w+\tfrac{4w^2}{n^2}-w-\tfrac{4wx}{n}+x^2+\tfrac{4x^3}{n} \\
        &= \tfrac{4w^2}{n^2}-\tfrac{4wx}{n} +x^2 +\tfrac{4x^3}{n} \\
        &= (\tfrac{2w}{n}-x)^2+\tfrac{4x^3}{n} \geq 0.
    \end{align*}
\end{proof}
\begin{proof}[Proof of \Cref{lem:potential}.]
    Pick $\alpha = 1/L$.
    Denote the set of surviving bins at time $t\in [0,m-\tfrac{n}{2}]$ by $I_t\subseteq[m]$, and for every $i\in I_t$, denote
    the load of the $i$-th bin as $L_i^t$. When clear from context, we simply write $L_i$.
    Consider the potential function 
    \[
    w_t=\sum_{i\in I_t} e^{\alpha L_i}.
    \]
    Initially, we have
    \[
    w_0=\sum_{i\in I_0} e^{\alpha L_i}\leq en.
    \]

    Without loss of generality we may
    suppose that the adversary is deterministic. That is, in every step, given the current constellation $H$, the adversary deterministically chooses the next bin to eliminate (denoted as $r$). 
    The only randomness is in picking the indices of the $2$ bins to which we split $r$, denoted as $x$ and $y$. For simplicity, we assume the algorithm enforces $x\neq y$. So going from time $t\leq m-\tfrac{n}{2}$ to time $t+1$, we have that $w_t$ changes to $w_{t+1}$ as follows:

    \begin{align*}
    w_{t+1} &=  w_t - e^{\alpha L_r}   +  e^{\alpha L_x}\left(e^{\alpha \lceil L_r/2\rceil} - 1\right)   +    e^{\alpha L_y}\left(e^{\alpha \lfloor L_r/2\rfloor} - 1\right)\\
    &\leq  w_t -  e^{\alpha L_r}   +   \left(2e^{\alpha L_r/2} - 1\right)  \left[ e^{\alpha L_x} +   e^{\alpha L_y}\right].
    \end{align*}
    
    We want to compute the expectation of this given the history $H$ of the game until time $t$. The expectation is only over the choice of $x,y$.

    \begin{align*}
        \E[w_{t+1}|H] &\leq  w_t - e^{\alpha L_r}   +   \left(2e^{\alpha L_r/2} - 1\right) 2  \E[e^{\alpha L_x}]\\
&\leq  w_t -  e^{\alpha L_r}   +  \left(2e^{\alpha L_r/2} - 1\right) 2  \sum_{i\in I_{t+1}}\tfrac{1}{|I_{t+1}|}  e^{\alpha L_i}\\
&\underset{|I_{t+1}|\ge n/2}{\leq}  w_t -  e^{\alpha L_r}   +   \frac{2e^{\alpha L_r/2} - 1}{n/4}  \left[w_t -  e^{\alpha L_r}\right]\\
&=\left[w_t -  e^{\alpha L_r}\right]\left(1+\frac{2e^{\alpha L_r/2} - 1}{n/4}\right)\\
&\leq \left[w_t -  e^{\alpha L_r}\right]\left(1+\frac{2e^{\alpha L_r/2}}{n/4}\right)\\
&\leq w_t\left(1+\frac{16 w_t}{ n^2}\right),
    \end{align*}   
    The last inequality holds by \Cref{claim:ineq_in_potential}.
    
    Let $C$ be some large constant and  consider the alternative random variables with an appropriate random stopping time $\tau$ defined as follows:
    \[
    \tau = \max \{ t : w_t \leq C n \},
    \qquad\text{and}\qquad
    \hat{w}_t = w_{t\land \tau},
    \]
    where $t\land \tau=\min\{t, \tau\}$. 
    Therefore,
    \[
        \E[\hat{w}_{t+1}|H]
        \leq \hat{w}_t (1+\tfrac{16 \hat{w}_t}{ n^2})
        \leq \hat{w}_t (1+\tfrac{16 C}{ n}),
    \]
    and so, by the law of total expectation and since $w_0\leq en$,
    \[
    E[\hat{w}_{n/C}]\leq en \cdot \left(1+\frac{16C}{n}\right)^{n/C}\leq
    en\cdot e^{16}.
    \]
    By Markov's inequality,
    \[
    \Pr[\hat{w}_{n/C}>2e^{17}n]\leq \frac{1}{2}.
    \]
    Set $C> 2e^{17} $. Thus, with probability at least $1/2$ we have that $\hat{w}_{n/C}\leq Cn$ in which case the stopping condition
does not happen until step $n/C$,
    and so $w_{n/C}=\hat{w}_{n/C}\leq Cn$ and $\tau\geq n/C$.
    Observe that for every $t\leq \tau$, the maximum load is at most,
    \[
    \tfrac{1}{\alpha} \log w_t \leq L\log(Cn),
    \]
    which concludes the proof of \Cref{lem:potential}.
\end{proof}

Before proceeding to the proof of \Cref{thm:2_split_adaptive_load}, we recall useful probability statements regarding martingales (and particularly submartingales).
\begin{definition}
    A sequence $X_1,X_2,\ldots$ of random variables with finite mean, is called submartingale if $\E[X_{i+1}|X_{i},\ldots,X_1]\geq X_{i}$, for all $i\geq 1$; is called supermartingale if $\E[X_{i+1}|X_{i},\ldots,X_1]\leq X_{i}$, for all $i\geq 1$; and is called martingale if it is both submartingale and supermartingale.
\end{definition}
We will use the following (special case of) Azuma's inequality for submartingales.
\begin{theorem}[Azuma's inequality for submartingales]
    Suppose $X_1,X_2,\ldots$ is a submartingale and $|X_{i+1}-X_i|\leq c$. Then, for all $n\in \N$ and $t>0$,
    \[
    \Pr(X_n-X_1\leq -t)\leq \exp\Big(-\frac{t^2}{2nc^2}\Big).
    \]
\end{theorem}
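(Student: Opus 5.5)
We prove the bound with the exponential-moment (Chernoff) method applied to the increments, using a conditional Hoeffding-type bound at each step. Let $\mathcal{F}_i=\sigma(X_1,\ldots,X_i)$ and $D_i=X_{i+1}-X_i$ for $i=1,\ldots,n-1$. Then $X_n-X_1=\sum_{i=1}^{n-1}D_i$. The submartingale property gives $\mu_i:=\E[D_i\mid\mathcal{F}_i]\ge 0$, and by assumption $|D_i|\le c$. For a parameter $\lambda>0$ to be chosen later, Markov's inequality gives
\[
\Pr(X_n-X_1\le -t)=\Pr\big(e^{-\lambda(X_n-X_1)}\ge e^{\lambda t}\big)\le e^{-\lambda t}\,\E\big[e^{-\lambda\sum_{i=1}^{n-1}D_i}\big].
\]

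The main step is the conditional bound $\E[e^{-\lambda D_i}\mid\mathcal{F}_i]\le e^{\lambda^2c^2/2}$. The function $d\mapsto e^{-\lambda d}$ is convex on $[-c,c]$, so it lies below its chord:
\[
e^{-\lambda d}\le \frac{c-d}{2c}e^{\lambda c}+\frac{c+d}{2c}e^{-\lambda c}=\cosh(\lambda c)-\frac{d}{c}\sinh(\lambda c).
\]
Taking conditional expectations gives $\E[e^{-\lambda D_i}\mid\mathcal{F}_i]\le\cosh(\lambda c)-\frac{\mu_i}{c}\sinh(\lambda c)$. Since $\mu_i\ge0$ and $\sinh(\lambda c)\ge 0$, the second term can be dropped, leaving $\cosh(\lambda c)$. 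Comparing Taylor series termwise, $\cosh(\lambda c)\le e^{\lambda^2c^2/2}$. This is the only place the submartingale hypothesis is used. The sign matters: it is what lets us bound the lower tail and not the upper one.

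Next we peel off one increment at a time with the tower property. The first $n-2$ increments are $\mathcal{F}_{n-1}$-measurable, so
\[
\E\big[e^{-\lambda\sum_{i\le n-1}D_i}\big]=\E\big[e^{-\lambda\sum_{i\le n-2}D_i}\,\E[e^{-\lambda D_{n-1}}\mid\mathcal{F}_{n-1}]\big]\le e^{\lambda^2c^2/2}\,\E\big[e^{-\lambda\sum_{i\le n-2}D_i}\big].
\]
Iterating down to $i=1$ gives $\E\big[e^{-\lambda(X_n-X_1)}\big]\le e^{(n-1)\lambda^2c^2/2}\le e^{n\lambda^2c^2/2}$.

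Finally, combining with the Markov step gives $\Pr(X_n-X_1\le -t)\le\exp(-\lambda t+n\lambda^2c^2/2)$. Choosing $\lambda=t/(nc^2)$ makes the exponent $-t^2/(2nc^2)$, which is the claimed bound. (Using $n-1$ in place of $n$ would give a slightly stronger bound.)
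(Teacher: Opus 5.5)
Your proof is correct. The paper does not prove this statement; it quotes it as a standard special case of Azuma's inequality and uses it only as a black box in the proof of Theorem~\ref{thm:2_split_adaptive_load}.

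What you give is the standard textbook argument. You apply the exponential-moment (Chernoff) bound to $-(X_n-X_1)$, with $\lambda$ chosen as you do. You derive a conditional Hoeffding lemma from the chord bound for the convex map $d\mapsto e^{-\lambda d}$ on $[-c,c]$. There, $\mu_i\ge 0$ is exactly what lets you drop the $\sinh$ term, and that is the only use of the hypothesis. You then peel off the increments with the tower property. All steps check out, and every expectation is finite because $e^{-\lambda(X_n-X_1)}\le e^{\lambda(n-1)c}$.

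Two cosmetic points. First, the chord formula divides by $c$, so you should handle $c=0$ separately; it is trivial, since then $X_n=X_1$ almost surely and the probability is $0$ for $t>0$. Second, as you note, with only $n-1$ increments your argument actually gives the slightly sharper bound $\exp\bigl(-t^2/(2(n-1)c^2)\bigr)$ for $n\ge 2$.
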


\begin{proof}[Proof of \Cref{thm:2_split_adaptive_load}.]
    By standard arguments, the initial maximum load is $\le \log n$ with high probability (say $\ge 1-n^{-10}$ for concreteness). Assume henceforth that this event holds. 
    Denote by $\tau_i$ the number of rounds until the maximum load reaches $\log n \cdot \log^i Cn = O(\log^{i+1} n)$ where $C$ is a constant satisfying the requirements of Lemma \ref{lem:potential}. It follows that $\tau_0=0$.
    Define the index $t^*=\max_{\tau_i\leq n/2} i$, i.e., the index of the last $\tau_i$ before the $\tfrac{n}{2}$-th round of the game.
    Given the constellation at time $\tau_i$, for $i\leq t^*$, by \Cref{lem:potential}, $\tau_{i}-\tau_{i-1}\geq n/C$ w.p.\ at least $1/2$.
    Denote by $I_i$ the indicator that $\tau_{i}-\tau_{i-1}\geq n/C$, 
    and $S_t=\sum_{i=1}^{t\land t^*} I_i-\tfrac{t}{2}$.
    We have that 
    \begin{align*}
        \E [S_t \mid S_{t-1},\ldots,S_1]&=
        \begin{cases}
    			S_{t-1}+\E\left[I_t-\tfrac{1}{2} \mid S_{t-1},\ldots,S_1 \right], & \text{if $t\leq t^*$}\\
                S_{t-1}, & \text{otherwise}
    		 \end{cases}
\\
        &\geq S_{t-1}.
    \end{align*}
    Thus, $S_t$ is a submartingale.
    By Azuma's inequality,
    \[
    \Pr[S_t\leq -\tfrac{t}{4}]\leq \exp(-\frac{t}{32}).
    \]
    Assume this event does not hold.

    Now, assume by contradiction that $2C< t^*$. We obtain $\sum_{i=1}^{2C} I_i=S_{2C}+\tfrac{2C}{2}\geq \tfrac{2C}{4}=\tfrac{C}{2}$.
    Thus, at least $\tfrac{C}{2}$ of the indicators $\{I_i\}_{i=1}^{2C}$ are equal $1$, and thus $\tau_{2C}\geq \tfrac{C}{2}\cdot\tfrac{n}{C}=\tfrac{n}{2}$.
    Therefore, $2C\geq t^*$, contradiction.
    We thus obtained that $t^*\leq 2C$.
    Therefore, the maximum load is $O(\log^{2C+1}  n)$ with probability at least $1-\exp(-\frac{C}{16})-n^{-10}$. Setting $C=\max\{C'\log(1/\delta), 2e^5\}$ for an appropriate constant $C'$ concludes the proof.
\end{proof} \section{$d$-split problem, $d=1$, oblivious adversary}\label{sec:no_split}
We now consider the $d$-split problem, with $d=1$, i.e., all the balls in a deleted bin move to the same new bin; and play the game for $n/2$ steps. With adaptive adversary, this strategy is clearly bad, as the adversary can delete the heaviest bin and create a snowball effect. For example, in the initial configuration where every bin contains exactly one ball, the recourse is $\Omega(n^2)$ and the maximum load is $n/2$ (and similar bounds seem to hold for other initial configurations).
Nevertheless, we establish that this strategy is solid against oblivious adversaries.
Suppose the initial configuration has one ball in each bin. 
For simplicity of presentation, we call this variant oblivious $1$-split.

\begin{theorem}\label{thm:oblivious_no_split}
Consider oblivious $1$-split, i.e., where all balls in a deleted bin are thrown together to a single new bin, the adversary is oblivious, and the initial configuration where every bin contains exactly one ball.
    After $n/2$ steps,
    the total recourse is $O(n\log n)$ and the max-load is $O(\log n)$, with high probability.
\end{theorem}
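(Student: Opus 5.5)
We may assume that the oblivious adversary deletes bins $1,2,\dots,n/2$ in this order, since it fixes the sequence in advance and the algorithm is symmetric under relabeling. We then work with the random forest from the introduction. When bin $i\le n/2$ is deleted, the surviving bins are exactly $\{i+1,\dots,n\}$, and all of bin $i$'s contents move to a parent $p(i)$ that is uniform on this set. The choices $p(1),\dots,p(n/2)$ are mutually independent. A ball starting in bin $u$ is moved once for every non-root vertex on the path from $u$ to its root, so the recourse is $\sum_u \mathrm{depth}(u)$. The final load of a surviving bin $j>n/2$ is the size of its tree, and every intermediate load is the size of some subtree. The proof therefore splits into two statements about this forest.

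\emph{Recourse.} I will bound the depth of a single vertex by a geometric tail. If $v\le n/2$ is the current vertex on the path, then $p(v)$ is a non-root with probability $(n/2-v)/(n-v)\le 1/2$. This holds independently along the path, because the path only moves to larger indices, whose parent choices are fresh. Hence $\Pr(\mathrm{depth}(u)\ge d)\le 2^{-(d-1)}$. Taking $d=C\log n$ and a union bound over all $n$ vertices shows that every depth is $O(\log n)$ with high probability. This gives recourse $O(n\log n)$; in fact the expected recourse is $O(n)$, but the stated bound is all we need.

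\emph{Max load.} Let $s(v)$ be the size of the subtree rooted at $v$, and let $f_v(\theta)=\E[e^{\theta s(v)}]$. The children of $v$ are the vertices $u<v$ with $u\le n/2$ and $p(u)=v$. The subtree of each such $u$ is determined by the parents of vertices $<u$. The plan is to establish the product formula
\[
f_v(\theta)\le e^{\theta}\prod_{u<v,\,u\le n/2}\Bigl(1-\tfrac{1}{n-u}+\tfrac{f_u(\theta)}{n-u}\Bigr).
\]
To do this, I would reveal the choices $p(u)$ in decreasing order of $u$ and condition step by step. The event $\{p(u)=v\}$ depends only on $p(u)$, while $s(u)$ depends only on $p(w)$ for $w<u$. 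Subtrees of distinct children of $v$ are disjoint, and I would argue that a subtree only correlates positively through this revealed structure, or alternatively dominate it by independent copies via the natural exploration coupling. I expect this independence and domination step to be the main technical obstacle. If a clean product formula does not hold, I would instead bound each $s(v)$ by the total progeny of an inhomogeneous branching process with independent $\mathrm{Bernoulli}(1/(n-u))$ offspring indicators, and run the same computation there.

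Given the product formula, I claim by induction on $v$ that $f_v(\theta)\le K:=1.1$ for a small constant $\theta$, say $\theta=0.02$. Using $1+x\le e^x$, the inductive hypothesis gives
\[
\log f_v\le \theta+(K-1)\sum_{u\le n/2}\tfrac{1}{n-u}\le \theta+0.1\ln 2+o(1)<\ln 1.1,
\]
and the bound holds for every $v$, including the roots. Markov's inequality then gives $\Pr(s(v)\ge c\log n)\le K n^{-\theta c}$. A union bound over the $n$ vertices shows that every subtree, and hence every load at every time, is $O(\log n)$ with high probability. A final union bound with the recourse event completes the proof.
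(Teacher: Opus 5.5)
Your recourse bound is correct; it is the same geometric-tail argument the paper imports from Fine et al.

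The gap is in the max-load part. Everything there rests on the product formula for $f_v(\theta)$, which you leave unproved, and neither route you sketch can supply it.

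\emph{Positive correlation.} This goes the wrong way. If the children's contributions were positively correlated, $\E[e^{\theta s(v)}]$ would be \emph{larger} than the product. What you would need is negative dependence among the disjoint, competing subtrees of the children. That is a genuine negative-association statement, and you have not established it.

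\emph{Stochastic domination.} The fallback is to dominate $s(v)$ by the inhomogeneous branching process $Z_v$ (child $u$ present independently with probability $1/(n-u)$), or by a sum over children using independent copies of the $s(u)$. This is false. Since $\mathbf{1}[p(u)=v]$ is independent of $s(u)$, linearity gives $\E s(v)=1+\sum_{u<v,\,u\le n/2}\E s(u)/(n-u)$. This is the same recursion that $\E Z_v$ satisfies, so the means coincide. A monotone coupling between variables with equal finite means forces equality in law, and equality already fails for $n=4$, $v=3$: $s(3)$ is uniform on $\{1,2,3\}$, while $\Pr(Z_3\ge 3)=5/18<1/3$. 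Both versions coincide in this example, so the counterexample refutes both.

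The concrete reason the natural exploration coupling breaks is this. Once $u$ is known not to be attached to the $q$ internal vertices already explored, $\Pr(p(u)=i)$ rises to $1/(n-u-q)>1/(n-u)$. So label-dependent parameters $1/(n-u)$ cannot be maintained. The product formula itself may still be true (it holds in the example above), but it is then a convex-order statement that needs its own proof.

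The paper sidesteps this by dominating each tree with a \emph{homogeneous} Galton--Watson process. Its offspring law $\mu$ is the sum of all $n/2$ independent Bernoullis with parameters $1/(n/2),\dots,1/(n-1)$. At an explored vertex, the inflated conditional probabilities have distinct denominators in $[n/2,n-1]$, so they can be matched injectively to Bernoullis of $\mu$. Meanwhile $\mu$ still has mean $\ln 2+O(1/n)<0.7$, which leaves room for a genuine domination. The paper then bounds the total progeny via Bernstein's inequality on the exploration walk. Your own computation with $\theta=0.02$, $K=1.1$ would also work for this process if you induct on the number of generations rather than on labels.

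Alternatively, you can avoid domination altogether. Reveal $p(u)$ for $u=n/2,n/2-1,\dots,1$. The tree of a root then grows by one with probability $S/(n-u)$, where $S$ is its current size, independently of the past. This is a P\'olya urn, so the tree size minus one is Beta-Binomial with parameters $(n/2;1,n/2-1)$, and its moment generating function at $\theta=1/2$ is $O(1)$. Since every load is at most the size of some root's tree, this gives the $O(\log n)$ bound directly.
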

Recall the intuition behind this result.
Consider a random \emph{forest} graph on $n$ vertices constructed as follows: for each step $i$ from $1$ to $n/2$, vertex $i$ uniformly samples a parent $j \in \{i+1, \dots, n\}$ and becomes a child of $j$. This process yields $n/2$ trees $T_1,\ldots,T_{\frac{n}{2}}$, whose roots are the vertices indexed $n/2+1$ through $n$.
As explained right after \Cref{thm:1_split_oblivious_intro}, 
the maximum load in oblivious $1$-split equals the size of the largest tree in this forest, and the recourse equals the sum of depths of all vertices in this forest.

\subsection{Bounding the recourse}\label{sec:no_split_recourse}
We first bound the recourse of oblivious $1$-split by $O(n\log n)$ with high probability.
Recall that $Y$ is a geometric RV with parameter $\tfrac{1}{2}$ if $\Pr(Y=j)=2^{-j}$ for every integer $j\geq 1$.
\begin{claim}[Lemma 5 of \cite{FineKS25}]
    For every $i\leq n/2$, the recourse of the ball that started in the $i$-th bin is stochastically dominated by a geometric random variable with parameter $\tfrac{1}{2}$.
\end{claim}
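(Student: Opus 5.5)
We work in the forest picture: the oblivious adversary deletes bins $1,\dots,n/2$ in this order, and vertex $v\le n/2$ picks a uniformly random parent in $\{v+1,\dots,n\}$. The ball that starts in bin $i$ moves along the path $i=v_0\to v_1\to v_2\to\cdots$ until it reaches a root $v_s>n/2$. Its recourse is the number of edges $s$ on this path. The plan is to show that each step of the path either reaches a root or not, and that reaching a root has probability at least $1/2$ regardless of the past. This gives domination by a geometric variable with parameter $\tfrac12$.

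\textbf{Step 1 (independence along the path).} Every vertex $v\le n/2$ chooses its parent independently of the other vertices. The path from $i$ is strictly increasing, so each vertex it visits is visited for the first time. Hence, conditioned on the path so far $v_0,\dots,v_{r}$ with $v_r\le n/2$, the next vertex $v_{r+1}$ is still uniform on $\{v_r+1,\dots,n\}$. Using the adversary's obliviousness this way is the one point needing care: the deletion order, and so the forest structure, is fixed before any randomness, so the parent choices are genuinely independent uniforms. In the game, when bin $v_r$ is deleted (at time $v_r$), the ball's packet is sent to a uniformly random surviving bin. The survivors at that time are exactly $\{v_r+1,\dots,n\}$, and this choice uses fresh coins not looked at by the earlier part of the path.

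\textbf{Step 2 (per-step success probability).} Given $v_r=v\le n/2$, the probability that $v_{r+1}$ is a root equals
\[
\frac{|\{n/2+1,\dots,n\}|}{|\{v+1,\dots,n\}|}=\frac{n/2}{n-v}\ \ge\ \frac{n/2}{n-1}\ \ge\ \frac12 ,
\]
since $v\ge 1$. Once the ball reaches a root it never moves again, because roots are never deleted within $n/2$ rounds.

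\textbf{Step 3 (coupling to a geometric variable).} Let $s$ denote the recourse of the ball, i.e.\ the first index $r\ge 1$ with $v_r>n/2$. I will build a monotone coupling. At step $r$, draw an independent $U_r\sim\mathrm{Unif}[0,1]$. Declare a ``geometric success'' when $U_r\le 1/2$. Realize the event ``$v_r$ is a root'' as $U_r\le p_r$, where $p_r=\frac{n/2}{n-v_{r-1}}\ge 1/2$, and choose $v_r$ within the roots or within the non-roots uniformly, consistently with Step 1.

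Let $G$ be the first $r$ with $U_r\le 1/2$. Then $G\sim\mathrm{Geom}(1/2)$, and at index $G$ we have $U_G\le 1/2\le p_G$, so the path has reached a root by then. Hence $s\le G$ pointwise. By \Cref{lem:stoch_dominat_equivalent_coupling}, the recourse is stochastically dominated by $G$, which is exactly the claim for every $i\le n/2$.
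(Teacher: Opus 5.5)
Your proof is correct. The paper does not prove this claim; it imports it as Lemma 5 of \cite{FineKS25} and uses it immediately, so your argument is a self-contained substitute rather than an alternative to an in-paper proof.

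You isolate exactly the two facts that matter. First, obliviousness fixes the deletion order, and hence the set $\{n/2+1,\dots,n\}$ of never-deleted bins, before any of the algorithm's coins are tossed. Second, the path $v_0<v_1<\cdots$ is strictly increasing, so the event $\{v_0=i,v_1=a_1,\dots,v_r=a_r\}$ depends only on the parent choices of $i,a_1,\dots,a_{r-1}$. The parent of $a_r$ is therefore still uniform on $\{a_r+1,\dots,n\}$, and it hits a root with conditional probability $\frac{n/2}{n-a_r}\ge\frac12$.

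Your Step~3 coupling is valid. The within-class uniform choice reproduces the uniform law on $\{v_{r-1}+1,\dots,n\}$, and $U_G\le\frac12\le p_G$ forces $s\le G$. It is, however, more machinery than needed. Chaining the same conditional bound gives $\Pr(s>j)\le 2^{-j}=\Pr(G>j)$ for every integer $j\ge 0$. Since both variables are integer-valued, this is already the domination of \Cref{def:frst_ordr_stochastic_domination}, without invoking \Cref{lem:stoch_dominat_equivalent_coupling}.
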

Therefore, for each $i\in [n]$, with probability $1-n^{-10}$, the recourse of the $i$-th ball is $O(\log n)$.
By a union bound, with probability $1-n^{-9}$, the recourse of the $i$-th ball is $O(\log n)$ for all $i\in [n/2]$.
Thus, the overall recourse is $O(n\log n)$ with probability $1-n^{-9}$.

\subsection{Bounding the max-load}\label{sec:no_split_max_load}
We now show that the size of each of the trees $T_1,\ldots,T_{\frac{n}{2}}$ is stochastically dominated by a Galton-Watson Process (aka branching process), which yields the $O(\log n)$ bound on the maximum tree size.
\begin{definition}[Galton-Watson Process]
    A stochastic process $\{X_i\}_{i\in \N}$ is called Galton-Watson (GW) if it follows a recurrence formula $X_1=1$ and $X_{i+1}=\sum_{j=1}^{X_i} Y_j^i$, where $Y_j^i$ are iid natural numbered RVs.
    The distribution $\mu$ of the $Y_j^i$ is called the offspring distribution.
    The total progeny of the process is $\sum_{i=1}^\infty X_i$.
\end{definition}

Consider a random variable that is a sum of $n/2$ independent Bernoulli random variables, with success probabilities $\tfrac{1}{n/2},\tfrac{1}{n/2+1},\ldots, \tfrac{1}{n-1}$, and denote its distribution by $\mu$.
\begin{lemma}\label{lem:tree_GW}
    Let $T$ be one of the trees defined above.
    The size of $T$ is stochastically dominated by the total progeny of a Galton-Watson Process $\{X_i\}_{i\in \N}$ with offspring distribution $\mu$.
\end{lemma}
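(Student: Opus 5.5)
We expose the tree $T$ rooted at some $r\in\{n/2+1,\dots,n\}$ by a breadth-first exploration that runs backwards in time, and couple it with the Galton--Watson process. The basic fact is that the parent choices $p(i)\in\{i+1,\dots,n\}$, for $i\in[n/2]$, are independent, and each $p(i)$ is uniform over $n-i$ values. For a vertex $v$, its children are the vertices $i<\min\{v,n/2+1\}$ with $p(i)=v$. So the number of children of $v$ is a sum of independent Bernoulli variables $\mathbf{1}[p(i)=v]$ with success probabilities $\tfrac{1}{n-i}$, ranging over $i\le\min\{v-1,n/2\}$. The full range $i=1,\dots,n/2$ gives exactly the probabilities $\tfrac{1}{n-1},\dots,\tfrac{1}{n/2}$, which is the distribution $\mu$. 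Any subset of these indices, with the same probabilities, is stochastically dominated by $\mu$, since it is the same sum with some terms dropped.

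The subtlety is independence between the offspring counts of different vertices. For a fixed index $i$, the events $\{p(i)=v\}$ and $\{p(i)=v'\}$ are mutually exclusive, so they are negatively correlated rather than independent. To handle this, we explore $T$ in BFS order. When vertex $v$ is processed, we reveal $\mathbf{1}[p(i)=v]$ only for indices $i<v$ whose parent has not yet been determined. Conditioned on the history, each such unrevealed $p(i)$ is uniform over the values in $\{i+1,\dots,n\}$ not yet excluded. This set has size at most $n-i$, so the conditional probability is $\ge \tfrac1{n-i}$. That bound points the wrong way for domination, so instead we use a cleaner coupling.

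The coupling equips every potential parent--child pair $(v,i)$ with independent uniforms $U_{v,i}\in[0,1]$. We then realize the choices sequentially: in the exploration, $i$ becomes a child of the currently processed $v$ if $p(i)$ is still undetermined and $U_{v,i}$ falls below the conditional probability that $p(i)=v$ given $p(i)\notin E_i$, where $E_i$ is the set of already-rejected candidates. The main obstacle is that this conditional probability $\tfrac{1}{n-i-|E_i|}$ exceeds $\tfrac1{n-i}$. To fix this, we do not process vertices one by one. When a vertex $v$ is processed, we sample independently for each $i<v$ a fresh Bernoulli$(\tfrac{1}{n-i})$ variable $B_{v,i}$, and declare the children of $v$ in $T$ to be the $i$ with $B_{v,i}=1$ that have not been claimed before. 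We must check that this produces the correct law of the forest restricted to $T$. The justification is that exploring $T$ requires only the marginal law of the indicator vectors $(\mathbf 1[p(i)=v])_i$ for the explored $v$'s. We generate it as follows: independently for each $i$, choose $p(i)$ uniform on $\{i+1,\dots,n\}$, which is equivalent to a sequential ``does $i$ attach to $v$?'' test done in a fixed order with conditional probabilities $\tfrac{1}{n-i-|E_i|}$. The comparison we need is therefore that the number of new children found at $v$ is stochastically dominated by $\mathrm{Bin}$-type sums with probabilities $\tfrac1{n-i}$ over the \emph{remaining} indices, plus possibly extra slack. We verify this by bounding the conditional success probability by that of a fresh independent copy via thinning: $\tfrac{1}{n-i-|E_i|}\le\tfrac{1}{n-i}\cdot\tfrac{n-i}{n-i-|E_i|}$, and since $|E_i|$ is at most the number of explored vertices, this factor is close to $1$ as long as $T$ is small.

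This near-domination is the step I expect to be delicate, because an exact domination by $\mu$ may require the cleaner alternative argument below, which I would try first. Generate the forest as follows: for each $i$ in decreasing order from $n/2$ to $1$, attach $i$ to a uniform vertex among $\{i+1,\dots,n\}$. Then trace the ancestry of $T$ from the roots down, revealing children of $v$ as those $i$ with $p(i)=v$. The indicators $\mathbf1[p(i)=v]$ for the explored $v$'s and the unexplored remainder partition the choices. Once all explored vertices are conditioned to be non-parents of $i$, the remaining conditional probability concerns only unexplored vertices, which is irrelevant. Thus the children counts along the BFS, each computed over the indices $i$ not already placed, are dominated coordinatewise by independent $\mu$-samples; this follows from negative association of multinomial-type indicators, under which mutually exclusive choices only decrease joint counts. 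The total progeny of the exploration is the size of $T$, and the monotone coupling of Lemma~\ref{lem:stoch_dominat_equivalent_coupling} applied step by step yields the domination by the total progeny of the GW process with offspring law $\mu$.
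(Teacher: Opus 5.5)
Your proposal has a genuine gap, and it sits where you expected it to: you never show that, conditional on the exploration history, the number of children of the vertex $v$ being processed is dominated by $\mu$.

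Your first route does not work as written. Declaring $i$ a child of $v$ when a fresh Bernoulli$(\tfrac{1}{n-i})$ fires does not reproduce the law of the forest, because the true conditional attachment probability is $\tfrac{1}{n-i-|E_i|}$. The thinning repair gives only approximate domination, with an error controlled by $|T|$ being small. That is circular, and it is not the exact statement.

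Your ``cleaner'' route relies on negative association, but that points the wrong way. For an unplaced $i$, the history reveals only the decreasing event $\{p(i)\notin E_i\}$. For negatively associated indicators, conditioning on a decreasing event in some coordinates can only increase the conditional expectation of increasing functions of the other coordinates. This is precisely the inflation $\tfrac{1}{n-i-|E_i|}>\tfrac{1}{n-i}$ you had already observed. Likewise, the remark that the remaining conditional probability ``concerns only unexplored vertices, which is irrelevant'' is not right: the vertex $v$ about to be processed is itself one of those unexplored vertices.

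The missing idea is to re-match the inflated Bernoulli terms to \emph{different} terms of $\mu$. Conditional on the history, the parents $p(i)$ of the unplaced indices $i<v$ are independent and uniform on $\{i+1,\dots,n\}\setminus\mathrm{Proc}$, where $\mathrm{Proc}$ is the set of vertices processed before $v$. So the offspring count of $v$ is a sum of independent Bernoulli$(1/d_i)$ variables with $d_i=n-i-|\mathrm{Proc}\cap(i,n]|$.

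Every processed vertex above $i$ lies in $\big((i,n/2]\cup\{r\}\big)\setminus\{v\}$, a set of size $n/2-i$, so $d_i\ge n/2$. For unplaced $i<i'$ one has $d_i-d_{i'}=(i'-i)-|\mathrm{Proc}\cap(i,i']|\ge 1$, because $i'$ is unprocessed. Thus $i\mapsto d_i$ is injective into $\{n/2,\dots,n-1\}$, which are exactly the denominators of the $n/2$ Bernoulli terms of $\mu$. The conditional offspring count is therefore a sub-sum of a fresh copy of $\mu$ with identical parameters, and the unused terms of $\mu$ absorb the inflation.

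This is the paper's argument: it processes the leaf of largest index and writes the parameters as $\tfrac{1}{n/2+|P|+i-u}$ with $|P|\le n/2-i$. The same injection works for your BFS order, and with it your step-by-step monotone coupling goes through.
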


\begin{proof}
    Without loss of generality, let $T$ be the random tree rooted at $n/2+1$, where every node $i \leq n/2$ chooses a parent $parent(i)$ uniformly and independently from $\{i+1, \dots, n\}$.
    We introduce a coupling of the size of $T$ to the total progeny of a GW Process $\{X_i\}_{i\in \N}$ with offspring distribution $\mu$.
    Observe that a GW process can be defined by a tree $T^{GW}$ s.t. the number of nodes in level $i$ is $X_i$, and the number of children of the $j$-th node in level $i$ is $Y_j^i$.
    The total progeny of the GW process is the size of $T^{GW}$.

    In our coupling, we enforce $T\subseteq T^{GW}$
by making each node in $T$ produce fewer or equal children than its counterpart in $T_{GW}$ under this coupling.
The number of children of the roots of $T$ and $T^{GW}$ are distributed according to $\mu$, so coupling them is trivial (we use the same random coins).

    Traverse $T$ from right to left, and
    denote the subtree we observed so far by $T_{now}$. That is, initially $T_{now}$ is just the root node $n/2+1$.
In an inductive manner, suppose $T_{now}$ is a subtree also of $T^{GW}$. We now add new nodes of $T$ to $T_{now}$, and match them with nodes of $T^{GW}$.
Denote by $i$ the leaf of $T_{now}$ with the maximum index.
    Every children $u$ of $i$ is chosen only from available indices 
(i.e., $u\notin T_{now}$),
    and must satisfy $u<i$.
    We denote the set of potential children of $i$ by $U_{now}$, i.e., $U_{now} = \{u<i : u\notin T_{now}\}$.
    By construction, the set of potential parents of a node $u\in U_{now}$ equals $[u+1,i]\cup ([i+1,n]\setminus T_{now})$, i.e., vertices in $[u+1,n]$ that are not internal nodes of $T_{now}$ (since $i$ is the leaf of $T_{now}$ with the maximum index).
Denote $P=[i,\tfrac{n}{2}]\setminus T_{now}$.

    Given $T_{now}$, the number of children of $i$ is a sum of $|U_{now}|$ independent Bernoulli random variables with success probabilities 
    \[\frac{1}{\tfrac{n}{2}+|P|+i-u}, \qquad u\in U_{now}.\] 
Recall that the offspring distribution $\mu$ in $T^{GW}$ is defined by a sum of independent Bernoulli's with success probabilities $\tfrac{1}{\tfrac{n}{2}+j}$ for $j\in [0,n/2-1]$.
    Observe that $|P|\leq \tfrac{n}{2}-i$. 
    Thus, we can match the $|U_{now}|$ Bernoulli variables describing $i$'s children to the corresponding Bernoulli variables in $T^{GW}$ that have identical success probabilities. This forces the number of children of $i$ in $T$ to be stochastically bounded by the number of children of $i$ in $T^{GW}$. The remaining Bernoulli variables in $T^{GW}$ simply utilize another independent source of randomness. 
    This concludes the inductive step.

    We thus obtained a coupling between $T$ and $T^{GW}$, s.t. $T\subseteq T^{GW}$. Therefore, $|T|\leq |T^{GW}|$, concluding the proof.
\end{proof}
We now bound the size of $T^{GW}$ (or equivalently, the total progeny of the GW process) using Bernstein's inequality (\Cref{lem:Bernstein}).
We first bound the expectation and variance of $Y\sim \mu$.
\begin{lemma}\label{lem:mu_expect_var_bound}
    Let $Y\sim \mu$. The following hold.
    \begin{itemize}
        \item $\E Y<0.7$, and
        \item $\var Y <0.7$.
    \end{itemize}
\end{lemma}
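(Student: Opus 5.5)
Since $Y\sim\mu$ is a sum of independent Bernoulli random variables $B_j$ with success probabilities $p_j=\tfrac{1}{n/2+j}$ for $j\in[0,\tfrac n2-1]$, both quantities can be computed directly. By linearity of expectation, and by independence for the variance,
\[
\E Y=\sum_{j=0}^{n/2-1}\frac{1}{n/2+j}=\sum_{m=n/2}^{n-1}\frac1m,
\qquad
\var Y=\sum_{j=0}^{n/2-1}p_j(1-p_j)\le \E Y .
\]
So it suffices to show $\E Y<0.7$; the variance bound then follows immediately, and strictly, since each $p_j<1$ once $n\ge 4$.

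To bound the harmonic tail, I will compare the sum with an integral. The function $1/x$ is decreasing, so for each $m$ we have $\tfrac1m\le\int_{m-1}^{m}\tfrac{dx}{x}$. This gives
\[
\sum_{m=n/2}^{n-1}\frac1m\le \int_{n/2-1}^{n-1}\frac{dx}{x}=\ln\frac{n-1}{n/2-1}=\ln\Big(2+\frac{2}{n-2}\Big).
\]
The right-hand side tends to $\ln 2\approx 0.693$, but it exceeds $0.7$ for moderate $n$. A sharper estimate is therefore worth using: the bound $\sum_{m=a}^{b-1}\tfrac1m\le \ln\tfrac{b}{a}+\tfrac{1}{2a}$, which follows from convexity of $1/x$ via the trapezoid rule, yields $\E Y\le \ln 2+\tfrac1n$. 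This is below $0.7$ as soon as $n>1/(0.7-\ln 2)\approx 146$.

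The only real obstacle is that the constant $0.7$ is barely above $\ln 2$, so the statement is an asymptotic one. I would state explicitly that $n$ is assumed larger than a fixed absolute constant, roughly $n\ge 150$; smaller $n$ is irrelevant for the paper's ``with high probability'' bounds. Alternatively, for small even $n$ one can check numerically that $H_{n-1}-H_{n/2-1}<0.7$. For instance, $n=10$ gives $\tfrac15+\tfrac16+\tfrac17+\tfrac18+\tfrac19\approx0.746$, so small $n$ genuinely fails, and the large-$n$ assumption is necessary rather than a convenience.
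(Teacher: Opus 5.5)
Your proof is correct and is essentially the paper's argument: write $\E Y=\sum_{m=n/2}^{n-1}\frac1m$, bound it by $\ln 2+O(1/n)$, and use $\var Y\le \E Y$. You replace the paper's asymptotic expansion of $H_n$ with an explicit integral comparison. You also rightly state the large-$n$ assumption that the paper leaves implicit in its $O(1/n)$ term; your $n=10$ example ($\approx 0.746$) shows the lemma really does fail for small $n$.

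One step is wrong but not needed: the ``sharper'' estimate. The general inequality $\sum_{m=a}^{b-1}\frac1m\le\ln\frac ba+\frac1{2a}$ is false; for example, $a=1$, $b=10$ gives $H_9\approx 2.829>\ln 10+\frac12\approx 2.803$. The trapezoid rule cannot justify it either, because for convex $1/x$ the trapezoid inequality gives a \emph{lower} bound on the sum. The valid convexity tool is the midpoint (tangent-line) inequality $\frac1m\le\int_{m-1/2}^{m+1/2}\frac{dx}{x}$, which gives $\E Y\le\ln\bigl(2+\frac{1}{n-1}\bigr)$.

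In any case, your first, correctly justified bound $\ln\bigl(2+\frac{2}{n-2}\bigr)$ is already below $0.7$ for all $n\ge 148$, since $e^{0.7}\approx 2.0138$. So you can simply drop the sharper estimate.
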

To prove \Cref{lem:mu_expect_var_bound}, we use the following claim.
\begin{claim}\label{claim:sum_harmonic}
   \[
   \sum_{i=1}^{n/2} \tfrac{1}{n-i}\leq \ln 2 + O(\tfrac{1}{n}).
   \]
\end{claim}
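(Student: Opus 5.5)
The plan is to reindex and compare the sum with an integral. Substituting $k=n-i$, as $i$ runs over $1,\dots,n/2$ the index $k$ runs over $n/2,\dots,n-1$. Hence
\[
\sum_{i=1}^{n/2}\tfrac{1}{n-i}=\sum_{k=n/2}^{n-1}\tfrac1k = H_{n-1}-H_{n/2-1},
\]
where $H_m$ denotes the $m$-th harmonic number. There are two natural routes from here.

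The first route uses the monotonicity of $1/x$. Since $1/x$ is decreasing, every $k\ge 2$ satisfies
\[
\tfrac1k\le\int_{k-1}^{k}\tfrac{dx}{x}.
\]
Summing over $k=n/2,\dots,n-1$ gives
\[
\int_{n/2-1}^{n-1}\tfrac{dx}{x}=\ln\tfrac{n-1}{n/2-1}.
\]
It remains to write $\frac{n-1}{n/2-1}=2\cdot\frac{n-1}{n-2}=2\bigl(1+\tfrac{1}{n-2}\bigr)$ and use $\ln(1+y)\le y$. This yields the bound $\ln 2+\frac{1}{n-2}=\ln 2+O(1/n)$, which holds for $n\ge 4$; smaller $n$ are absorbed into the $O(\cdot)$.

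The second route is an alternative check via the expansion $H_m=\ln m+\gamma+\tfrac{1}{2m}+O(1/m^2)$. It gives $H_{n-1}-H_{n/2-1}=\ln\frac{n-1}{n/2-1}+O(1/n)$, which again equals $\ln 2+O(1/n)$. I would present the integral-comparison version, since it is elementary and self-contained.

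None of the steps should be a real obstacle. The only care needed is the direction of the integral comparison: to get an upper bound, use the interval $[k-1,k]$, not $[k,k+1]$. It is also worth recording the explicit error $\frac{1}{n-2}$, because the later computation in \Cref{lem:mu_expect_var_bound} needs $\E Y=\sum_{i=1}^{n/2}\frac{1}{n-i}<0.7$, and $\ln 2\approx 0.693$ leaves only a small margin. So the claim is used with $n$ large enough that $\frac{1}{n-2}<0.7-\ln 2$.
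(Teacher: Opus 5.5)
Your proof is correct, and your main route differs from the paper's. The paper plugs the asymptotic expansion $H_m=\ln m+\gamma+\frac{1}{2m}+O(m^{-2})$ into $H_{n-1}-H_{n/2-1}$, which is exactly your ``second route''. You instead bound each term $1/k$ by $\int_{k-1}^{k}dx/x$, which gives the explicit, non-asymptotic bound $\ln 2+\frac{1}{n-2}$ for all even $n\ge 4$.

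The expansion route depends on an external fact. In return it gives a two-sided estimate (in fact the sum equals $\ln 2+\frac{1}{2n}+O(n^{-2})$), though with an unspecified constant in the error term. Your route is self-contained and gives only an upper bound, but that is all the claim asserts.

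Your version also makes the threshold on $n$ quantitative. The paper leaves this implicit when it concludes the bound $<0.7$ in \Cref{lem:mu_expect_var_bound}. With your bound you need $\frac{1}{n-2}<0.7-\ln 2\approx 0.0069$, i.e.\ roughly $n\ge 148$.

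If you want the error term to match the true $\frac{1}{2n}$, use convexity of $1/x$ instead of monotonicity: $1/k\le\int_{k-1/2}^{k+1/2}dx/x$. Summing gives $\ln\frac{2n-1}{n-1}\le \ln 2+\frac{1}{2(n-1)}$.
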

\begin{proof}
    Denote the $n$-th harmonic number by $H_n = \sum_{k=1}^n \frac{1}{k}$.
   Using the asymptotic expansion $H_n = \ln n + \gamma + \frac{1}{2n} + O(n^{-2})$ (where $\gamma$ is the Euler-Mascheroni constant), 
   \[
   H_{n-1} - H_{n/2-1} = (\ln n + \gamma) - (\ln(n/2) + \gamma) +O(\tfrac{1}{n}) = \ln 2+O(\tfrac{1}{n}).
   \]
\end{proof}
\begin{proof}[Proof of \Cref{lem:mu_expect_var_bound}.]
    We have,
$\E Y = \sum_{i=1}^{n/2} \tfrac{1}{n-i}$.
By \Cref{claim:sum_harmonic}, $\E Y\leq \ln 2 + O(\tfrac{1}{n})$.
    As $\ln 2\approx 0.693$, this concludes the proof of the first bullet.
    Denote by $I_i$ a Bernoulli random variable with success probability $\tfrac{1}{n+i}$, for $i=1,\ldots,n/2$.
    We have
    \[
    \var Y = \sum_{i=1}^{n/2} \var I_i\leq \sum_{i=1}^{n/2} \E I_i^2 = \sum_{i=1}^{n/2} \E I_i=\E Y<0.7,
    \]
    which concludes the proof.
\end{proof}

Let us now bound the total progeny of the GW process.
\begin{lemma}\label{lem:GW_tree_size}
    With probability $1-n^{-10}$, the total progeny of a GW process with offspring distribution $\mu$ is $O(\log n)$.
\end{lemma}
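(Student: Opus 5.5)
The plan is the standard exploration-process (random walk) encoding of a Galton--Watson tree, combined with Bernstein's inequality (\Cref{lem:Bernstein}). Explore $T^{GW}$ one node at a time, and let $Y_1,Y_2,\ldots$ be i.i.d.\ copies of $Y\sim\mu$, where $Y_s$ is the number of children of the $s$-th explored node. The total progeny is the first time the walk $S_t=1+\sum_{s=1}^t (Y_s-1)$ hits $0$. In particular, if the progeny exceeds $t$, then $S_t\ge 1$, i.e.\ $\sum_{s=1}^t Y_s\ge t$. So it suffices to show that for $t=c\log n$, with $c$ a large constant,
\[
\Pr\Big(\sum_{s=1}^t Y_s\geq t\Big)\leq n^{-10}.
\]

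Each $Y_s$ is itself a sum of $n/2$ independent Bernoulli variables. Hence $\sum_{s=1}^t Y_s$ is a sum of $tn/2$ independent Bernoulli variables, each bounded by $C=1$, which puts us exactly in the setting of \Cref{lem:Bernstein}. By \Cref{lem:mu_expect_var_bound}, the mean is $\E\sum Y_s=t\,\E Y<0.7t$, and the variance is $V=t\var Y<0.7t$. The event $\sum Y_s\ge t$ therefore requires a deviation of at least $0.3t$ above the mean. Bernstein's inequality with deviation $0.3t$ gives
\[
\Pr\Big(\sum_{s=1}^t Y_s\geq t\Big)\leq 2\exp\Big(-\frac{(0.3t)^2/2}{0.7t+0.1t}\Big)=2\exp(-\Omega(t)).
\]
Choosing $t=c\log n$ with $c$ large enough makes this at most $n^{-10}$.

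The one step needing care is justifying the encoding. The total progeny could be infinite, although this cannot happen here, since the process is subcritical ($\E Y<1$). To avoid any subtlety, I would argue directly with a fixed number of exploration steps. Pre-sample infinitely many i.i.d.\ offspring counts and run the exploration for $t$ steps. If after $t$ steps the walk has not yet hit zero, then more than $t$ nodes exist, and we have $\sum_{s\le t}Y_s\ge t$. The event "progeny $>t$" is thus contained in the bad event bounded above. Everything else is a routine computation, so I expect no real obstacle; the main point is only to check that the Bernoulli summands remain independent across the $t$ copies, which holds by construction.
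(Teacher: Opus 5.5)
Your proposal is correct and follows essentially the same route as the paper. Like the paper, you encode the tree by the exploration walk $S_t=1+\sum_{s\le t}(Y_s-1)$, reduce large progeny to the event $\sum_{s\le t}Y_s\ge t$ (the paper uses the looser $\ge t-1$), and apply Bernstein's inequality to the resulting sum of $tn/2$ independent Bernoulli variables with the per-offspring mean and variance bounds $<0.7$; keeping the factor $1/2$ in Bernstein's numerator, which the paper drops, only changes constants.
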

\begin{proof}
    Enumerate the nodes of the tree using a search order (e.g., Breadth-First Search). 
    Let $\xi_i$ be the number of children of the $i$-th node visited. 
    Since nodes reproduce independently, the sequence $\xi_1, \xi_2, \dots$ consists of i.i.d.\ random variables distributed according to $\mu$.
    Let $S_k$ be the number of unexplored nodes after visiting $k$ nodes.
    We have the following recurrence: Start with one root, $S_0 = 1$, and at each step $i$, remove the current node and add its children
    \[S_i = S_{i-1} - 1 + \xi_i.\]
    Thus, 
    \[S_k = 1 + \sum_{i=1}^k (\xi_i - 1).\]
    The process stops when there are no nodes to explore ($S_k = 0$). The total progeny $Z$ is exactly the smallest index $k$ such that $S_k = 0$. 
    Therefore, the event that the tree size is large ($Z \ge k$) implies that $S_i\geq 1$ for all $0\leq i\leq k$. 
    A loose but sufficient condition for $Z \ge k$ is that the sum of offspring is at least the number of parents processed (minus the initial root),
    \[Z \ge k \implies \sum_{i=1}^k \xi_i \ge k - 1.\]
    Recall that each $\xi_i$ is a sum of $\tfrac{n}{2}$ Bernoulli random variables (with different distributions), thus
    $\sum_{i=1}^k \xi_i$ is in fact a sum of $kn/2$ Bernoulli random variables.
    Hence, by \Cref{lem:mu_expect_var_bound} and Bernstein's bound,
    \[
    \Pr(\sum_{i=1}^k \xi_i \ge k - 1)\leq 2\exp(-\tfrac{(0.3 k)^2}{0.7k+0.3k/3}).
    \]
    Taking $k=O(\log n)$, we obtain that the total progeny $Z\leq k=O(\log n)$ w.p. $1-n^{-10}$, as desired.
\end{proof}
We are now ready to bound the max-load in \Cref{thm:oblivious_no_split}, and thus conclude its proof.
\begin{proof}[Proof of \Cref{thm:oblivious_no_split}, max-load.]
    By \Cref{lem:tree_GW}, the load of a bin is dominated by the size of a GW process with offspring distribution $\mu$. 
    By \Cref{lem:GW_tree_size}, its size is $O(\log n)$ with probability $1-n^{-10}$.
    A union bound over the $n/2$ trees concludes the proof.
\end{proof}

\section*{Acknowledgments}
\paragraph{Haim Kaplan:} Partially supported by the Israel Science Foundation (grant 1156/23), and the Blavatnik Research Foundation.

\paragraph{Uri Stemmer:} Partially supported by the Israel Science Foundation (grant 1419/24), and the Blavatnik Research Foundation.

\bibliographystyle{alphaurl}
\bibliography{references}

\appendix
\section{Uniform two choices game}\label{appendix_two_choice}
In this section, we prove \Cref{thm:adaptive_balls_bins_two_choice}, which states that if the algorithm uses the two choices allocation scheme, then after $n-n/2^\ell$ rounds of adaptive game, the recourse is $O(\ell n)$ and the maximum load is $O(2^\ell+\ell\log \log n)$.

The proof is similar to that of \Cref{thm:adaptive_balls_bins_arbitrary_rounds}. It uses the following non-adaptive games that use the two choices allocation scheme.
In the $j$-th non-adaptive game, we begin with $n/2^{j-1}$ empty bins. 
Then, we iteratively draw a pair of indices $(i,i')$ uniformly over $[n/2^{j-1}]^2$, and place a ball in the least loaded among bins $i,i'$. 
The game proceeds until every subset $S\subset[n/2^{j-1}]$ of size $|S|=n/2^{j}$ contains the two indices of at least $n$ pairs.
We denote by $Y_j$ the number of inserted balls, and denote by $B_j$ the maximum load of a bin at the end of the game.
\begin{lemma}\label{lem:simulation_two_choice}
    Let $j\in [\log n]$.
  Fix an adaptive adversary, and denote the recourse it obtains against two choices allocations in the $j$-th phase by $X_j$, and its maximum load in the $j$-th phase by $A_j$. Then, $Y_j\succeq X_j$ and $B_j\succeq A_j$, where $Y_j$ and $B_j$ are the number of inserted balls and the maximum load in the $j$-th non-adaptive game defined above.
\end{lemma}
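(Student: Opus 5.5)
I will follow the coupling in the proof of \Cref{lem:simulation_uniform_j_game}, replacing single bin indices by ordered pairs of indices. Identify $\alive_j$ with $[n/2^{j-1}]$ and let $\alive\subseteq\alive_j$ be the set of surviving bins during phase $j$. For $i\in\alive$, let $L^a_i$ be the load of bin $i$ coming from balls moved in phase $j$, and let $L^{na}_i$ be the load of bin $i$ in the non-adaptive game. The invariant is again $L^{na}_i=L^a_i$ for all $i\in\alive$. One subtlety concerns which loads the two-choices comparison uses in the adaptive game. It should compare the loads of balls placed during phase $j$ (the ``phase-$j$ layer''), consistently with how $A_j$ is defined, rather than total loads. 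I will take the adaptive phase-$j$ process to be defined this way, matching the phase decomposition of \Cref{thm:adaptive_balls_bins_arbitrary_rounds}.

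To simulate one ball thrown in the adaptive game, draw pairs $(i,i')$ uniformly from $[n/2^{j-1}]^2$ in the non-adaptive game. Each drawn pair places a ball in the non-adaptive game, at the less loaded of $i,i'$. Keep drawing until a pair with $i,i'\in\alive$ appears. Conditioned on landing in $\alive\times\alive$, such a pair is uniform over $\alive\times\alive$, which is exactly the distribution of the two choices in the adaptive game. By the invariant, the two games compare the same loads, so both place the ball in the same bin $i^*$, and the invariant is preserved at $i^*$. For the earlier rejected pairs, at least one index lies outside $\alive$. If both indices are outside $\alive$, the ball lands outside $\alive$ and the invariant is untouched. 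The problematic case is a mixed pair with one index in $\alive$ and one outside. There the non-adaptive ball may land on the alive index and increase $L^{na}$ on an alive bin.

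This mixed-pair case is the main obstacle. The fix I would try is to weaken the invariant to $L^{na}_i\ge L^a_i$, since domination is all the conclusion needs. That, however, breaks exact agreement of the two-choices decision, because larger non-adaptive loads may send a ball to a different bin. The cleanest route is a different rule for rejected pairs. Whenever a drawn pair is not fully inside $\alive$, the non-adaptive game places its ball in whichever index lies outside $\alive$. Deleted bins are never used again, so their contents are irrelevant to the coupling. This amends the non-adaptive game's placement rule for rejected pairs. The question is whether the statement's non-adaptive game tolerates this. I expect the recourse bound to be unaffected, because it depends only on the stopping rule, which counts pairs inside $S\times S$ and not loads. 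The maximum load of the modified game is still controlled in the same way. If the exact statement is insisted upon, I would instead invoke the standard majorization monotonicity of two-choice processes (adding balls only majorizes the load vector) to transfer the bound.

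The stopping argument goes through verbatim. Just before the $\tfrac{n}{2^j}$-th deletion we have $|\alive|=\tfrac{n}{2^j}+1$. Each accepted pair corresponds to one phase-$j$ ball of the adaptive game, and there are at most $n$ such balls. Hence some subset $S\subset\alive$ of size $\tfrac{n}{2^j}$ has fewer than $n$ accepted pairs with both indices in $S$. Any pair with both indices in $S\subseteq\alive$ is accepted, so $S\times S$ contains fewer than $n$ drawn pairs, and the non-adaptive game has not ended. Therefore $Y_j\ge X_j$ pointwise. Since every accepted ball is placed identically in both games, also $B_j\ge \max_{i\in\alive}L^{na}_i\ge A_j$. \Cref{lem:stoch_dominat_equivalent_coupling} then gives $Y_j\succeq X_j$ and $B_j\succeq A_j$.
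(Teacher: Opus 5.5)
The recourse half of your proposal is fine. The stopping argument only counts drawn pairs in $S\times S$, so it does not depend on the placement rule, and it matches the paper. Running the adaptive two-choices comparison on the phase-$j$ loads $L^a$ is also what the paper's proof does implicitly, so that convention is not the issue.

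The gap is in the load half. You discard the invariant $L^{na}_i\ge L^a_i$ because the two games may then pick different bins. But the two games do not need to agree, and showing this is the key step of the paper's proof. For an accepted pair $(i,i')$, let $i'$ be the bin the non-adaptive game chooses, so $L^{na}_{i'}\le L^{na}_i$. With ties broken toward $i'$, the adaptive game chooses $i$ instead only when $L^a_i<L^a_{i'}$, and then
\[
L^a_i<L^a_{i'}\le L^{na}_{i'}\le L^{na}_i .
\]
By integrality $L^a_i+1\le L^{na}_i$, so the invariant survives the insertion. Rejected pairs, mixed ones included, only increase $L^{na}$ and cannot break it. With this step you can keep the non-adaptive game of the statement unchanged and conclude $B_j\ge\max_{i\in\alive}L^{na}_i\ge A_j$.

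The route you chose instead does not prove the lemma, for three reasons.
\begin{itemize}
\item Placing mixed-pair balls on the deleted index makes the non-adaptive game depend on the adversary's deletions. Its maximum load is then no longer the $B_j$ of the statement. That $B_j$ is the maximum load of a standard two-choice process, and the subsequent $O(2^j+\log\log n)$ bound relies on exactly this.
\item The modified game is not ``controlled in the same way'' either. Mixed-pair balls land on deleted indices with no load comparison, i.e., by a one-choice rule, so the two-choice bound does not apply to it.
\item Restricting the maximum to $\alive$ does not rescue it. There your exact invariant gives $\max_{i\in\alive}L^{na}_i=A_j$, which is circular.
\end{itemize}
Your fallback to majorization monotonicity is not carried out. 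The standard majorization couplings also match choices by rank rather than by bin index, so they do not respect the fixed set of surviving indices. What you actually need is the index-wise monotonicity displayed above.
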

\begin{proof}
The proof is similar to the single choice (uniform) game of \Cref{lem:simulation_uniform_j_game}.
    We first analyze the recourse.
    We create a coupling between $Y_j$ and $X_j$, by simulating the non-adaptive game during the $n/2^j$ rounds of the adaptive game.
Abusing notation slightly, we continue to use $X_j$ and $Y_j$ to denote these random variables within the coupled setting.

    For the non-adaptive game, let $L^{na}_i$ denote the load of the $i$-th bin for $i\in [n/2^{j-1}]$. In the adaptive game, denote by $\alive_j\subseteq [n]$ the set of living bins right before the $j$-th phase starts.
    In particular, $|\alive_j|=n/2^{j-1}$, and therefore there is a one-to-one mapping between $\alive_j$ and $[n/2^{j-1}]$. For simplicity of presentation, we omit this mapping and simply treat every $i\in \alive_j$ as a number in $[n/2^{j-1}]$.
    Let $\alive\subseteq \alive_j$ denote the set of living bins during phase $j$; for each $i\in \alive$, let $L^a_i$ be the load of the $i$-th bin incurred by balls that move in the $j$-th phase. 
We maintain the following invariant.
    \begin{itemize}
        \item \textbf{Invariant.} For every $i\in \alive$, we have $L^{na}_i\geq L^a_i$.
    \end{itemize}
    Initially, in the non-adaptive game there are no balls. No balls moved yet in the adaptive game. Clearly, the invariant holds.

    Now, consider a round of the adaptive game.
    The adaptive adversary picks a bin to delete, say containing $b$ balls.
    Suppose these $b$ balls are thrown one at a time in the adaptive game, and let us simulate throwing one ball using the non-adaptive game.
    In the non-adaptive game, to throw a ball, we independently draw two indices $i,i'\in [n/2^{j-1}]$, and allocate the ball to the bin with smaller load (i.e., bin $i$ if $L^{na}_i\leq L^{na}_{i'}$, and bin $i'$ otherwise). 
    If $(i,i')\notin \alive^2$, i.e., bin $i$ or $i'$ were deleted in the adaptive game, we repeatedly throw another new ball to the $n/2^{j-1}$ bins, until we get two indices in $(i,i')\in \alive$.
    The process of repeatedly throwing new balls may increase the load over $\alive$ in the non-adaptive game, but this doesn't damage the invariant.
    When obtaining two indices $i,i'\in \alive$, let us assume without loss of generality that $L^{na}_i\geq L^{na}_{i'}$, and so we allocate the ball in bin $i'$ in the non-adaptive game. 
    If $L^a_i\geq L^a_{i'}$ as well, we allocate the ball in bin $i'$ also in the adaptive game, and the invariant holds.
    Otherwise, $L^a_i<L^a_{i'}$, so we must allocate the ball in bin $i$ in the adaptive game.
Before allocating the ball, we have by the invariant, $L^a_i<L^a_{i'}\leq L^{na}_{i'}\leq L^{na}_i$, and since the loads are integers, $L^a_i\leq L^{na}_i-1$. 
    Thus, the invariant still holds after adding this one ball to bin $i$ in the adaptive game (which increases $L^a_i$ by $1$).
This process is repeated for all the $b$ balls, and then we proceed to the next round of the adaptive game.

    Crucially, we claim that the above is a simulation of the adaptive game, and it may end only after the adaptive adversary deletes $n/{2^j}$ bins.
    In the non-adaptive game, the distribution of a pair of indices $i,i'$, given that both land in $\alive$, is uniform over $\alive^2$. Thus, the above indeed simulates the adaptive game.

     Observe that right before the adaptive adversary deletes the $\tfrac{n}{2^{j}}$-th bin, we have that $|\alive|=\tfrac{n}{2^j}+1$. Since $\sum_{i\in \alive} L^{a}_i\leq n$, there must exists a subset $S$ of $\alive$ of size $\tfrac{n}{2^j}$, whose bins in the adaptive game contain strictly less than $n$ balls.
     Hence, in the adaptive game, there were strictly less than $n$ pairs of indices both in $S$. Therefore, by the construction, since $S\subset \alive$, there were strictly less than $n$ pairs of indices in $S$ also in the non-adaptive game, and thus the non-adaptive game didn't end.
    Hence, the number $Y_j$ of balls in the non-adaptive game is at least the recourse $X_j$ of the adaptive game in the $j$-th phase, i.e., $Y_j\geq X_j$.

    Over the same probability space, we immediately obtain a similar result for the maximum load.
    The maximum load $B_j$ in the non-adaptive game is larger than the maximum load in $\alive$, which by the invariant, is larger than the maximum load incurred by balls that move in the $j$-th phase of the adaptive game, and thus $B_j\geq A_j$. The proof is concluded by \Cref{lem:stoch_dominat_equivalent_coupling}.
\end{proof}

\begin{lemma}\label{lem:num_balls_two_choice_cover_all_n2}
    Consider the $j$-th two choices non-adaptive game and let $Y_j$ be the number of balls it throws. Then,
    $Y_j\leq 24n$ with probability at least $1-e^{-n}$.
\end{lemma}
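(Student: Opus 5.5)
We mirror the proof of \Cref{lem:non_adaptive_game_length}. The plan is to consider the two choices non-adaptive game run for exactly $24n$ steps, and to show that with probability at least $1-e^{-n}$, every subset $S\subset[n/2^{j-1}]$ with $|S|=n/2^j$ already contains both indices of at least $n$ of the drawn pairs. Since the stopping condition is monotone in the number of drawn pairs, this gives $Y_j\le 24n$ on that event.

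The key simplification is that the stopping condition depends only on the sequence of drawn index pairs, not on where the balls are placed. The two choices allocation rule is therefore irrelevant for bounding $Y_j$: the pairs $(i,i')$ are i.i.d.\ uniform over $[n/2^{j-1}]^2$, regardless of the loads.

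Fix $S$ of size $n/2^j$, which is half of the $n/2^{j-1}$ bins. For the $k$-th pair, $k\in[24n]$, let $I_k$ indicate that both indices lie in $S$. Then $\Pr(I_k=1)=(1/2)^2=1/4$, the indicators are independent, and $\E\sum_k I_k=6n$. By Hoeffding's inequality (\Cref{lem:Hoeffding}) with deviation $t=5n$,
\[
\Pr\Big(\sum_{k=1}^{24n} I_k<n\Big)\le 2\exp\Big(-\frac{2(5n)^2}{24n}\Big)=2\exp\big(-\tfrac{25}{12}n\big).
\]

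Next we take a union bound over at most ${n/2^{j-1}\choose n/2^j}\le 2^{n}$ subsets. This gives failure probability at most
\[
2\cdot 2^n e^{-\frac{25}{12}n}=2e^{(\ln 2-\frac{25}{12})n}\le 2e^{-1.39n}\le e^{-n}
\]
for $n\ge 2$. Small $n$ can be absorbed, or handled by noting that the constants have slack.

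The only delicate point is the constant in the union bound. Hoeffding's exponent $2t^2/(24n)$ must beat $n\ln 2$ plus an extra $n$, and with $24n$ pairs and mean $6n$ it does, as computed above. Beyond that, the argument is a routine adaptation of \Cref{lem:non_adaptive_game_length}.
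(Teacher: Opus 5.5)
Your proposal is correct and follows essentially the same route as the paper. You fix a half-size subset $S$, note that each of the $24n$ i.i.d.\ pairs lands in $S^2$ with probability $1/4$, apply Hoeffding with deviation $5n$, and take a union bound over at most $2^n$ subsets; your remarks that the stopping rule depends only on the drawn pairs and is monotone in their number make explicit a step the paper leaves implicit.
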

\begin{proof}
    The proof is similar to the proof of \Cref{lem:non_adaptive_game_length}.
    Consider throwing $24n$ pairs of indices independently and uniformly over $[n/2^{j-1}]^2$.
    We show that with high probability, all subsets of $n/2^j$ bins contain at least $n$ pairs, and the claim follows.
    
    Let a set $S\subset [n/2^{j-1}]$ of size $|S|=\tfrac{n}{2^j}$. For every $i\in[24n]$, denote by $I_i$ an indicator that the $i$-th pair is in $S^2$. 
    Thus, the number of pairs in $S$ equals $\sum_{i=1}^{24n} I_i$. Clearly, $I_i=1$ w.p. $\tfrac{1}{4}$ and $0$ otherwise, and all these indicators are independent.
    Thus, by Hoeffding's inequality,
    \[
    \Pr\Big(\sum_{i=1}^{24n} I_i<n\Big) 
    \leq \Pr\Big(|\sum_{i=1}^{24n} I_i-6n|>5n\Big)
    \leq 2\exp\Big(-\frac{2(5n)^2}{24 n}\Big) \leq 2\exp(-2n).
    \]
    By a union bound, the probability that there exists a set $S\subset [n/2^{j-1}]$ of size $|S|=\tfrac{n}{2^j}$ containing less than $n$ balls is at most
    \[
    {n/2^{j-1}\choose n/2^j}\cdot 2\exp(-2n)
    \leq 2^{n}\cdot 2\exp(-2n)
    \leq \exp(-n),
    \]
    which concludes the proof of the lemma.
\end{proof}
\begin{proof}[Proof of \Cref{thm:adaptive_balls_bins_two_choice}.]
    By combining
    Lemma \ref{lem:simulation_two_choice} and Lemma \ref{lem:num_balls_two_choice_cover_all_n2},
 we have that for each $j\in [\ell]$,
    with probability at least $1-e^{-n}$, the recourse is at most $24n$.
    By a union bound, with probability at least $1-e^{-n/2}$, the total recourse is $O(\ell n)$.
    In standard two choices allocations with $10n$ balls and $n/2^{j-1}$ bins, by \cite{BCSV00_2choice_heavy},
    with probability $1-\tfrac{1}{n^{11}}$, the maximum load is 
    $B_j=O( 2^j + \log\log n )$.

    Thus, by a union bound, with probability $1-n^{-10}$, the maximum load
    in the non-adaptive game is 
    \[
    \leq \sum_{j=1}^\ell A_j \leq \sum_{j=1}^\ell B_j \leq O(2^\ell +\ell \log\log n),
    \]
and the proof follows.
\end{proof}

\section{Proof of \Cref{lem:sum_LlogL_classic_balls}}\label{sec:LlogL}
To prove \Cref{lem:sum_LlogL_classic_balls}, we apply McDiarmid's inequality.
\begin{lemma}[McDiarmid's inequality]
    Suppose a function $f:\R^n\to \R$ satisfies for all $i\in [n]$, and $x_1,\ldots,x_n\in\R$,
    \[
    \sup_{x'_i\in \R} \big|f(x_1,\ldots,x_i,\ldots,x_n)-f(x_1,\ldots,x'_i,\ldots,x_n)\big|\leq c.
    \]
    Let $X_1,\ldots,X_n\in \R$ be independent random variables.
    For all $t>0$,
    \[
    \Pr\big(|f(X_1,\ldots,X_n)-\E f(X_1,\ldots,X_n)|\geq t\big)\leq 2\exp\Big(-\frac{2t^2}{nc^2}\Big).
    \]
\end{lemma}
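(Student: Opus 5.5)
We prove the stated form of McDiarmid's inequality by the standard Doob martingale argument combined with Hoeffding's lemma. Write $Z=f(X_1,\ldots,X_n)$ and define the Doob martingale $Z_i=\E[Z\mid X_1,\ldots,X_i]$ for $i=0,\ldots,n$. Thus $Z_0=\E Z$ and $Z_n=Z$, and $Z-\E Z=\sum_{i=1}^n D_i$ with increments $D_i=Z_i-Z_{i-1}$. Each $D_i$ satisfies $\E[D_i\mid X_1,\ldots,X_{i-1}]=0$.

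The key step, where independence is used, is to show that conditionally on $X_1,\ldots,X_{i-1}$, the increment $D_i$ lies in an interval of length at most $c$. For fixed $x_1,\ldots,x_{i-1}$, define
\[
g(x)=\E\big[f(x_1,\ldots,x_{i-1},x,X_{i+1},\ldots,X_n)\big].
\]
Because $X_{i+1},\ldots,X_n$ are independent of $X_1,\ldots,X_i$, we have $Z_i=g(X_i)$ and $Z_{i-1}=\E g(X_i)$. The bounded-differences hypothesis, applied pointwise inside the expectation, gives $|g(x)-g(x')|\le c$ for all $x,x'$. Hence
\[
D_i\in\Big[\inf_x g(x)-Z_{i-1},\ \sup_x g(x)-Z_{i-1}\Big],
\]
an interval of width at most $c$ whose endpoints depend only on $X_1,\ldots,X_{i-1}$. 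This is the one step I expect to need care. The difficulty is that one should not bound $|D_i|\le c$ and apply the Azuma inequality stated earlier, since that only gives the weaker exponent $t^2/(2nc^2)$. Getting the exponent $2t^2/(nc^2)$ requires the width-$c$ (rather than radius-$c$) range.

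Next, apply Hoeffding's lemma conditionally: a mean-zero random variable supported on an interval of length $c$ satisfies
\[
\E\big[e^{\lambda D_i}\mid X_1,\ldots,X_{i-1}\big]\le e^{\lambda^2c^2/8}.
\]
Then peel off the increments one at a time by the tower property, which gives $\E e^{\lambda(Z-\E Z)}\le e^{n\lambda^2c^2/8}$ for every $\lambda>0$. By Markov's inequality applied to $e^{\lambda(Z-\E Z)}$,
\[
\Pr(Z-\E Z\ge t)\le \exp\big(-\lambda t+n\lambda^2c^2/8\big).
\]
The choice $\lambda=4t/(nc^2)$ makes the right-hand side $\exp(-2t^2/(nc^2))$.

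The same argument applied to $-f$, which has the same bounded-differences constant, bounds the lower tail by the same quantity. A union bound over the two tails yields the factor $2$ in the statement. Measurability and finiteness of $\E Z$ follow from the bounded differences, since $f$ differs from its value at a fixed point by at most $nc$; I would note this briefly and not dwell on it.
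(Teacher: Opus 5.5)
The paper gives no proof of this lemma. It quotes McDiarmid's inequality as a standard tool and applies it only in Appendix~B, to concentrate $F=\sum_i L_i\ln L_i$ viewed as a function of the $10n$ ball positions.

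Your argument is the standard proof and it is correct. It uses the Doob martingale, a conditional range of \emph{width} $c$ obtained from independence via $g$, the conditional Hoeffding lemma, the Chernoff bound with $\lambda=4t/(nc^2)$, and $-f$ for the lower tail. Your remark is also right: bounding $|D_i|\le c$ and using the paper's Azuma inequality would only give the exponent $t^2/(2nc^2)$.

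One side claim needs fixing. Bounded differences give $|f(x)-f(y)|\le nc$, so $f$ is bounded and $\E Z$ is finite \emph{once $f$ is measurable}. They do not give measurability itself: for example, $c\cdot\mathbf{1}_A$ with $A\subseteq\R$ non-measurable has bounded differences. So measurability of $f$ is an implicit hypothesis of the statement. It is trivial in the paper's application, where each ball position takes finitely many values.

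Relatedly, you can avoid the question of whether $\inf_x g$ and $\sup_x g$ are measurable functions of $X_1,\ldots,X_{i-1}$. Apply Hoeffding's lemma for each fixed $(x_1,\ldots,x_{i-1})$ to the law of $g(X_i)-\E g(X_i)$, and then integrate using independence.
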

Recall that in \Cref{lem:sum_LlogL_classic_balls}, $10n$ balls are thrown to $n$ bins, the bins loads are denoted by $L_1,L_2,\ldots,L_n$, and the claim is that with high probability, $\sum_{i=1}^n L_i\log L_i = O(n)$.
Denote $F=\sum_{i=1}^{n} L_i\ln L_i$, which is $O(1)$ factor from the desired sum. We view $F$ as a function of the locations of the $10n$ balls.
Consider the effect of changing the location of one ball, say, by changing its location from bin $i$ to bin $j$. Then $F$ changes by
\begin{align*}
    &\big|(L_i-1)\ln (L_i-1) + (L_j+1)\ln (L_j+1) - L_i\ln L_i - L_j\ln L_j\big|    \\
    &= |L_i\ln (\tfrac{L_i-1}{L_i}) + L_j\ln (\tfrac{L_j+1}{L_j}) - \ln (L_i-1) + \ln (L_j+1)| \\
    &\leq 2\ln (10n) + |L_i\ln (\tfrac{L_i-1}{L_i})| + |L_j\ln (\tfrac{L_j+1}{L_j})| \\
    &\leq 2\ln (10n) + 2,
\end{align*}
where the last inequality is since $\ln(1+x)\leq x$ for $x>-1$.
\begin{claim}\label{claim:expected_sum_LlogL}
    We have $\E F \leq n\cdot 10\ln 11<25 n$.
\end{claim}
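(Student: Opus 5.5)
By linearity of expectation and symmetry, $\E F = n\,\E[L_1\ln L_1]$, where $L_1\sim\mathrm{Bin}(10n,1/n)$ is the load of a fixed bin. The convention $0\ln 0=0$ is used throughout. So it suffices to show $\E[L_1\ln L_1]\le 10\ln 11$.

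The plan is to use the identity $\E[X f(X)] = Np\,\E[f(X'+1)]$ for $X\sim\mathrm{Bin}(N,p)$ and $X'\sim\mathrm{Bin}(N-1,p)$. This follows from writing $X=\sum_b I_b$ over the $N$ balls and conditioning on $I_b=1$. With $f=\ln$, it gives
\[
\E[L_1\ln L_1] = 10\,\E[\ln(L_1'+1)], \qquad L_1'\sim\mathrm{Bin}(10n-1,1/n).
\]
The convention $0\ln 0=0$ is consistent with this, since only terms with $L_1\ge 1$ contribute.

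Next, since $\ln$ is concave, Jensen's inequality gives
\[
\E[\ln(L_1'+1)]\le \ln(\E L_1'+1)=\ln\bigl(\tfrac{10n-1}{n}+1\bigr)\le \ln 11.
\]
Hence $\E[L_1\ln L_1]\le 10\ln 11$, and $\E F\le n\cdot 10\ln 11$. Since $10\ln 11\approx 23.98<25$, this gives $\E F<25n$.

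There is no real obstacle here. The only care needed is checking the size-biasing identity, which is a single line: $\E[I_b\ln X]=p\,\E[\ln(1+X_{-b})]$ by independence of ball $b$ from the remaining balls. An alternative route would be Jensen applied to the convex function $x\ln x$, but that gives the wrong direction, so the size-biasing step is what makes the bound go through.
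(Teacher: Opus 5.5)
Your proposal is correct and follows essentially the same route as the paper: symmetry to reduce to $\E[L_1\ln L_1]$, the size-biasing step rewriting this as $10\,\E\ln(X+1)$ with $X\sim\mathrm{Bin}(10n-1,\tfrac{1}{n})$, and then Jensen's inequality. The only cosmetic difference is that the paper obtains the size-biasing step algebraically via $k\binom{m}{k}=m\binom{m-1}{k-1}$, whereas you derive it by summing over ball indicators.
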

\begin{proof}[Proof of \Cref{lem:sum_LlogL_classic_balls}]
    By McDiarmid's inequality,
\begin{align*}
    \Pr(F > 100n)
    &\leq \Pr(F - \E(F) > 75n) \\
    &\leq 2\exp\Big(-\frac{2(75n)^2}{10n(2\ln (10n+2)+2)^2}\Big)\leq e^{-\sqrt{n}}.
\end{align*}
\end{proof}

\begin{proof}[Proof of \Cref{claim:expected_sum_LlogL}]
    By linearity of expectation, $\E F = n\cdot \E (L_1\ln L_1)$. 
    Observe that $L_1\sim Bin(10n,\tfrac{1}{n})$, i.e., it is a binomial random variable with $10n$ experiments and success probability $\tfrac{1}{n}$.
    Therefore,
    \begin{align*}
        \E L_1\ln L_1 &= \sum_{k=1}^{10n} {10n\choose k} \Big(1-\frac{1}{n}\Big)^{10n-k}\frac{1}{n^k}\cdot k\ln k \\
        &=\sum_{k=1}^{10n} \frac{10n}{k}{10n-1\choose k-1} \Big(1-\frac{1}{n}\Big)^{10n-k}\frac{1}{n^k}\cdot k\ln k && \text{since $k{m\choose k}=m{m-1\choose k-1}$} \\
        &= 10\sum_{k=1}^{10n} {10n-1\choose k-1} \Big(1-\frac{1}{n}\Big)^{10n-k}\frac{1}{n^{k-1}}\cdot \ln k \\
        &= 10\sum_{k=0}^{10n-1} {10n-1\choose k} \Big(1-\frac{1}{n}\Big)^{10n-1-k}\frac{1}{n^{k}}\cdot \ln (k+1) && \text{change of index} \\
        &= 10 \cdot \E \ln(X+1) && \text{where $X\sim Bin(10n-1,\tfrac{1}{n}$)} \\
        &\leq 10  \ln(\E X+1) && \text{by Jensen's inequality} \\
        &\leq 10 \ln 11<25.
    \end{align*}
\end{proof}

 \end{document}